\newcommand{\N}{\mathbb{N}}
\newcommand{\Z}{\mathbb{Z}}
\newtheorem{definition}{Definition}
\newtheorem{lemma}{Lemma}
\newtheorem{theorem}{Theorem}
\newtheorem{corollary}{Corollary}
\newtheorem{proposition}{Proposition}
\newdefinition{remark}{Remark}
\newdefinition{conjecture}{Conjecture}
\newproof{proof}{Proof}
\newcounter{theorem:D3}
\newcounter{corollary:decrease}
\newcounter{tmp}
\journal{TCS}
\begin{document}

\begin{frontmatter}

\title{ Kadanoff Sand Pile Model.\\
Avalanche  Structure and Wave Shape.\tnoteref{thx}}
\tnotetext[thx]{Partially supported by  IXXI (Complex System Institute, Lyon) and ANR projects Subtile and MODMAD.}
\author[ad1,ad2]{K\'evin Perrot}
\ead{kevin.perrot@ens-lyon.fr}
\author[ad1]{\'Eric R\'emila}
\ead{eric.remila@ens-lyon.fr}
\address[ad1]{Universit\'e de Lyon - Laboratoire de l'Informatique et du Parall\'elisme\\
(UMR 5668 - CNRS - ENS de Lyon - Universit\'e Lyon 1)\\
46 all\'e d'Italie 69364 Lyon Cedex 7, France}
\address[ad2]{Universit\'e Nice–Sophia Antipolis - Laboratoire I3S\\
(UMR 6070 CNRS)\\
2000 route des Lucioles, BP 121, F-06903 Sophia Antipolis Cedex, France}

\begin{abstract}
  Sand pile models are dynamical systems describing the evolution from $N$ stacked grains to a stable configuration. It uses local rules to depict grain moves and iterate it until reaching a fixed configuration from which no rule can be applied. Physicists L. Kadanoff {\em et al} inspire KSPM, extending the well known {\em Sand Pile Model} (SPM). In KSPM($D$), we start from a pile of $N$ stacked grains and apply the rule: $D\!-\!1$ grains can fall from column $i$ onto columns $i+1,i+2,\dots,i+D\!-\!1$ if the difference of height between columns $i$ and $i\!+\!1$ is greater or equal to $D$. Toward the study of fixed points (stable configurations on which no grain can move) obtained from $N$ stacked grains, we propose an iterative study of KSPM evolution consisting in the repeated addition of one grain on a heap of sand, triggering an avalanche at each iteration. We develop a formal background for the study of avalanches, resumed in a finite state word transducer, and explain how this transducer may be used to predict the form of fixed points. Further precise developments provide a plain formula for fixed points of KSPM(3), showing the emergence of a wavy shape.
\end{abstract}

\begin{keyword}
  Discrete Dynamical System, Self-Organized Criticality, Sand Pile Model, Fixed point, Transducer.
\end{keyword}

\end{frontmatter}


\section{Introduction}

This paper is about cubic sand grains moving around on nicely packed columns in one dimension (the physical sand pile is two dimensional, but the support of sand columns is one dimensional). We follow the arbitrary convention that when sand grains can only fall in one direction according to iteration rules, this direction is the {\em right}. So when there is no ambiguity, we use any variation of the words {\em right} and {\em left} to refer to the direction of grain falls and its opposite.

\subsection{The framework}

Sand pile models were introduced in \cite{bak88} as systems presenting a critical self-organized behavior, a property of dynamical systems having critical points as attractors. In the scope of sand piles, starting from an initial configuration of $N$ stacked grains the local evolution of particles is described by one or more iteration rules. Successive applications of such rules alter the configuration until it reaches an attractor, namely a stable state from which no rule can be applied. {\em Self-organized criticality} (SOC) property means those attractors are critical in the sense that a small perturbation ---adding some sand grains--- involves an arbitrary deep reorganization of the system. Sand pile models were well studied in recent years.

Starting from $N$ stacked grains on column $0$ and no grain elsewhere, the first one dimensional {\em Sand Pile Model} (SPM) applies the rule: if the difference of height between columns $i$ and $i+1$ is greater or equal to 2, then one grain falls from column $i$ to column $i+1$. In \cite{goles93} the authors show that the set of reachable configurations endowed with the successor relation given by the iteration rule is a lattice. They furthermore provide a simple characterization of reachable configurations and give a plain formula describing the unique fixed point according to the number $N$ of grains. Finally they prove the convergence time to be in $O(n^{\frac{3}{2}})$. For a survey on SPM, see \cite{phan04}. A slight variant of SPM has been studied in \cite{durandlose98}: instead of applying the rule sequentially ---once at each iteration step---, the rule is applied in parallel on all possible positions. This model is called {\em Parallel SPM} (PSPM) and is deterministic. It reaches the same fixed point as SPM, but in time $O(n)$ (very smart and technical proof in \cite{durandlose98}). \cite{formenti07} and \cite{phan08} explore {\em Symmetric SPM} (SSPM) where grains can fall either to the right or to the left according to symmetric rules. The authors both give a simple characterization of fixed points shapes (there is no lattice structure anymore) in this sequential model. The {\em Parallel SSPM} (PSSPM), which is non deterministic since choices may occur on the top column, has been considered in \cite{formenti11} where a constructive proof shows that the fixed points shapes are the same as in SSPM, and in \cite{PSSPM} where the authors compare not only shapes, but also positions.

Another interesting question about SPM is the {\em prediction problem} (namely, the problem of computing the fixed point), which has been proved in \cite{moore99} to be in \textbf{NC}$^3$ for the one dimensional case (it means that the time needed to compute the fixed point is in $O(\log^3 N)$ on a parallel computer with a polynomial number of processors, where $N$ is the number of grains), and \textbf{P}-complete when the dimension is $\geq 3$.

  In \cite{kadanoff89}, Kadanoff proposed a generalization of classical models closer to physical behavior of sand piles, in which more than one grain can fall from a column during one iteration. Informally, Kadanoff sand pile model with parameter $D$, KSPM($D$), is a discrete dynamical system whose initial configuration is composed of a finite number $N$ of stacked grains, moving in discrete space and time according to an iteration rule: if the height difference between column $i$ and $i+1$ is greater or equal to $D$, then $D-1$ grains can fall from column $i$ to the $D-1$ adjacent columns on the right (see figure \ref{fig:rule}). Note that KSPM(2)=SPM.

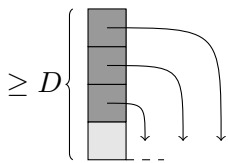
\begin{figure}[!h]
  \begin{center}
  \begin{tikzpicture}
  \foreach \y in {1,...,3}
    \filldraw[fill=black!40] (0,.5*\y) rectangle ++ (.5,.5);
  \filldraw[fill=black!10] (0,0) rectangle ++ (.5,.5);
  \draw[dashed] (.5,0) -- ++ (.5,0);
  \draw[decorate, decoration=brace] (-.2,0) -- node [left] {$\geq D$} ++ (0,2);
  \draw[->] (.25,.75) .. controls (.75,.75) .. (.75,.25);
  \draw[->] (.25,1.25) .. controls (1.25,1.25) .. (1.25,.25);
  \draw[->] (.25,1.75) .. controls (1.75,1.75) .. (1.75,.25);
\end{tikzpicture}
  \end{center}
  \caption{KSPM($D$) iteration rule.}
  \label{fig:rule}
\end{figure}

Figure \ref{fig:example} presents an example of evolution.

\begin{figure}[!h]
  \begin{center}
\begin{tikzpicture}[scale=.25]
  \foreach \x/\h in {0/24}
    \foreach \y in {1,...,\h}
      \filldraw[fill=black!10] (\x,\y) rectangle ++ (1,1);
  \draw[->] (1.5,2) -- node[above]{$0$} ++ (1,0);
  \foreach \x/\h in {0/22,1/1,2/1}
    \foreach \y in {1,...,\h}
      \filldraw[fill=black!10] (\x+3,\y) rectangle ++ (1,1);
  \draw[->] (6.5,2) -- node[above]{$0$} ++ (1,0);
  \foreach \x/\h in {0/20,1/2,2/2}
    \foreach \y in {1,...,\h}
      \filldraw[fill=black!10] (\x+8,\y) rectangle ++ (1,1);
  \draw[->] (11.5,2) -- node[above]{$0$} ++ (1,0);
  \foreach \x/\h in {0/18,1/3,2/3}
    \foreach \y in {1,...,\h}
      \filldraw[fill=black!10] (\x+13,\y) rectangle ++ (1,1);
  \draw[->] (16.5,2) -- node[above]{$0$} ++ (1,0);
  \foreach \x/\h in {0/16,1/4,2/4}
    \foreach \y in {1,...,\h}
      \filldraw[fill=black!10] (\x+18,\y) rectangle ++ (1,1);
  \draw[->] (21.5,2) -- node[above]{$0$} ++ (1,0);
  \foreach \x/\h in {0/14,1/5,2/5}
    \foreach \y in {1,...,\h}
      \filldraw[fill=black!10] (\x+23,\y) rectangle ++ (1,1);
  \draw[->] (26.5,2) -- node[above]{$0$} ++ (1,0);
  \foreach \x/\h in {0/12,1/6,2/6}
    \foreach \y in {1,...,\h}
      \filldraw[fill=black!10] (\x+28,\y) rectangle ++ (1,1);
  \draw[->] (31.5,2) -- node[above]{$0$} ++ (1,0);
  \foreach \x/\h in {0/10,1/7,2/7}
    \foreach \y in {1,...,\h}
      \filldraw[fill=black!10] (\x+33,\y) rectangle ++ (1,1);
  \draw[->] (36.5,2) -- node[above]{$0$} ++ (1,0);
  \foreach \x/\h in {0/8,1/8,2/8}
    \foreach \y in {1,...,\h}
      \filldraw[fill=black!10] (\x+38,\y) rectangle ++ (1,1);
  \draw[->] (41.5,2) -- node[above]{$2$} ++ (1,0);
  \foreach \x/\h in {0/8,1/8,2/6,3/1,4/1}
    \foreach \y in {1,...,\h}
      \filldraw[fill=black!10] (\x+43,\y) rectangle ++ (1,1);
  \draw[->] (48.5,2) -- node[above]{$2$} ++ (1,0);
  \foreach \x/\h in {0/8,1/8,2/4,3/2,4/2}
    \foreach \y in {1,...,\h}
      \filldraw[fill=black!10] (\x+50,\y) rectangle ++ (1,1);
  \draw[->] (55.5,2) -- node[above]{$1$} ++ (1,0);
  \foreach \x/\h in {0/8,1/6,2/5,3/3,4/2}
    \foreach \y in {1,...,\h}
      \filldraw[fill=black!10] (\x+57,\y) rectangle ++ (1,1);
\end{tikzpicture}
  \end{center}
  \caption{Example of an evolution from 24 stacked grains to the associated stable configuration. At each step, the arrow is labelled by the index of the column on which the rule has been applied.}
  \label{fig:example}
\end{figure}
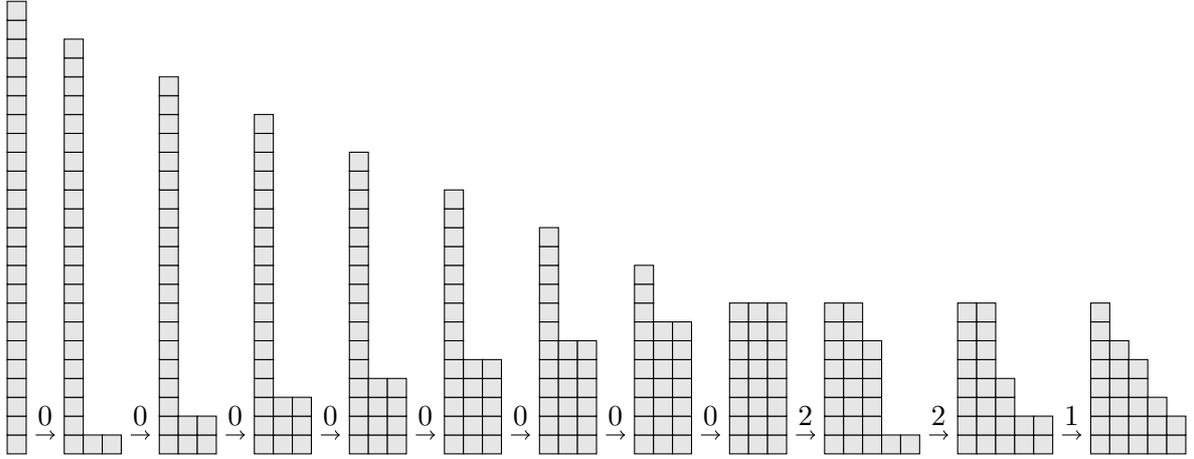

\subsection{Definitions and previous results}

More formally, sand pile models we consider are defined on the space of ultimately null decreasing integer sequences. Each integer represents a column of stacked sand grains and iteration rules describe how grains can move from columns. Let $h=(h_0,h_1,h_2,\dots)$ denote a {\em configuration} of the model, where each integer $h_i$ is the number of grains on column $i$. Configurations can also be given as sequences of {\em height differences} $\sigma=(\sigma_0,\sigma_1,\sigma_2,\dots)$, where for all $i \geq 0,~ \sigma_i=h_i-h_{i+1}$. We will use this latter representation throughout the paper, within the space of ultimately null non-negative integer sequences.

\begin{definition}
The   Kadanoff sand pile model with parameter $D$, KSPM($D$), is defined by:
  \begin{itemize}
    \item A set of \emph{configurations}, consisting in ultimately null non-negative integer sequences.
    \item A set of \emph{iteration rules}: we have a transition from a configuration $\sigma$ to a configuration $\sigma '$ on column $i$, and we note   $\sigma \overset{i}{\rightarrow} \sigma'$ when
    \begin{itemize}
\item $\sigma'_{i-1}=\sigma_{i-1} + D-1$ (for $i \neq 0$)
\item $\sigma'_i = \sigma_i - D$
\item $\sigma'_{i+D-1} = \sigma_{i+D-1} + 1$
\item $\sigma'_j = \sigma_j$ for $j \not\in  \{i-1, i, i+D-1 \}$. 
\end{itemize}
We also say that $i$ is {\em fired}.
  \end{itemize}
\end{definition}

Remark that according to the definition of the iteration rules, a condition for $\sigma'$ to be a configuration is that $\sigma_i \geq D$. We  note $\sigma \rightarrow \sigma'$ when there exists an integer $i$ such that $\sigma \overset{i}{\rightarrow} \sigma'$. The transitive closure of $\rightarrow$ is denoted by  $\overset{*}{\rightarrow}$, and we say that $\sigma'$ is {\em reachable} from $\sigma$ when $\sigma \overset{*}{\to} \sigma'$.

A basic property of the KSPM model is the \emph{diamond property}. If there exists two distinct integers $i$ and $j$ such that $\sigma \overset{i}{\rightarrow} \sigma'$ and $\sigma \overset{j}{\rightarrow} \sigma''$, then there exists a configuration $\sigma'''$ such that $\sigma' \overset{j}{\rightarrow} \sigma'''$  and $\sigma'' \overset{i}{\rightarrow} \sigma'''$. 

We say that a configuration $\sigma$ is \emph{stable}, or a \emph{fixed point} if no transition is possible from $\sigma$. 
As a  consequence of the diamond property and the termination of the evolution on finite configurations, one can easily check that, for each configuration $\sigma$, there exists a unique stable configuration, denoted by $\pi(\sigma)$, such that  $\sigma \overset{*}{\rightarrow} \pi(\sigma)$. Moreover,  for any configuration $\sigma '$ such that $\sigma \overset{*}{\rightarrow} \sigma '$, we have $\pi(\sigma') = \pi(\sigma)$ (see \cite{goles02} for details).

We are interested in the evolution from a finite number $N$ of grains to a stable configuration. Figure \ref{fig:lattice} depicts the set of reachable configurations for $D=3$ from the initial configuration with $24$ grains. Applying the rule once on the initial column leads for example to $(21,0,1,0,0,0,0,\dots)$ and we denote $0^\omega$ the infinite sequence of $0$, hence we write $(21,0,1,0^\omega)$ that configuration. For short, we state $\pi(N) = \pi((N,0^\omega))$.

\begin{figure}
  \begin{center}
    {\small
\begin{tikzpicture}[scale=1]
  \node (a) {$(24,0^\omega)$};
  \node (b) [below of=a,yshift=0cm] {$(21,0,1,0^\omega)$}
    edge [latex-] (a);
  \node (c) [below of=b,yshift=0cm] {$(18,0,2,0^\omega)$}
    edge [latex-] (b);
  \node (d) [below of=c,yshift=0cm] {$(15,0,3,0^\omega)$}
    edge [latex-] (c);
  \node (e) [below of=d,yshift=0cm] {$(12,0,4,0^\omega)$}
    edge [latex-] (d);
  \node (f) [below of=d,xshift=3cm,yshift=0cm] {$(15,2,0,0,1,0^\omega)$}
    edge [latex-] (d);
  \node (g) [below of=e,yshift=0cm] {$(9,0,5,0^\omega)$}
    edge [latex-] (e);
  \node (h) [below of=f,yshift=0cm] {$(12,2,1,0,1,0^\omega)$}
    edge [latex-] (e)
    edge [latex-] (f);
  \node (i) [below of=g,yshift=0cm] {$(6,0,6,0^\omega)$}
    edge [latex-] (g);
  \node (j) [below of=h,yshift=0cm] {$(9,2,2,0,1,0^\omega)$}
    edge [latex-] (g)
    edge [latex-] (h);
  \node (k) [below of=i,yshift=0cm] {$(3,0,7,0^\omega)$}
    edge [latex-] (i);
  \node (l) [below of=j,yshift=0cm] {$(6,2,3,0,1,0^\omega)$}
    edge [latex-] (i)
    edge [latex-] (j);
  \node (m) [below of=k,yshift=0cm] {$(0,0,8,0^\omega)$}
    edge [latex-] (k);
  \node (n) [below of=l,yshift=0cm] {$(3,2,4,0,1,0^\omega)$}
    edge [latex-] (k)
    edge [latex-] (l);
  \node (o) [below of=l,xshift=3cm,yshift=0cm] {$(6,4,0,0,2,0^\omega)$}
    edge [latex-] (l);
  \node (p) [below of=m,yshift=0cm] {$(0,2,5,0,1,0^\omega)$}
    edge [latex-] (m)
    edge [latex-] (n);
  \node (q) [below of=n,yshift=0cm] {$(3,4,1,0,2,0^\omega)$}
    edge [latex-] (n)
    edge [latex-] (o);
  \node (r) [below of=o,yshift=0cm] {$(8,1,0,1,2,0^\omega)$}
    edge [latex-] (o);
  \node (s) [below of=p,yshift=0cm,xshift=1.5cm] {$(0,4,2,0,2,0^\omega)$}
    edge [latex-] (p)
    edge [latex-] (q);
  \node (t) [below of=q,yshift=0cm,xshift=1.5cm] {$(5,1,1,1,2,0^\omega)$}
    edge [latex-] (q)
    edge [latex-] (r);
  \node (u) [below of=s,yshift=0cm,xshift=1.5cm] {$(2,1,2,1,2,0^\omega)$}
    edge [latex-] (s)
    edge [latex-] (t);
\end{tikzpicture}
}
  \end{center}
  \caption{The set of reachable configurations for $N=24$ and $D=3$. The initial configuration is on the top, and for short we denote $0^\omega$ the infinite sequences of 0.}
  \label{fig:lattice}
\end{figure}
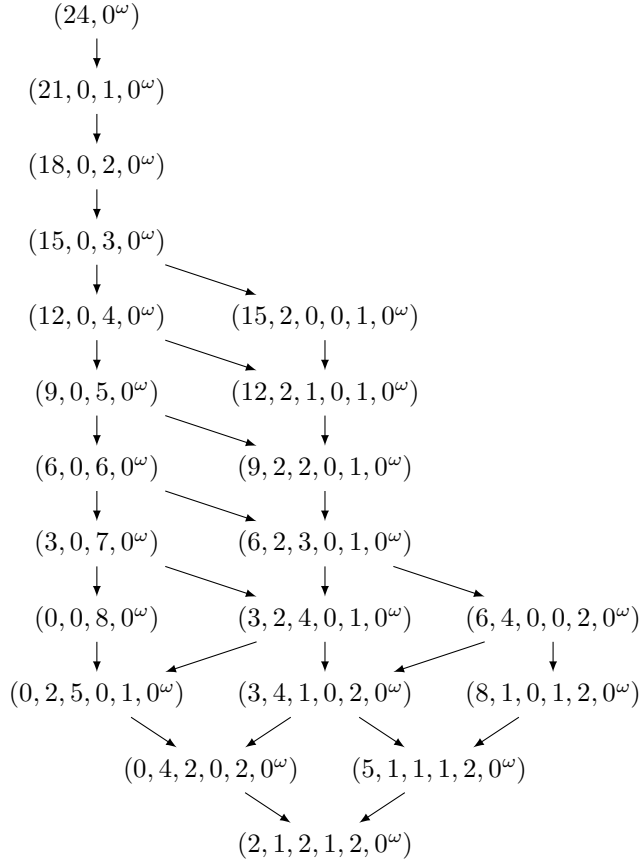

Sand pile models are specializations of {\em Chip Firing Games} (CFG). A CFG is played on a directed graph in which each vertex $v$ has a load $l(v)$ and a threshold $\theta(v)=deg^+(v)\footnote{$deg^+(v)$ denotes the out-degree of $v$.}$, and the iteration rule is: if $l(v)\geq \theta(v)$ then $v$ gives one unit to each of its neighbors (we say $v$ is fired). As a consequence, we inherit all  properties of CFGs. 

Kadanoff sand pile is referred to as a {\em linear chip firing game} in \cite{goles02}. The authors show that the set of reachable configurations endowed with the order induced by the successor relation has a lattice structure, in particular it has a unique {\em fixed point}. Since the model is non-deterministic, they also prove \emph{strong convergence} {\em i.e.}, the number of iterations to reach the fixed point is the same whatever the evolution strategy is. The morphism from KSPM(3) to CFG is depicted on figure \ref{fig:lcfg}.

When reasoning and writing formal developments about KSPM, it is much more convenient to think about its CFG representation because it is independent of the height of columns. Let us recall that throughout the paper, we consider sequences of height differences (except when explicitly specified) and the associated iteration rules where units of height difference move between columns.

  \begin{figure}[!h]
  \begin{center}
    \begin{tikzpicture}
  \node[circle, draw, fill=black!10] (n0) at (0,0) {$N$};
  \node[circle, draw, fill=black!10] (n-1) at (-1.5,0) {\scriptsize sink}
    edge [<-,out=10,in=170] (n0)
    edge [<-,out=-10,in=-170] (n0);
  \node[circle, draw, fill=black!10] (n1) at (1.5,0) {0}
    edge [->,out=170,in=10] (n0)
    edge [->,out=-170,in=-10] (n0);
  \node[circle, draw, fill=black!10] (n2) at (3,0) {0}
    edge [->,out=170,in=10] (n1)
    edge [->,out=-170,in=-10] (n1)
    edge [<-,out=-135,in=-45] (n0);
  \node[circle, draw, fill=black!10] (n3) at (4.5,0) {0}
    edge [->,out=170,in=10] (n2)
    edge [->,out=-170,in=-10] (n2)
    edge [<-,out=135,in=45] (n1);
  \node[circle, draw, fill=black!10] (n4) at (6,0) {0}
    edge [->,out=170,in=10] (n3)
    edge [->,out=-170,in=-10] (n3)
    edge [<-,out=-135,in=-45] (n2);
  \node[circle, draw, fill=black!10] (n5) at (7.5,0) {0}
    edge [->,out=170,in=10] (n4)
    edge [->,out=-170,in=-10] (n4)
    edge [<-,out=135,in=45] (n3);
  \node (n6) at (9,0) {}
    edge [->,dashed,out=170,in=10] (n5)
    edge [->,dashed,out=-170,in=-10] (n5)
    edge [<-,dashed,out=-135,in=-45] (n4);
  \node (n7) at (10.5,0) {}
    edge [<-,dashed,out=135,in=45] (n5);
\end{tikzpicture}
  \end{center}
  \caption{The initial configuration of KSPM(3) is presented as a CFG where each vertex corresponds to a column (except the sink, vertices from left to right corresponds to columns $0,1,2,3,\dots$) with a load equal to the difference of height between column $i$ and $i+1$. For example the vertex with load $N$ is the difference of height between column 0 ($N$ grains) and column 1 ($0$ grain).}
  \label{fig:lcfg}
\end{figure}
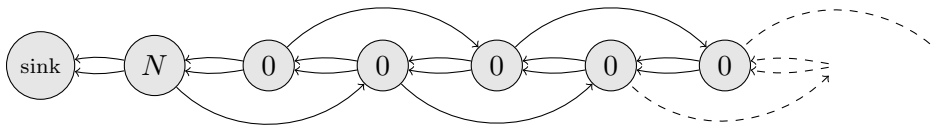

A recent study (\cite{goles10}) showed that in the two dimensional case the avalanche problem (given a configuration $\sigma$ and a column $i$ on which we add one grain, does it have an influence on index $j$?) on KSPM($D$) is \textbf{P}-complete, which points out an inherently sequential behavior.

\subsection{Our contribution}

The aim of this paper is to study the fixed points of the Kadanoff Sand Pile Model, and to describe them with a plain formula according to the parameter $D$ and the number of grains $N$. In section \ref{s:avalanches} we propose an inductive construction of the fixed point for a fixed parameter $D$ and $N$ grains according to the fixed point with $N-1$ grains: we add an $N^{th}$ grain on column $0$ of the fixed point with $N-1$ grains, which triggers an avalanche, ending when the fixed point with $N$ grains has been reached. An example of avalanche is given on figure \ref{fig:avalanche25}. We study the process of avalanches, and its emergent regularities. The main result of this section is a precise description of avalanches (Theorem \ref{theorem:peak}) when a property called {\em density} is fulfilled. We also prove for $D=3$ that on the right of a column $n$ in $O(\log N)$, the $N$ first avalanches have this property (Proposition \ref{lemma:meta2}), therefore we understand the behavior of the $N$ first avalanches for $D=3$, starting from column $n$.

From those regularities, we build words called {\em traces}, recording the way any grain has ever crossed a particular column $n$ during the $N$ first avalanches. Section \ref{s:transduction} outlines the construction of a finite state word transducer (Lemma \ref{lemma:transducer}), outputting the trace on column $n+(D-1)$ from the trace on column $n$. $i$ iterations of this transducer allows us to predict how any grain has crossed column $n+i(D-1)$, hence the behavior of avalanches from column $n+i(D-1)$ (see figure \ref{fig:intro}). From the trace on a column $n+i(D-1)$ for the $N$ first avalanches, we can straightforwardly compute the fixed point reached from $N$ stacked grains on the right of column $n+i(D-1)$. For $D=3$, we also prove that for a number of iterations $i$ in $O(\log N)$, the trace on column $n+i(D-1)$ is a prefix of $(ab)^\omega$ (Corollary \ref{corollary:decrease}). Consequently, regularities on traces induced by the transducer let us study the asymptotic form of fixed points on the right of column $n+i(D-1)$.

At the end of each section, we give a detailed study of the case $D=3$, ending in an asymptotic description of its fixed points according to the number $N$ of grains, pictured on figure \ref{fig:fp3}. Unfortunately, our complete study for the case $D=3$ does not trivially generalizes to any parameter $D$. Nevertheless, we hope that the present work already conveys interesting ideas on the handling of complex behaviors in discrete dynamical systems.

Some partial preliminary versions of the present work previously appeared in \cite{LATA} and \cite{MFCS}. 
 
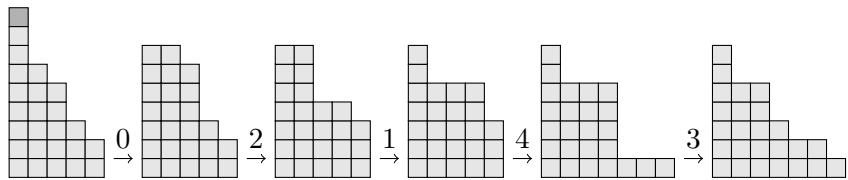
\begin{figure}[!h]
  \begin{center}
    \begin{tikzpicture}[scale=.25]
  \foreach \x/\h in {0/8,1/6,2/5,3/3,4/2}
    \foreach \y in {1,...,\h}
      \filldraw[fill=black!10] (\x,\y) rectangle ++ (1,1);
  \filldraw[fill=black!30] (0,9) rectangle ++ (1,1);
  \draw[->] (5.5,2) -- node[above]{$0$} ++ (1,0);
  \foreach \x/\h in {0/7,1/7,2/6,3/3,4/2}
    \foreach \y in {1,...,\h}
      \filldraw[fill=black!10] (\x+7,\y) rectangle ++ (1,1);
  \draw[->] (12.5,2) -- node[above]{$2$} ++ (1,0);
  \foreach \x/\h in {0/7,1/7,2/4,3/4,4/3}
    \foreach \y in {1,...,\h}
      \filldraw[fill=black!10] (\x+14,\y) rectangle ++ (1,1);
  \draw[->] (19.5,2) -- node[above]{$1$} ++ (1,0);
  \foreach \x/\h in {0/7,1/5,2/5,3/5,4/3}
    \foreach \y in {1,...,\h}
      \filldraw[fill=black!10] (\x+21,\y) rectangle ++ (1,1);
  \draw[->] (26.5,2) -- node[above]{$4$} ++ (1,0);
  \foreach \x/\h in {0/7,1/5,2/5,3/5,4/1,5/1,6/1}
    \foreach \y in {1,...,\h}
      \filldraw[fill=black!10] (\x+28,\y) rectangle ++ (1,1);
  \draw[->] (35.5,2) -- node[above]{$3$} ++ (1,0);
  \foreach \x/\h in {0/7,1/5,2/5,3/3,4/2,5/2,6/1}
    \foreach \y in {1,...,\h}
      \filldraw[fill=black!10] (\x+37,\y) rectangle ++ (1,1);
\end{tikzpicture}
  \end{center}
  \caption{An example of avalanche: starting from $\pi(24)$, we add one grain on column 0 (darkened on the leftmost configuration) and apply the iteration rule until reaching $\pi(25)$. At each step, the arrow is labelled by the index of the fired column.}
  \label{fig:avalanche25}
\end{figure}

\begin{figure}[!h]
  \centering
  \subfloat[Avalanches become regular very quickly. The regularity allows us to capture the way grains behave in a word $u$. $u$ describes completely how grains cross a particular column (the long vertical dashed line on the figure). $u$ barely consists in the concatenation of the relative positions of unstable columns, called {\em peaks}, for the $N$ first avalanches, within the $D-1$ columns preceding the vertical dashed line. Avalanches regularity also allows us, knowing the word $u$ for a column $i$ (the dashed vertical line), to compute the word $v$ describing how grains cross the column $i+D-1$ (where $D$ is the parameter of the model). The computation of $v$ is made from $u$ via a finite state word transducer \textswab T, which $n^{th}$ iteration computes the word describing how grains cross column $i+n(D-1)$.]{\label{fig:intro}\begin{tikzpicture}[scale=.75,baseline=0]
  \foreach \x in {0,...,4}{
    \fill[fill=black!10] (6-\x,0) -- ++ (-.5,0) -- ++ (0,.75+1.5*\x) -- ++ (.5,-.75) -- cycle;
    \fill[fill=black!30] (5.5-\x,0) -- ++ (-.5,0) -- ++ (0,1.5+1.5*\x) -- ++ (.5,-.75) -- cycle;
  }
  \foreach \x in {1,...,9}
    \draw[->] (1.2,8.5) .. controls (1+.5*\x,8.5) .. (1+.5*\x,7.75-.75*\x) node [fill=white,,xshift=1,yshift=12,inner sep=1pt] {\tiny \textswab T$^\x$};
  \draw[line width=1pt] (0,9) -- (0,0) -- ++ (6,0) -- cycle;
  \draw[dashed] (1,-.7) node [below] {\scriptsize ?} -- (1,11);
  \draw[dashed] (6,-.7) node [below] {\scriptsize $\Theta(\sqrt{N})$} -- (6,2);
  \draw[line width=1pt] (0,9.5) rectangle node[below,yshift=1] {\small $\downarrow$} ++ (.3,.3);
  \node at (-.3,10.6) {\small irregular};\node at (-.3,10.3) {\small avalanches};
  \node at (2.3,10.6) {\small regular};\node at (2.3,10.3) {\small avalanches};
  \node[fill=white,draw=black!30] at (1.5,2) {\scriptsize $N\!-\!1$ grains};
  \node[fill=white,inner sep=1pt] at (1,8.5) {\scriptsize $u$}; 
\end{tikzpicture}}
  \hspace{1cm}
  \subfloat[From the knowledge of the word $u$ describing how grains cross a column $i$, it is very easy to compute the configuration for every column $j \geq i$ (the concatenation of every words we obtain on a particular column describes how any single grain has ever cross that frontier). A precise study of the transducer for $D=3$ outlines, starting from any word, the very quick emergence of periodic words describing how grains cross columns as we iterate the transducer \textswab T. $O(\log N)$ iterations of \textswab T outputs periodic words of the form $ababab\dots$. Finally, The regularity of the words involves a regularity on the fixed points, which have asymptotically a completely wavy shape $212121\dots$.]{\label{fig:fp3}\begin{tikzpicture}[scale=.75,baseline=0]
  \draw[line width=1pt] (0,9) -- (0,0) -- ++ (6,0) arc (270:180:.5 and .75) arc (270:180:.5 and .75) arc (270:180:.5 and .75) arc (270:180:.5 and .75) arc (270:180:.5 and .75) arc (270:180:.5 and .75) arc (270:180:.5 and .75) arc (270:180:.5 and .75);
  \draw[densely dashed, line width=1pt] (0,9) -- ++ (2,-3);
  \draw[dashed] (1,-.7) node [below] {\scriptsize $O(\log N)$} -- (1,11);
  \draw[dashed] (2,-.2) node [below] {\scriptsize $O(\log N)$} -- (2,9.2);
  \draw[dashed] (6,-.7) node [below] {\scriptsize $\Theta(\sqrt{N})$} -- (6,2);
  \draw[->] (1.2,8.5) .. controls (1+.5*2,8.5) .. (1+.5*2,7.75-.75*2) node [fill=white,,xshift=4,yshift=12,inner sep=1pt] {\tiny \textswab T$^{\log N}$};
  \draw[line width=1pt] (0,9.5) rectangle node[below,yshift=1] {\small $\downarrow$} ++ (.3,.3);
  \node at (-.3,10.6) {\small irregular};\node at (-.3,10.3) {\small avalanches};
  \node at (2.3,10.6) {\small regular};\node at (2.3,10.3) {\small avalanches};
  \node[fill=white,draw=black!30] at (1.5,2) {\scriptsize $N\!-\!1$ grains};
  \node[fill=white,inner sep=1pt] at (1,8.5) {\scriptsize $u$}; 
\end{tikzpicture}}
  \caption{Presentation of the method (\ref{fig:intro}) and its application to KSPM(3) (\ref{fig:fp3}).}
\end{figure}

Describing and proving regularity properties, for models issued from basic dynamics is a present challenge for physicists, mathematicians, and computer scientists. There exists a lot of conjectures, issued from simulations, on  discrete dynamical systems with simple local rules (sandpile model \cite{dartois} or chip firing games, but also  rotor router  \cite{levine},  the famous Langton's ant \cite{gajardo}\cite{propp}...) but very few results have actually been proved.


\section{Avalanches}\label{s:avalanches}

In order to study the fixed point for a parameter $D$ and $N$ grains, we will study the sequence of fixed points for a parameter $D$ and no grain, then $D$ and 1 grain, $D$ and 2 grains, $D$ and 3 grains, etc... Throughout the paper, the parameter $D$ is supposed to be fixed and can take any value greater or equal to 2, except when a particular value is specified. Computing the fixed point with $k$ grains, given the fixed point with $k-1$ grains, is very simple: we add one grain on column 0 and perform all the possible transitions. We call this process an {\em avalanche}.

We first explain why adding a grain on column 0 of the fixed point with $k-1$ leads to the fixed point with $k$ grains in subsection \ref{ss:inductive}, and then formally define an avalanche. If an avalanche verifies a certain property (if it is {\em dense}), then we show that its evolution is somehow “linear” (subsection \ref{ss:peaks}), and we give a precise description of its mechanism by the mean of particularly unstable columns named {\em peaks} (subsection \ref{ss:local}). The linear process description holds only when avalanches are dense, which happens intuitively and experimentally starting from a very small index, but we only managed to prove for $D=3$ that avalanches are asymptotically completely described as linear processes (subsection \ref{ss:avD=3}). \textit{Asymptotically completely} means that if we consider the avalanche occurring from the fixed point with $N-1$ grains to the fixed point with $N$ grains for a parameter $D$, denoting $size(D,N-1)$ the size of $\pi(N-1)$ (number of non empty columns) and $\mathcal L(D,N)$ the column index starting from which the avalanche is dense (the property is then true from column $\mathcal L(D,N)$ to column $size(D,N-1)-1$), then the ratio $\frac{\mathcal L(D,N)}{size(D,N-1)}$ tends to 0 as $N$ tends to $+\infty$.

%
%

\subsection{Inductive construction}\label{ss:inductive}

In this paper, we are interested in the iterative process defined below. Starting with no grain, we successively add a single  grain on  column 0, and make all the possible firings until  a fixed point is reached. 


Let $\sigma$ be a configuration, $\sigma^{\downarrow 0}$ denotes the configuration obtained by adding one grain on column $0$. In other words, if $\sigma=(\sigma_0,\sigma_1,\dots)$, then $\sigma^{\downarrow 0}=(\sigma_0 +1 ,\sigma_1,\dots)$. 
Let $k >0$. Remark  that $\pi(k-1)^{\downarrow 0}$ is a reachable configuration from $(k,0^\omega)$, because we can use any sequence of firing from $ (k-1,0^\omega)$ to $\pi(k-1)$ to get an evolution from $(k,0^\omega)$ to $\pi(k-1)^{\downarrow 0}$. Thus,  with the uniqueness of the fixed point reachable from $(k,0^\omega)$,   we have   the recurrence formula (see figure \ref{fig:inductive}):
$$\pi(\pi(k-1)^{\downarrow 0})  =   \pi(k)$$

which, with the initial condition: $\pi(0) = 0^\omega$ allows an inductive computation of $\pi(k)$. 

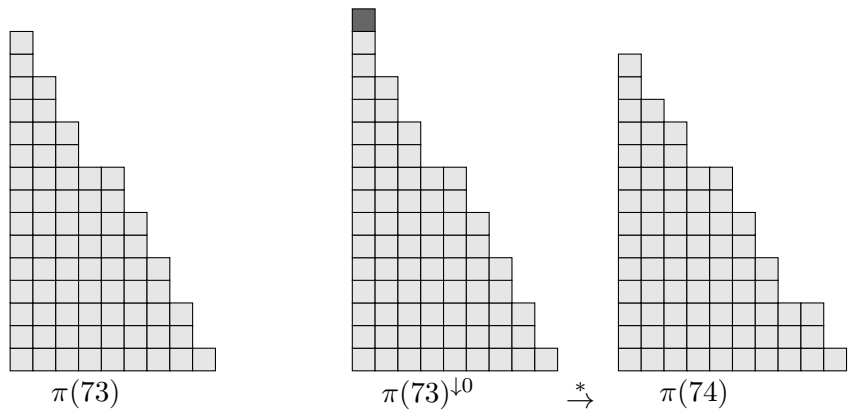
\begin{figure}[!h]
\begin{center}
  \begin{tikzpicture}
  \foreach \x/\h in {0/15,1/13,2/11,3/9,4/9,5/7,6/5,7/3,8/1}
    \foreach \y in {1,...,\h}
      \filldraw[fill=black!10] (\x*.3,\y*.3) rectangle ++ (.3,.3);
  \foreach \x/\h in {0/15,1/13,2/11,3/9,4/9,5/7,6/5,7/3,8/1}
    \foreach \y in {1,...,\h}
      \filldraw[fill=black!10] (4.5+\x*.3,\y*.3) rectangle ++ (.3,.3);
  \filldraw[fill=black!60] (4.5+0*.3,16*.3) rectangle ++ (.3,.3);
  \foreach \x/\h in {0/14,1/12,2/11,3/9,4/9,5/7,6/5,7/3,8/3,9/1}
    \foreach \y in {1,...,\h}
      \filldraw[fill=black!10] (8+\x*.3,\y*.3) rectangle ++ (.3,.3);
  \node at (1,0) {$\pi(73)$};
  \node at (5.5,0) {$\pi(73)^{\downarrow 0}$};
  \node at (7.5,0) {$\overset{*}{\to}$};
  \node at (9,0) {$\pi(74)$};
\end{tikzpicture}
\end{center}
\caption{$D=3$, adding a grain on column 0 of $\pi(73)$ gives $\pi(73)^{\downarrow 0}$ and performing all the possible transitions until reaching a fixed point leads to $\pi(74)$.}
\label{fig:inductive}
\end{figure}

%
%

\subsection{Avalanches and peaks}\label{ss:peaks}

We are now interested in the description of the evolution from $\pi(k-1)^{\downarrow 0}$ to $\pi(k)$. However, from the non-determinacy of our model, this evolution is not unique. To overcome this issue we give below a formal definition of an evolution (a {\em strategy}) and distinguish a particular one from $\pi(k-1)^{\downarrow 0}$ to $\pi(k)$, which we think is the simplest, and define it as the {\em $k^{th}$ avalanche}.

A {\em strategy} is a sequence $s=(s_1,\dots,s_T)$. We say that $\sigma'$ is {\em reached} from $\sigma$ via $s$ when $\sigma \overset{s_1}{\rightarrow} \sigma'' \overset{s_2}{\rightarrow} \dots \overset{s_T}{\rightarrow} \sigma'$ and we note $\sigma \overset{s}{\rightarrow} \sigma'$. 
We also say,  for each integer $t$ such that $0 < t \leq T$, that the column $s_t$ \emph{is fired} at \emph{time} $t$ in $s$ (informally,  the index of the sequence is interpreted as time). 

 For any strategy $s$ and any nonnegative integer $i$, we state $|s|_i=\#\{ t | s_t = i \}$.  Let   $s^0$, $s^1$ be  two strategies such that $\sigma \overset{s^0}{\rightarrow} \sigma^0$ and $\sigma \overset{s^1}{\rightarrow} \sigma^1$.  We have the equivalence:  $[~\forall~ i, |s^0|_i = |s^1|_i~] \Leftrightarrow \sigma^0 = \sigma^1$.
 A strategy $s$ such that $\sigma \overset{s}{\rightarrow} \sigma'$ is called {\em leftmost} if it is the minimal strategy from $\sigma$ to $\sigma'$ according to lexicographic order. A leftmost strategy is such that at each iteration, the leftmost possible transition is performed. 

The {\em $k^{th}$ avalanche} $s^k$ is  the leftmost strategy from $\pi(k-1)^{\downarrow 0}$ to $\pi(k)$. Toward the study of fixed point configurations, we first consider the process of avalanches. Informally, we want to describe what happens when a new grain is added in a previously stabilized sand pile. For $D = 2$, {\em i.e.},  the classical SPM, this description is easy: the added grain moves rightwards until it reaches a stable position on two consecutive columns of same height. But, for $D > 2$,  the situation is not so simple.

The first Lemma is rudimentary but allows to simplify some notations for the rest of the paper.

\begin{proposition}\label{lemma:01}
  For each strategy $s$  such that  $\pi(N)^{\downarrow 0}  \overset{s} {\rightarrow}\pi(N+1)$ and any column $i \in \N$, we have  $|s|_i \in \{0,1\}$. 
\end{proposition}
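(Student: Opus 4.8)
The plan is to work in the height representation $h$ (recall $\sigma_i=h_i-h_{i+1}$) and to exploit the strong convergence recalled above: by the equivalence $[\,\forall i,\ |s^0|_i=|s^1|_i\,]\Leftrightarrow \sigma^0=\sigma^1$, the firing counts $|s|_i$ are the same for every strategy $s$ realizing $\pi(N)^{\downarrow 0}\overset{*}{\to}\pi(N+1)$. Hence it suffices to bound the single well-defined vector $x_i:=|s|_i$, and I will show $x_i\le 1$ for all $i$.

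First I would record the effect of one firing on the prefix grain-counts $P_i:=\sum_{k=0}^{i}h_k$. In the height picture, firing column $j$ removes $D-1$ grains from column $j$ and deposits one grain on each of $j+1,\dots,j+D-1$; hence $P_i$ is unchanged unless $j\le i\le j+D-2$, in which case it decreases by exactly $D-1-(i-j)\ge 1$. In particular $P_i$ is non-increasing along any firing sequence. Summing the contributions of all firings and using $P_i(\pi(N)^{\downarrow 0})=P_i(\pi(N))+1$, I obtain, for every $i$, the identity
\[
\sum_{m=0}^{\min(D-2,\,i)}\big(D-1-m\big)\,x_{i-m}\;=\;\delta_i,\qquad \delta_i:=P_i(\pi(N))+1-P_i(\pi(N+1)).
\]
All coefficients lie in $\{1,\dots,D-1\}$ and all $x_j\ge 0$, so $\delta_i\ge 0$ (this just re-expresses monotonicity) and, isolating the $m=0$ term, $(D-1)\,x_i\le \delta_i$.

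It then remains to bound $\delta_i$. If I can show $\delta_i\le 2D-3$ for every $i$, then $x_i\le \frac{2D-3}{D-1}<2$, whence $x_i\in\{0,1\}$ and the proposition follows. This last bound is exactly the assertion that passing from the stable configuration $\pi(N)$ to the stable configuration $\pi(N+1)$ lowers each prefix grain-count $P_i$ by at most $2(D-2)$ --- intuitively, a single added grain can drain at most $2(D-2)$ grains out of any prefix $\{0,\dots,i\}$ before stabilization. I expect this comparison of two consecutive fixed points to be the main obstacle: the per-column balance $D\,x_i=(\sigma_i-\sigma'_i)+(D-1)x_{i+1}+x_{i-(D-1)}$ (where $\sigma$, $\sigma'$ denote the entries of $\pi(N)^{\downarrow 0}$ and $\pi(N+1)$, and the last term is absent for $i<D-1$) only yields loose bounds when used blindly, because the dependence of $x_i$ on $x_{i+1}$ propagates information from the right. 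The delicate point is to turn the stability constraints $0\le\sigma'_k\le D-1$ and $\sigma_k\le D-1$ (for $k\ge 1$), together with the finiteness of the support of $x$, into the sharp value $2D-3$; I would do this by analysing the balance equations from the rightmost fired column leftwards, or alternatively by invoking the lattice structure of reachable configurations established in \cite{goles02}.
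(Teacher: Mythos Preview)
Your approach has a genuine gap, and in fact the key bound you are aiming for is false. The inequality $\delta_i\le 2D-3$ does not hold in general: take any avalanche in which the $D-1$ consecutive columns $i-D+2,\dots,i$ are all fired once (this happens routinely, for instance in every dense avalanche past the global density column). Your own identity then gives
\[
\delta_i=\sum_{m=0}^{D-2}(D-1-m)\cdot 1=\frac{D(D-1)}{2},
\]
and for every $D\ge 4$ one has $\frac{D(D-1)}{2}>2D-3$. Hence the step ``$(D-1)x_i\le \delta_i\le 2D-3\Rightarrow x_i<2$'' cannot be carried out; the prefix-count argument only yields $x_i\le \frac{D}{2}$, which is useless for $D\ge 4$. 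The difficulty you flagged as the ``main obstacle'' is therefore not just unproved but unprovable in the form you stated it.

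The paper's proof avoids all of this by a direct time-induction that never compares the two fixed points globally. Assume every column has fired at most once up to time $t$ and let $i$ be a column already fired. Then the only firings that can have increased $\sigma_i$ are at $i+1$ (by $D-1$) and at $i-D+1$ (by $1$), each of which has occurred at most once by the induction hypothesis; since $i$ itself removed $D$, one gets $\sigma_i^{(t)}\le \pi(N)^{\downarrow 0}_i - D + (D-1) + 1 = \pi(N)^{\downarrow 0}_i$, which equals $\pi(N)_i<D$ for $i\ge 1$ (and the case $i=0$ is even easier, lacking the $+1$ term). So $i$ cannot fire at time $t+1$. This local argument uses only the stability of $\pi(N)$ and is a couple of lines; there is no need to invoke strong convergence, prefix sums, or any comparison of $\pi(N)$ with $\pi(N+1)$.
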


\begin{proof}
  Let $s=(s_1,\dots,s_T)$ be a strategy such that  $\pi(N)^{\downarrow 0}  \overset{s} {\rightarrow}\pi(N+1)$. We have to  prove  that, for  
  $1 \leq l < m \leq T$, we have $s_l \not = s_m$ (obviously, $|s|_i \geq 0$ for all $i$). To do it,  we prove by induction on $t \leq T$ that for $1 \leq l < m \leq t$, we have $s_l \not = s_m$.
  
  For initialization this is obviously true for $t = 1$. 
   Now assume that the condition is satisfied for an integer $t$ such that $t < T$, and let $i$ be a column such that there exists an integer $l \leq t$ such that $i = s_l$. Let $\sigma$  be the  configuration such that $\pi(N)^{\downarrow 0} \overset{s_1}{\rightarrow} \dots \overset{s_t}{\rightarrow} \sigma$.
   
  Notice that the transitions which can possibly change the value of the current configuration at $i$ could be: $i$ (which decreases the value by $D$ units), $i+1$ (which increases the value by $D -1$ units) or  $i - D+1$ (which increases the value by $1$ unit).
  
  Thus we have $\sigma_i \leq  \pi(N)^{\downarrow 0}_i - D + D-1 +1$ since by definition,  between   $\pi(N)$ and $\sigma$, exactly one transition has occurred in   $i$, at most one transition has occurred in $i+1$, and at most one transition has occurred in $i - D+1$. For $i \geq 1$, we get $\sigma_i \leq  \pi(N)_i $. On the other hand, since $\pi(N)$ is a fixed point, we have:   $\pi(N)_i < D $,  which guarantees that $s_{t+1} \neq i$. For $i = 0 $, there is no possible transition in $i - D+1$, thus we get $\sigma_0 \leq  \pi(N)^{\downarrow 0}_0 - D + D-1 $, which is  $\sigma_0 \leq  \pi(N)_0 +1 - D + D-1$. Thus    $\sigma_0 \leq  \pi(N)_0 < D $ which also gives: $s_{t+1}  \neq 0$.
  
  This ensures that the result is true for $t+1$,  and,  by induction,  for $T$.\qed
\end{proof}

When talking about an avalanche $s$, Proposition \ref{lemma:01} allows us to write $i \in s$ and $i \notin s$ instead of $|s|_i=1$ and $|s|_i=0$ since no other value is possible. We denote by  $s_{[i, j]}$  the  subsequence of $s$ from $i$ to $j$ included.

We first explain the ``pseudo locality'' of avalanches: at a time $t+1$, a fired column can't be at distance greater ---neither on the left nor on the right--- than $D-1$ of the greatest fired column of $s_{[1,t]}$. Imagine, during an avalanche, that you follow the greatest fired column with a frame of size $2(D-1)-1$ centered on it, Lemma \ref{lemma:localdensity} tells that you won't miss any firing.

\begin{lemma}\label{lemma:localdensity}
Let $s^k=(s^k_1,\dots,s^k_T)$ be the $k^{th}$ avalanche. Let $r_t = \max s^k_{[1,t]}$.
\begin{itemize}
\item Assume that $s^k_{t+1} < r_t$.  Then $s^k_{t+1}$  is the largest column number satisfying this inequality, which has not  yet been fired at time $t$. In other words:
$$ s^k_{t+1} = \max \{i ~|~  i  < r_t \text{ and } i \notin s^k_{[1,t]} \}$$
 Moreover,  we have $r_t - s^k_{t+1} < D-1$.
\item Assume that $s^k_{t+1} > r_t$. Then we have  $s^k_{t+1} \leq r_t + D-1$.
\end{itemize}

\end{lemma}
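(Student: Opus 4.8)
The plan is to prove both items by induction on $t$, using two elementary facts throughout: by Proposition~\ref{lemma:01} no column is fired twice during $s^k$, and firing a column $i$ changes the configuration only at $i-1$, $i$ and $i+D-1$ (adding $D-1$, subtracting $D$, adding $1$ respectively). Because $s^k$ is leftmost, $s^k_{t+1}$ is always the \emph{smallest} unstable column at time $t$; the whole point of the first item is that, whenever this smallest unstable column happens to lie below $r_t$, it is in fact squeezed into a narrow window just to the left of $r_t$, and there it coincides with the \emph{largest} not-yet-fired column below $r_t$.

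The second item requires no induction. If $s^k_{t+1}=c>r_t$, then $c$ is unstable at time $t$, hence carries at least $D$ units; since $\pi(k-1)^{\downarrow 0}$ has every column below $D$ except possibly column $0$, and here $c\ge 1$, column $c$ must have gained units from some firing. The only firings that feed a column $c>r_t$ are those of $c+1$ (contributing $D-1$) and of $c-D+1$ (contributing $1$); as $c+1>r_t$ has never been fired, the units must come from a firing of $c-D+1\le r_t$, whence $c\le r_t+D-1$.

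For the first item I would maintain, along the induction, a description of the frontier. Each time a new maximum is set by a \emph{forward} firing of $q>r_t$, the columns strictly between the previous maximum $P$ and the new maximum $q$ form a block $\{P+1,\dots,q-1\}$ of not-yet-fired columns, whose width $q-1-P$ is at most $D-2$ (this is exactly the bound $q\le P+D-1$ just proved), and whose bottom column $P+1$ sits immediately to the right of the already-fired column $P$. A \emph{backward} firing then consumes this block from right to left: firing the current top $c$ of the block can, by locality, destabilize only $c-1$ below the maximum, so the next smallest unstable column below $r_t$ is again the top of the shortened block, which is precisely $\max\{i<r_t : i\notin s^k_{[1,t]}\}$; this yields the characterization of item~1. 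The wave cannot descend past the bottom of the block: when $c=P+1$ is fired, its left neighbour $P$ is already fired and, by Proposition~\ref{lemma:01}, cannot be refired, so it stays stable and the backward wave halts there. Since every backward firing therefore occurs within $\{P+1,\dots,q-1\}$ while $r_t=q$, we obtain $r_t-s^k_{t+1}\le q-(P+1)=(q-P)-1\le D-2<D-1$, the strict inequality coming precisely from $P$ acting as a fired wall.

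The main obstacle is to show that this clean picture --- one block hanging to the right of the previous maximum, together with at most one pending unit above the maximum --- is genuinely preserved by every step. The delicate point is that a firing also drops a single unit at distance $D-1$ on its right, and one must rule out that these far drops accumulate into several competing unstable columns above $r_t$, or that an unfired column left behind by a backward wave that halts prematurely later recreates an unstable region breaking the block structure. Controlling this forces one to track the actual height-difference \emph{values} of the frontier columns, not merely which columns are fired, and to propagate a quantitative invariant on them through both the forward and the backward steps. This value bookkeeping is where I expect the real difficulty to lie; everything else is the combinatorial verification sketched above.
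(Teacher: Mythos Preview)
Your treatment of the second item is correct and matches the paper's one-line observation.

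For the first item, your diagnosis of the difficulty is accurate, but the proposal stops exactly where the content lies. The block picture is not self-evidently preserved: a backward wave \emph{can} stall strictly above $P+1$ (whenever $\pi(k-1)_{c-1}=0$ and $c-D$ has not yet been fired), leaving an unfired column stranded below the next active block. You then have to argue that such stranded columns never again become the leftmost unstable one, and that later forward jumps do not create disconnected blocks. This is precisely the ``value bookkeeping'' you defer, and without it the induction does not close.

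The paper sidesteps this bookkeeping entirely with a causality argument. Call $j$ a predecessor of $i$ when $j\in\{i+1,\,i-D+1\}$ and $j$ is fired before $i$ in $s^k$; let $<_{caus.}$ be the transitive closure. The crux is that $r_t$ must be a causal ancestor of $s^k_{t+1}$: if not, the firing of $r_t$ and of all its causal descendants can be postponed until after $s^k_{t+1}$, producing a valid strategy that is lexicographically smaller (since $s^k_{t+1}<r_t$) and contradicting leftmostness. Given $r_t<_{caus.}s^k_{t+1}$, take any predecessor chain $r_t=i_0,i_1,\dots,i_p=s^k_{t+1}$; a short induction gives $i_j=r_t-j$ (the alternative $i_{j+1}=i_j+D-1$ would exceed $r_t$ or repeat an earlier term). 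Hence every column strictly between $s^k_{t+1}$ and $r_t$ is already fired, which is the desired characterisation, and $p\geq D-1$ is impossible since it would make $r_t-(D-1)$ fired \emph{after} $r_t$, leaving the column $r_t\geq 1$ with no predecessor at all. No configuration values are ever inspected, so the obstacle you isolated simply does not arise.
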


\begin{proof}
We order fired columns by causality. Precisely, a column $i$ has  two potential predecessors, which are
$i +1$ and   $ i - D+1$. State $i = s^k_u$. These columns are predecessors of $i$ if they are elements of $s^k_{[1, u ]}$, i.e if they are fired before $i$. Using the transitive closure,  we define a partial order relation (denoted $<_{caus.}$) on fired columns for $s^k$.

Now,  consider the set $A_{t+1 }$ of ancestors of $s^k_{t+1 }$ ({\em i.e.}  the set of columns $i$ such that $i <_{caus.} s^k_{t+1}$)  and the set $S_t$ of columns which have $s^k_t$ as an ancestor ({\em i.e.} columns $i$ such that $s^k_t  <_{caus.} i$).
We necessarily have  $r_t \in A_{t+1 }$.   Otherwise, we have  $A_{t+1 } \cap S_t =  \emptyset$, and this  allows  another strategy $s'$, constructed from $s^k$ by postponing the transitions at  $r_t$ and elements of $S_t$ after the transition on  $s^k_{t+1}$. This contradicts the fact that $s^k$ is leftmost.

Let  $(i_0, i_1, \dots, i_p)$ be a finite sequence such that $i_0 = r_t$, $i_p = s^k_{t+1}$ and,  for each $j$ with $0 \leq j < p$, $i_j$ is a predecessor of $i_{j+1}$. Such a sequence exists since $r_t \in A_{t+1 }$. Let us prove by induction that $i_j = r_t -j$: 
this is true for $j = 0$. Assume it is true until the integer $j < p$. We have either $i_{j+1} = i_j -1$ or  $i_{j+1} = i_j +D -1$.
But from the induction hypothesis, $i_j +D -1$ is an ancestor of $i_{j+1}$, thus $i_{j+1} = i_j -1$.
This gives that   $s^k_{t+1}$  is the largest column number $i$ such that $i < r_t$ and $i \notin s^k_{[1,t]}$. 

Now if we assume, by contradiction, that  $p \geq  D-1$, then  $r_t -D+1$ is not a predecessor of $r_t$, which yields that $r_t$ has no predecessor, which is a contradiction. This gives the  inequality   of the first item. 
The second item  is obvious, since $s^k_{t+1}$ ha a unique predecessor which is $s^k_{t+1}-D+1$.\qed
\end{proof}

Lemma \ref{lemma:localdensity} induces a partition of fired columns between those which make a progress ({\em i.e.} increases the greatest fired column) and those which do not. This distinction is important in further development, so let us give progress firings a name. Let $s^k=(s^k_1,\dots,s^k_T)$ be an avalanche, a column $s^k_{t}$ is called a {\em peak} if and only if $s^k_{t} > \max s^k_{[1,t-1]}$.

  Remark that two peaks $p\not =q$ can be compared using chronological ($<_T$) or spatial ($<_S$) orders. Nevertheless, by definition of peaks we obviously have $p <_T q \iff p <_S q$.

%
%

\subsection{Pseudo local process}\label{ss:local}

The next Lemma explains precisely the way peaks appear, as soon as a $D-1$ successive columns are fired. It follows an intuitive idea: a peak at time $t+1$ is a column which only receives grains from the left part of the sand pile (within $s_{[1,t]}$). Therefore, the amount it receives is at most 1 and a peak must have an initial value of $D-1$ units of height difference. Also, a non-peak column isn't fired when it receives 1 unit of height difference so it has to wait for its right neighbor to be fired, in a kind of chain reaction.

\begin{lemma}\label{lemma:D-1}
 Let $s^k$ be the $k^{th}$ avalanche. Assume that there exists a column  $l$ such that $\llbracket l ; l+D-2 \rrbracket \subseteq s^k$, and a fired column $i' \in s^k$ such that $i'\geq l+D-1$. Let  $l'$ be  the lowest peak such that  $l' \geq l+D-1$.\\
There exists a time $t$ such that 
  \begin{tabular}[t]\{{l}.
    if $i$ is such that $l' -D+1 < i \leq  l'$ then $i \in s^k_{[1,t]}$\\
    if $i$ is such that $l'< i$ then $i \notin s^k_{[1,t]}$
  \end{tabular}\\
 Moreover, let $\sigma^t$ denote the configuration obtained from $\pi(k-1)$ via $s^k_{[1,t]}$, then $l'$ is the lowest integer such that  $l' \geq l + D-1$ and $\sigma^t_{l'} = D-1$.
\end{lemma}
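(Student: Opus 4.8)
The plan is to read the whole statement off the causal order on fired columns used in the proof of Lemma~\ref{lemma:localdensity}, supplemented by one quantitative fact about peaks. First I would record that fact: if $c$ is a peak fired at time $\tau$, then $c=r_\tau$ is a new maximum, so its right neighbour $c+1$ is still unfired at time $\tau-1$ and cannot have contributed the $D-1$ units a right firing would give. Hence the only column that could have raised $\sigma_c$ above its initial value is $c-D+1$, contributing at most $1$. Since $\pi(k-1)_c\le D-1$ (fixed point) while the value just before firing is $\ge D$, this forces $\pi(k-1)_c=D-1$ and $c-D+1\in s^k_{[1,\tau-1]}$. Applied to $l'$, this says $l'$ has initial value $D-1$ and is triggered by the firing of $l'-D+1$.

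Next I would locate the reference time. Let $\tau$ be the time $l'$ is fired. Because $l'$ is the \emph{lowest} peak with $l'\ge l+D-1$, the maximum reached just before $\tau$, namely $r_{\tau-1}$, is itself a peak lying strictly below $l+D-1$, so $r_{\tau-1}\le l+D-2$; combining with the spacing bound $l'=s^k_\tau\le r_{\tau-1}+D-1$ of Lemma~\ref{lemma:localdensity} gives $l'-D+1\le r_{\tau-1}$ and places $r_{\tau-1}$ inside the pad $\llbracket l;l+D-2\rrbracket$. I would then take $t$ to be the last time at which the current maximum is still $l'$ (i.e. the step just before the first firing of a column above $l'$, or the end of the avalanche if there is none). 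With this choice condition~2 is immediate, since $r_t=l'$ means no column $>l'$ occurs in $s^k_{[1,t]}$.

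For condition~1 I would push Lemma~\ref{lemma:localdensity} forward over the interval $(\tau,t]$: every firing there is below the maximum $l'$, so by its first item each is the \emph{largest} not-yet-fired column below $l'$, and the fills therefore sweep downward $l'-1,l'-2,\dots$ without skipping. Using $r_{\tau-1}\ge l'-D+1$ together with the hypothesis that the pad columns are all fired, I would argue the sweep reaches at least $l'-D+2$, so the whole window $\{\,i: l'-D+1<i\le l'\,\}$ is contained in $s^k_{[1,t]}$, which is condition~1.

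Finally, for the ``moreover'' I would compute $\sigma^t$ on the columns $\ge l+D-1$. Each such column can only have been affected by $-D$ (its own firing), $+(D-1)$ (right neighbour) and $+1$ (the column $D-1$ to its left); tracking precisely which of these have occurred by time $t$ on $\llbracket l+D-1;l'\rrbracket$, and feeding in $\pi(k-1)_{l'}=D-1$ and the trigger structure, I would show $l'$ is the smallest index in that range whose $\sigma^t$-value equals $D-1$, every lower candidate being either depleted by its own firing or never attaining $D-1$. The hard part will be exactly this value computation, and in particular reconciling it with the chosen $t$: one must control \emph{simultaneously} which window columns have fired and what they have received so that the arithmetic lands on $D-1$ at $l'$ and stays off $D-1$ strictly below it. The first two items are essentially mechanical consequences of the causal order and Lemma~\ref{lemma:localdensity}, so this coupling of ``which columns fired'' with ``what value results'' is where I expect the real work to lie, and it is the step whose delicacy fixes the precise meaning of the time $t$.
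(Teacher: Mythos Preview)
Your plan for condition~1 has a real gap. You assert that the sweep after time $\tau$ runs $l'-1,l'-2,\dots$ and ``reaches at least $l'-D+2$'', but you never justify why each column is fireable when the sweep arrives at it: a column $c$ becomes fireable only if $\pi(k-1)_c\ge 1$ or $c-D+1$ has already fired, and neither is guaranteed. The pad hypothesis only says $\llbracket l;l+D-2\rrbracket\subseteq s^k$ \emph{eventually}, not by time $\tau$, so you cannot invoke it directly to feed the sweep. The missing ingredient is the contraposition the paper isolates: any $i<l'$ with $i\notin s^k_{[1,t]}$ satisfies $i\notin s^k$. Your choice of $t$ (the last time the running maximum equals $l'$) actually makes this easy---at that $t$ no unfired column $\le l'$ is fireable (otherwise the leftmost firing at step $t+1$ would be $\le l'$), and later firings, all on columns $>l'$, leave every $i<l'$ unchanged---but you never state it. Once you have it, the pad hypothesis upgrades to $\llbracket l;l+D-2\rrbracket\subseteq s^k_{[1,t]}$, and then a short reverse induction (if $i+1$ and $i-D+1$ are both in $s^k$ then so is $i$, else $\pi(k)_i\ge D$) fills in $\llbracket l+D-1;l'-1\rrbracket$; note this uses $l'\le r_{\tau-1}+D-1\le l+2D-3$, so that $i-D+1$ always lands back in the pad.

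For the ``moreover'' you are chasing a typo. With your $t$, column $l'$ has fired and has received only the $+1$ from $l'-D+1$ (its right neighbour $l'+1$ is still unfired since $r_t=l'$), so $\sigma^t_{l'}=(D-1)+1-D=0$, not $D-1$. The intended assertion is $\pi(k-1)_{l'}=D-1$ (compare Theorem~\ref{theorem:peak}, which is stated in terms of $\pi(k-1)$), and your first paragraph already proves that. Minimality is then short: any $l''\in\llbracket l+D-1;l'-1\rrbracket$ with $\pi(k-1)_{l''}=D-1$ has $l''-D+1\in\llbracket l;l+D-3\rrbracket$, which fires before $\tau$ and would make $l''$ a peak below $l'$, contradicting the choice of $l'$. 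So the ``hard coupling'' you anticipate in the last paragraph is not where the work is; the work is the contraposition step you skipped.
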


Informally, the lemma above claims that the space threshold $l'$ induces a corresponding time threshold  $t$: columns fired before time $t$ are on  the  left of $l'$,  while columns fired after  time $t$ are on  the  right  of $l'$.

\begin{proof}
 Let $t_0$ be the time when $s^k_{t_0} = l'$, i.e. the   first  time such that $s^k_{t_0} \geq l+D-1$,  and let $j$ be the largest integer such that,  for all $j' \in \llbracket 0; j \rrbracket$, we have $s^k_{t_0 +j'} = s^k_{t_0} -j'$.  Let us state $t = t_0 +j$. We have $j < D-1$.
 
 Let   $i$, with $ i < l'$,  such that $i \notin s^k_{[1,t]}$. We claim that we have:  $i \notin s^k$. To prove it, we prove by induction that for any $t' \geq t$,  $i \notin s^k_{[1,t']}$. Assume that this is satisfied for a fixed $t'$. This means that all the transitions of $s^k_{[t+1,t']} $  are done on columns larger than $l'$.  Thus, $ \sigma^{t'}_i =   \sigma^{t}_i$ and no transition is possible  on $i$ for $\sigma^{t}$ since  $s^k$ is leftmost (the only potential column to be fired is $ s^k_{t_0} -j -1$, but by assumption, either this column has been previously fired,  or it cannot be fired by definition of $j$, according to Lemma \ref{lemma:localdensity}) .

By contraposition,  it follows that for each column  $i \in \llbracket l;l+D-1 \llbracket$,  we have $i \in s^k_{[1,t]}$.  A   simple (reversed) induction shows that,   for $i \in \llbracket l+D-1;l' \rrbracket$   we have $i \in s^k$, since  by hypothesis $i+1$, and $i+1 -D$ both are in $s^k$.  Thus, by contraposition of the claim above,  for $i \in \llbracket l+D-1;l' \rrbracket$,   we have $i \in s^k_{[1,t]}$. 
This gives  the the fact that for all $i$ with $l' -D+1 < i \leq  l'$, $i \in s^k_{[1,t]}$

The fact that for all $i$ with $l'< i$, $i \notin s^k_{[1,t]}$ is trivial,  by definition of $t_0$ and $t$.

 We have $l' > l + D-2$ and $\sigma^t_{l'} = D-1$. assume that there exists $l''< l'$ satisfying the same properties. Notice that 
 for $t_0 \leq  t' \leq t$   we have $s^k_{t'} > l' - D-1$. Thus the time $t_1$ such that $s^k_{t_1} = l'' -D+1$  is such that $t_1 < t_0$. 
  That means that $ l'' $  should have been fired before $t_0$,  a contradiction.\qed
\end{proof}

Lemma \ref{lemma:D-1} describes in a very simple way the behavior of avalanches. Thanks to it, the study of an avalanche can be turned into a pseudo linear execution, in which transitions are organized in a clear fashion:

\begin{theorem}\label{theorem:peak}
  Let $s^k=(s^k_1,\dots,s^k_T)$ be the $k^{th}$ avalanche and $(p_1,\dots,p_q)$ be its sequence of peaks. Assume that  there exists a column $l$,  such that for each column  $i$ with $i \in \llbracket l ; l+D-1 \llbracket$, $i \in s^k$. 
  Then for any column $p$ such that $p \geq l+D-1$, 
  $$p\text{ is a peak of }  s^k  \iff \pi(k-1)_p = D-1 \text{ and } \exists j \text{ s.t. } p_j < p \leq p_j+D-1.$$
  
  Furthermore, let $j \leq q$ and $t$ such that $p_j=s^k_t$, with $p_j \geq l+D-1$. Then
  $$T \geq t+p_j-p_{j-1}-1 \text{ and for all } t' \text{ s.t. } t < t' \leq t+p_j-p_{j-1}-1,~ s^k_{t'}=s^k_{t'-1}-1.$$

 \end{theorem}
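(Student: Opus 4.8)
The plan is to treat the equivalence of the first part by proving each implication directly from the two preceding lemmas, and to obtain the descending run of the second part as a byproduct of an induction that walks through the successive peaks lying beyond $l+D-1$.

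For the forward implication I would analyse the value $\sigma_p$ of a peak $p=s^k_{t+1}\geq l+D-1$ right before it is fired. The only firings that can alter the height difference at $p$ are those of $p+1$, of $p$ itself, and of $p-D+1$; since $p$ is a peak we have $r_t=\max s^k_{[1,t]}<p$, so neither $p$ nor $p+1$ has been fired yet, and by Proposition~\ref{lemma:01} the firing of $p-D+1$ contributes at most one unit. As $p\geq l+D-1\geq D-1\geq 1$, the base value is $\pi(k-1)_p$, whence $\sigma_p\leq\pi(k-1)_p+1$ at the firing time; firing forces $\sigma_p\geq D$, while $\pi(k-1)_p\leq D-1$ because $\pi(k-1)$ is a fixed point, and the two inequalities pin down $\pi(k-1)_p=D-1$. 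The predecessor condition is then read off Lemma~\ref{lemma:localdensity}: the column $r_t$ is the largest column fired so far, hence was a peak when first fired, so it is the previous peak $p_j$, and the second item of the lemma gives $p\leq r_t+D-1=p_j+D-1$ together with $p>r_t=p_j$.

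For the converse and for the second statement I would induct on the peaks $\geq l+D-1$, using Lemma~\ref{lemma:D-1} as the engine. The density hypothesis gives a block of $D-1$ consecutive fired columns; Lemma~\ref{lemma:D-1} isolates the lowest peak above it together with a time threshold at which exactly the columns up to that peak are fired and everything further right is untouched, and it leaves a fresh block of $D-1$ consecutive fired columns ending at that peak, so the lemma reapplies and the peaks beyond $l+D-1$ are exactly $p^{(1)}<p^{(2)}<\dots$, each the lowest peak above its predecessor. The descending run of the second statement then follows from the first item of Lemma~\ref{lemma:localdensity}: at the moment $p_j$ is fired it is the current maximum, the columns of $\rrbracket p_{j-1};p_j\llbracket$ are still unfired, and every subsequent sub-maximal firing is the largest not-yet-fired column below the maximum, which forces the firings $p_j-1,p_j-2,\dots$ decreasing by one until the run reaches $p_{j-1}+1$ and must stop, $p_{j-1}$ and all smaller columns being already fired; this is the run $s^k_{t'}=s^k_{t'-1}-1$ for $t<t'\leq t+p_j-p_{j-1}-1$, whose existence gives $T\geq t+p_j-p_{j-1}-1$. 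To finish the converse I would note that the only firing able to raise the value of a column $c$ before $c$ is itself fired is that of $c-D+1$ (the other contributor $c+1$ can only fire later, $c$ being beyond the current frontier), so when the frontier reaches $c$ its value is $\pi(k-1)_c$ augmented by one exactly when $c-D+1$ has been fired. If $\pi(k-1)_c=D-1$ and some peak $p_j$ satisfies $p_j<c\leq p_j+D-1$, then $c-D+1$ lies in $\llbracket p_j-D+2;p_j\rrbracket$, which Lemma~\ref{lemma:D-1} guarantees is entirely fired, so $c$ reaches value $D$, becomes fireable beyond the frontier, and is therefore selected as a peak. The main obstacle is precisely this last bookkeeping: one must be sure that in the dense regime the fired columns keep forming a contiguous range so that $c-D+1$ is fired exactly once, and that the leftmost strategy cannot slip past $c$ once it is fireable --- both points being exactly what Lemma~\ref{lemma:D-1} is designed to certify.
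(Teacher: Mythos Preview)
Your proposal is correct and takes essentially the same approach as the paper. The paper's proof is extremely terse --- it says the first part is ``a straight induction on Lemma~\ref{lemma:D-1}'' and that the second part follows because any non-peak column must wait for both its right neighbor and the column $D-1$ to its left before it can fire --- and you have unpacked exactly this iteration, while also supplying a direct height-difference bookkeeping argument (via Proposition~\ref{lemma:01} and Lemma~\ref{lemma:localdensity}) for the forward implication that the paper leaves implicit in its appeal to Lemma~\ref{lemma:D-1}.
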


 A graphical representation of this statement is given on figure \ref{fig:peak}.

\begin{figure}[!h]
\begin{tikzpicture}
  \filldraw[fill=black] (3*.5,0) rectangle ++ (.5,.5);
  \foreach \x in {7,8,11,16,23}
    \filldraw[fill=black!30] (\x*.5,0) rectangle ++ (.5,.5);
  \foreach \x in {0,...,25}
    \draw (\x*.5,0) rectangle ++ (.5,.5);
  \draw[densely dashed] (13,0) -- ++ (.5,0);
  \draw[densely dashed] (13,.5) -- ++ (.5,0);
  \draw[densely dashed] (0,0) -- ++ (-.5,0);
  \draw[densely dashed] (0,.5) -- ++ (-.5,0);
  \draw[line width=2pt] (0,0) -- ++ (2.5,0) -- ++ (0,.5) -- ++ (-2.5,0) -- cycle;
  \node at (.25,-.25) {$l$};
  \draw[-latex] (3*.5+.25,1) -- ++ (0,-.5);
  \draw[-latex] (3*.5+.25,.5) parabola[bend pos=0.5] bend +(0,.5) +(4*.5,0);
  \draw[-latex] (4*.5+.25,.5) parabola[bend pos=0.5] bend +(0,.5) +(4*.5,0);
  \draw[-latex] (8*.5+.25,.5) parabola[bend pos=0.5] bend +(0,.5) +(3*.5,0);
  \draw[-latex] (9*.5+.25,.5) parabola[bend pos=0.5] bend +(0,.5) +(7*.5,0);
  \foreach \x in {5,6,7,10,11,13,14,15,16}
    \draw[-latex] (\x*.5+.25,0) parabola[bend pos=0.5] bend +(0,-.3) +(-.5,0);
\end{tikzpicture}
\caption{Illustration of Theorem \ref{theorem:peak} with $D=6$; surrounded columns $l$ to $l+D-2$ are supposed to be fired; black column is the greatest peak strictly lower than $l+D-1$; a column is grey if and only if its value is $D-1$; following arrows depicts the avalanche}
\label{fig:peak}
\end{figure}
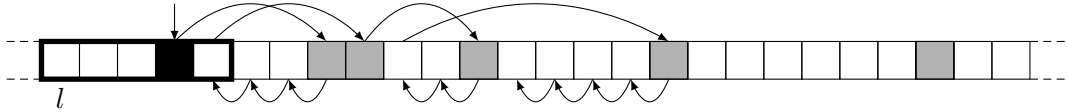

\begin{proof}
  The first part  is a straight induction on Lemma \ref{lemma:D-1}.\\
  The second part follows from an induction summed up in the following fact: any column $i$ such that $\pi(k-1)_i < D-1$ must wait for its right neighbor $i+1$ to be fired, and it should be fired when both $i+1$ and $i-D+1$ have been fired (besides, $i-D+1$ has already been fired). Since any of such $i$ is fired to reach a fixed point, we have for all peak $p_j \geq l + D-1$ that $T \geq t+p_j - p_{j-1} -1$.\qed
\end{proof}

Theorem \ref{theorem:peak} easily allows us to compute the right part of the $k^{th}$ avalanche (from column $l + D-1$), only knowing $\pi(k-1)$. The sequence of peaks is computed as follows. The first one is the lowest column $i$ greater or equal to $l+D-1$ such that $\pi(k-1)_i=D-1$. Then, given a peak $i$, the next one is the lowest  $j$ such that $\pi(k-1)_j = D-1$  and $j- i \leq D-1$. If such a $j$ does not exist, then there is no more peak and $i$ is the largest fired column.

We can distinguish two movements within an avalanche: before a certain column $l$ it has an unknown behavior, and from $l$ to the end the behavior is pseudo local: when an index is fired ahead (on the right) then any `hole' is filled before the progress can continue. Column $l$ depends on parameter $D$ and the number of grains $N$. We say that the $k^{th}$ avalanche $s^k$ is {\em dense from $l$ to $m$} when $m$ is the greatest fired column ($\forall~ i>m,~ i \notin s^k$) and any column between $l$ and $m$ included has been fired ($\forall~ i \in \llbracket l;m \rrbracket,~ i \in s^k$). A consequence of Theorem \ref{theorem:peak} is that the avalanche $s^k$ considered is dense starting at $l$,  where $l$ denotes the parameter in the statement of the Theorem. We will define the {\em global density column} $\mathcal L(D,N)$  as the minimal column such that for any avalanche $s^k$, with $k \leq N$, $s^k$ is dense from $\mathcal L(D,N)$. When parameters $D$ and $N$ are fixed, we sometimes simply denote $\mathcal L$. The formal definition of the global density column is:

\begin{definition}
  $\mathcal L'(D,k)$ is the minimal column such that the $k^{th}$ avalanche is dense starting at $\mathcal L'(D,k)$:
  $$\mathcal L'(D,k)=\min \{ l \in \N ~|~ \exists m \in \N \text{ such that } \forall~ i \in \llbracket l ; m \rrbracket, i \in s^k \text{ and } \forall~ i > m, i \notin s^k \}.$$
  Then, the global density column $\mathcal L(D,N)$ is defined as:
  $$\mathcal L(D,N)=\max \{\mathcal L'(D,k) ~|~ k \leq N \}.$$ 
\end{definition}  
  
See figure \ref{fig:gdc-3-0-100} for an illustration of the global density column $\mathcal L(3,N)$.
  
An important direct implication of Theorem \ref{theorem:peak} is that if there exists a column $l$ such that for the $k^{th}$ avalanche $s^k$, we have for all $i \in \llbracket l ; l+D-1 \llbracket$, $i \in s^k$, then for all $j \in \llbracket l+D-1 ; \max s^k \llbracket$, we have $j-(D-1), j, j+1 \in s^k$. From the definition of the model, those three firings add respectively $1,-D$ and $D-1$ units of height difference to $\pi(k)_j$. Since no other firing of the $k^{th}$ avalanche affects $\pi(k)_j$, we have $\pi(k)_j=\pi(k-1)_j$. This behavior accounts for an important part of the fixed point, and this part keeps the same shape (see an example on figure \ref{fig:6-1068-1069}). But if the shape is the same, then it intuitively hints some similarity between successive avalanches.

\begin{figure}[!h]
\begin{tikzpicture}
  \node at (-1,.25) {\small $\pi(1068)$};
  \foreach \x in {10,...,21}
    \fill[fill=black!10] (\x*.5,0) rectangle ++ (.5,.5);
  \foreach \x in {13,14,16,21}
    \fill[fill=black!30] (\x*.5,0) rectangle ++ (.5,.5);
  \foreach \x/\n in {0/5,1/0,2/0,3/4,4/1,5/5,6/5,7/4,8/0,9/5,10/2,11/0,12/3,13/5,14/5,15/4,16/5,17/4,18/3,19/2,20/1,21/5,22/4,23/3,24/2,25/1}
    \draw (\x*.5,0) rectangle node {\scriptsize \n} ++ (.5,.5);
  \draw[line width=2pt, densely dashed] (10*.5,.6) -- ++ (0,-1.7) node [below] {\small $\mathcal L(6,1069)$};
  \node at (-1,-.75) {\small $\pi(1069)$};
  \foreach \x/\n in {0/0,1/0,2/5,3/3,4/0,5/5,6/4,7/3,8/0,9/5,10/2,11/0,12/3,13/5,14/5,15/4,16/5,17/4,18/3,19/2,20/1,21/0,22/5,23/4,24/3,25/2,26/1}
    \draw (\x*.5,-1) rectangle node {\scriptsize \n} ++ (.5,.5);
  \draw[<->] (10*.5+.1,-.25) -- node [above,yshift=-2.5] {\tiny $D\!-\!1$} ++ (5*.5-.2,0);
  \foreach \x in {15,...,20}
    \node at (\x*.5+.25,-.25) {$\downarrow$};
\end{tikzpicture}
\caption{$D=6$, $\pi(1068)$ and $\pi(1069)$ are represented as sequences of height difference. We pictured a part of $s^{1069}$: a light grey square is a fired column and a dark grey square is a peak. By definition, $s^{1069}$ is dense starting from $\mathcal L(6,1069)$. We can notice that for every column $j$ such that $\mathcal L(6,1069)+D-1 \leq j < s^{1069},~ \pi(1068)_j=\pi(1069)_j$, which hints that the next avalanche reaching that part of the configuration will behave in the same way.}
\label{fig:6-1068-1069}
\end{figure}
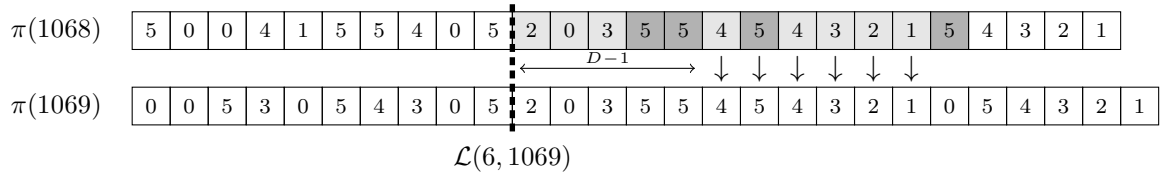

\begin{figure}[!h]
  \begin{center}\input{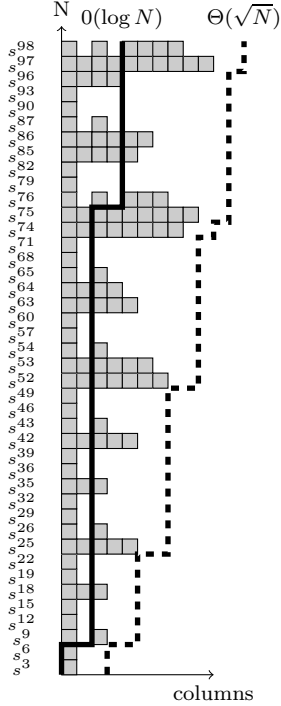}\end{center}
  \caption{$D=3$, one avalanche is depicted on each line. A grey square is a fired column. The black line illustrates the global density column $\mathcal L(3,N)$, and the dashed line depicts the maximal non-empty column. One can notice that the dashed line is 2 columns ahead of the rightmost fired column because it moves 1 grain to the new rightmost (maximal) non-empty column when applying the transition rule on that column).}
  \label{fig:gdc-3-0-100}
\end{figure}

%
%

\subsection{Application to KSPM(3): $\mathcal L(3,N)$ in $O(\log N)$}\label{ss:avD=3}

In this section we prove that in KSPM(3), the global density column $\mathcal L(3,N)$ is in $O(\log N)$. Considering the $N$ first avalanches, from a logarithmic column in $N$ we can apply Theorem \ref{theorem:peak} and avalanches proceed pseudo locally, as described on figure \ref{fig:peak}.

\begin{proposition}\label{lemma:meta2}
  $[D=3]$\footnote{The proof given here is different from the proof presented in \cite{LATA}. We have more hope in the possible generalization of the present proof.}
  $\mathcal L(3,N)$ is in $O(\log N)$.
\end{proposition}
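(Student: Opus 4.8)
The plan is to control the growth of the global density column $\mathcal{L}(3,N)$ as $N$ increases by understanding how much $\mathcal{L}'(3,k)$ can increase from one avalanche to the next, and to show this increase happens only rarely. The key intuition from the preceding development is that most avalanches reach a part of the configuration that is already in a stable "shape" (as in figure \ref{fig:6-1068-1069}), so the density column should only advance when the local pattern near the frontier is disrupted. Since the model is KSPM(3), I would exploit that $D-1=2$, so the pseudo-local process has a particularly simple two-column periodicity: peaks are columns of value $2$ within distance $2$ of the previous peak, and a completely regular region looks like $\dots 2121\dots$ (the wavy shape mentioned in the introduction).

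**First I would** set up an inductive/amortized argument on $k$. Define a potential or measure tied to the configuration near the current density column $\mathcal{L}'(3,k)$, and track how an avalanche either (a) leaves $\mathcal{L}'$ essentially unchanged when the frontier is already regular, or (b) pushes $\mathcal{L}'$ forward, but only when some irregularity (a "defect" in the $2121$ pattern) is consumed. The crucial step is a counting argument: each advance of the density column by a bounded amount must be "paid for" by the creation of grains on column $0$, and I would argue that the number of times the density column can advance is logarithmic because each advance corresponds to at least doubling (or some fixed multiplicative increase in) the number of grains needed, or equivalently because the region left of $\mathcal{L}'$ must accumulate enough height-difference to sustain the irregular prefix. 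Concretely, I expect to show that if $\mathcal{L}'(3,k) \geq c$, then at least $\Omega(2^{c})$ grains (or some exponential-in-$c$ quantity) must have been added, which inverts to give $\mathcal{L}(3,N) = O(\log N)$.

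**The hard part will be** making the amortization rigorous: precisely identifying what a "defect" is in the height-difference sequence for $D=3$, proving that each defect, once passed by the regular pseudo-local process of Theorem \ref{theorem:peak}, cannot reappear without substantial new input, and quantifying the exponential relationship between the position of the density column and the grain count $N$. I would need to carefully analyze the finitely many local patterns that can occur just left of the frontier (using Proposition \ref{lemma:01} that each column fires at most once per avalanche, and Lemma \ref{lemma:localdensity} for the bounded-distance propagation) and show that an irregular prefix of length $\ell$ forces the configuration to encode, in its height differences, a quantity growing exponentially in $\ell$. Establishing this exponential lower bound on the grain count as a function of the irregular-prefix length is where the real combinatorial work lies, and where the restriction to $D=3$ is essential, since the clean $2121$ periodicity is what makes the defect-counting tractable.
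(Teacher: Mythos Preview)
Your plan misidentifies the relevant structural constraint and therefore does not yet contain the key idea. You focus on ``defects in the $2121$ pattern,'' but $2121\dots$ is the shape of the fixed point \emph{to the right} of the density column (the wavy region). What governs $\mathcal L'(3,k)$ is the structure \emph{to the left} of it: for the $k$-th avalanche to fail to be dense before column $\ell$, the configuration $\pi(k-1)$ must have the very specific prefix $2,0,2,0,\dots,2$ of length $\ell+1$ (so $\ell=2j$ is even). This is because peaks on even columns (value $2$) let the avalanche advance by $2$ each time, while the odd columns must have value $0$ so that they are never fired---any other value forces $D-1=2$ consecutive firings and the avalanche becomes dense. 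Your amortized defect-counting does not isolate this rigid prefix, and without it there is no clean handle on the exponential lower bound.

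The paper's proof is not amortized at all: it analyzes the single avalanche $k$ achieving $\mathcal L'(3,k)=\mathcal L(3,N)$, uses the forced prefix $2(02)^j$, and then runs a linear recurrence on the shot vector. Writing $a_i$ for the number of times column $i$ has been fired and setting $x_i=a_{i-2}-2a_{i-1}+a_i$, the transition rule gives $x_{i+1}=(-x_i+\sigma_i)/2$, which inverts to $x_i=-2x_{i+1}+\sigma_i$. Plugging in the known prefix values $\sigma_{2s}=2$, $\sigma_{2s+1}=0$ and unrolling back to $x_0=N-1+a_0$ yields $x_0 \geq 2(4^j-1)/3$, hence $N$ is exponential in $j$ and $\mathcal L(3,N)=2j=O(\log N)$. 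The exponential you were hoping to extract from an amortization really comes from this factor $-2$ in the inverted recurrence, combined with the rigid alternating prefix; that is the piece your outline is missing.
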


\begin{proof}
%
%

We prove that $N$ is exponential in $\mathcal L(3,N)$. We use the fact that by definition of $\mathcal L(3,N)$ there exists an index $N' \leq N$ such that $\mathcal L(3,N)=\mathcal L'(3,N')$, which means that the $N'^{th}$ avalanche is dense starting from $\mathcal L(3,N)$ but not dense starting from $\mathcal L(3,N)-1$, and then prove that $N'$ is exponential in $\mathcal L'(3,N')$ (which proves the result since $N \geq N'$). For convenience and without loss of generality, let us consider that $N=N'$ (to avoid using the symbol $N'$ and saying that $N \geq N'$). Let $\ell=\mathcal L(3,N)$, the trick is that for the particular value $D=3$, the configuration $\pi(N-1)$ has unique and very regular values on columns $\pi(N-1)_0,\pi(N-1)_1,\dots,\pi(N-1)_\ell,\pi(N-1)_{\ell+1}$.

We suppose $\mathcal L'(3,N) > 0$, which is true for $N \geq 9$.

For convenience, let $\sigma=\pi(N-1)$ and $a=(a_0,a_1,\dots)$ be its {\em shot vector} {\em i.e.}, $a_i$ is the number of times that column $i$ has been fired\footnote{For example, if we study $\mathcal L(3,98)$ we can look at $s^{98}$ which is not dense starting from $\mathcal L(3,98)-1$ and dense starting from $\mathcal L(3,98)$, so $\mathcal L(3,98)=\mathcal L'(3,98)$ (visible on figure \ref{fig:gdc-3-0-100}). Then $\sigma=\pi(97)=(2,0,2,0,2,1,2,2,1,0,2,1,0^\omega)$ and $a=(41,14,21,12,11,7,5,3,2,1,0^\omega)$. Later in the proof we will have in this example that the sequence $x=(138,-68,34,-16,8,-3,2,0,1,0,0,1,0^\omega)$.}. According to the iteration rule we have the relation:
 $$\sigma_i=a_{i-2}-3a_i+2a_{i+1}$$
 {\em i.e.},   $$a_{i+1}=\frac{1}{2}(\sigma_i-a_{i-2}+3a_i)$$
We state  $A=\begin{pmatrix} 0 & 1 & 0\\ 0 & 0 & 1\\ -1/2 & 0 & 3/2 \end{pmatrix}$. 
We denote by $v_i$ the column vector such that  $v_i^T = ( 0,   0,  \sigma_i/2 ),$
and    $u_i$ the column vector such that $u_i^T =(a_{i-2},a_{i-1},a_i)$ (with the convention that  $u^T$ is  the line vector  obtained by transposition of the column vector $u$). The equality above can be algebraically written as
 $$u_{i+1}=A u_i + v_i$$
 with the initial condition 
 $$u_0 = (N-1, 0, a_0)$$


\textit{Reduction of the dimension. } We first get a simplification, exploiting the fact that the value 1 is a double eigenvalue for $A$, as follows. Let $(e_1,e_2,e_3)$ be the canonical basis of $\Z^3$, we consider the vectors: 
 \begin{center}
 \begin{tabular}{l}
   $e'_1=e_1+e_2+e_3$\\
   $e'_2=e_2+2e_3$
 \end{tabular}
\end{center}

We have $A e'_1 = e'_1$ and $A e'_2 =   e'_1+ e'_2$. 
On the other hand, we have $e_1 = e'_1 - e'_2 + e_3$ and $e_2 =  e'_2 - 2 e_3$ (which guarantees that $(e'_1, e'_2, e_3)$ is a basis of $\Z^3$).

Let $P$ be the matrix of the projection on the vectorial 1-dimensional space $\Delta_{e_3}$, generated by $e_3$,  according to the direction of the plane $ \Pi_{e'_1, e'_2}$,  generated by  $e'_1$ and $e'_2$. We have: 
$$P =  \begin{pmatrix} 0 & 0 & 0 \\ 0 & 0 & 0 \\ 1 & -2 & 1  \end{pmatrix}$$
The plane $ \Pi_{e'_1, e'_2}$ is (globally) invariant by $A$, and, for  each integer $i$, $u_i - Pu_i$ is element of $ \Pi_{e'_1, e'_2}$. 

Thus $A(u_i - Pu_i)$ is element of $ \Pi_{e'_1, e'_2}$, which finally gives $PA(u_i - Pu_i) = 0$, {\em i.e.},  
$$PAu_i  = PA Pu_i. $$ 
So the equation $u_{i+1}=A u_i + v_i$ gives $Pu_{i+1}=PA u_i + Pv_i$, {\em i.e.} 
$$Pu_{i+1}=PA Pu_i + Pv_i$$
We state $x_i$ for the third component   of $Pu_i$ (both first and second components are null). 
The third component of $Pv_i$ is $\sigma_i/2$. We have 
$$P \, A  =  \begin{pmatrix} 0 & 0 & 0 \\ 0 & 0 & 0 \\- \frac {1}{2} & 1 & -\frac {1}{2}  \end{pmatrix}$$
Thus,  the equality $Pu_{i+1}=PA Pu_i + Pv_i$  yields to the equality:  
$$x_{i+1}=\frac { - x_i + \sigma_i}{2}$$
with initial condition $x_0 = N-1+a_0$.\\
 
\textit{Relative expression of elements $x_i$. } We can invert the previous equality to: 
\begin{eqnarray}\label{eq:xi}
x_{i}= - 2 x_{i+1} + \sigma_i
\end{eqnarray}

We claim that the above inequality can be generalized to get:
\begin{eqnarray}\label{eq:xi1k}
x_{i+1 -k }= (- 2)^k x_{i+1} + \sum_{r = 0}^{k-1} (-2)^{k-1-r} \sigma_{i -r}
\end{eqnarray}
This is easy by induction. We have the initialization for $k = 1$, and if we assume the result for $k$, then we have 
$$x_{i+1 -(k +1)} = - 2 x_{i+1-k} +\sigma_{i-k} =  - 2   ( (- 2)^k x_{i+1} + \sum_{r = 0}^{k-1} (-2)^{k-1-r} \sigma_{i -r}) + \sigma_{i-k}$$
$$x_{i -k }  = (- 2)^{k+1} x_{i+1} + \sum_{r = 0}^{k-1} (-2)^{k-r} \sigma_{i -r} + \sigma_{i-k} =  (- 2)^{k+1} x_{i+1} + \sum_{r = 0}^{k} (-2)^{k-r} \sigma_{i -r}$$
which is the result.\\

\textit{Specification in our case. } We claim that $\mathcal L'(3,N)$ is even {\em i.e.}, $\mathcal L'(3,N)=\mathcal L(3,N)=2j$ for some $j \in \mathbb N$ and that $\sigma$ needs to have as a prefix the sequence $2,0,2,0,2,0,2,\dots$ until column $2j$. That is, $2(02)^j$ is a prefix of $\sigma$. Indeed, the avalanche needs to propagate rightward until column $2j$ (otherwise $\mathcal L'(3,N) < 2j$) which is achieved thanks to values $2$ on even columns, and all the odd columns $1,3,\dots,2j-1$ must not be fired (otherwise the hypothesis of Theorem \ref{theorem:peak} is completed and the avalanche becomes dense) which is completed by values $0$ becoming $2$ when their right neighbor is fired. Any other value lets the avalanche become dense.


Applying the general formula (\ref{eq:xi1k}) with $i = 2j$ and $k = 2j+1$, we obtain: 
$$x_{2j+1 -(2j+1)  }= (- 2)^{2j+1} x_{2j+1} + \sum_{r = 0}^{2j} (-2)^{2j-r} \sigma_{2j-r}$$
 We have  $\sigma_{2j-r}= 0$ for $r $ odd and $\sigma_{2j-r}=2$ for $r$ even. Thus, stating $2s = 2j-r $,  we get 
\begin{eqnarray}\label{eq:x0}
x_0= (- 2)^{2j+1} x_{2j+1} + 2\sum_{s = 0}^{j} (-2)^{2s}  =  -2 (4)^j x_{2j+1} + 2 \frac{4^{j+1}-1}{3}
\end{eqnarray}

\textit{Conclusion. } We now prove that $x_0$ is exponentially large in $j$. Using the fact that $x_i$ is an integer for all $i$ ($x_i$ is the third component of $Pu_i$ with $u_i$ a vector of natural numbers and $P$ an integer matrix), we consider 3 cases:

\begin{enumerate}
  \item $\vert x_{2j+1} \vert \geq 2$. Then from equation (\ref{eq:xi}), for any $i \leq 2j+1$ the sign of $x_i$ is negative if and only if $i$ is odd (signs are alternating and $x_0$ is positive), therefore from (\ref{eq:x0}) we have $x_0 \geq 2 \frac{4^{j+1}-1}{3}$.
  \item $\vert x_{2j+1} \vert = 0$. Then equation (\ref{eq:x0}) becomes $x_0 = 2 \frac{4^{j+1}-1}{3}$.
  \item $\vert x_{2j+1} \vert = 1$. Since $\sigma_2j=2$, we have from equation (\ref{eq:xi}) that either $x_{2j}=4$ if $x_{2j+1}=-1$ or $x_{2j}=0$ if $x_{2j+1}=1$. In the former possibility we get the sign alternation and apply the reasoning of case 1, and in the latter possibility we also have $x_{2j-1}=0$ from equation (\ref{eq:xi}) because $\sigma_{2j-1}=0$ and apply the reasoning of case 2.
\end{enumerate}

In any case we get:
$$N-1+a_0 = x_0 \geq 2 \frac{4^j-1}{3}$$

We obviously have $a_0 \leq \frac{N-1}{2}$ since each firing of column $0$ moves 2 grains to the right. Thus, we get $\frac{3}{2}(N-1)  \geq  2 \frac{4^j-1}{3}$ from which we can easily deduce that:
$$j \leq \log_4{(\frac{9}{4}N-\frac{5}{4})}$$
hence,
$$\mathcal L(3, N)=2j \leq 2 \log_4{(\frac{9}{4}N-\frac{5}{4})}$$
which gives  the result.\qed
\end{proof}

For KSPM(3), after a short transient part of logarithmic length, the hypotheses of Theorem \ref{theorem:peak} are verified, and the study of avalanches can be turned into a pseudo linear process. Bounds on the maximal non-empty column $size(D,N)$ of a fixed point with $N$ grains shows that $size(D,N)$ is in $\Omega(\sqrt{N})$ when $D$ is constant. Indeed any height difference $\pi(N)_i$ verifies $0 \leq \pi(N)_i < D$ and it is easy to prove that there is no plateau\footnote{a {\em plateau} is a set of consecutive columns with the same height.} of length greater than $D$ columns, therefore $size(D,N)$ is lower bounded by $\sqrt{N/D}$ and upper bounded by $\sqrt{D N}$. As a consequence, the pseudo local process stands for the asymptotically complete behavior of avalanches.

Unfortunately, the approach above does not hold for $D >3$.  The main reason is that, for $D = 3$ unfired columns induce a very particular and {\em periodic} prefix ($2(02)^j$) on configurations. From $D = 4$, the structure of such a possible prefix is more complex and we did not yet get a tractable characterization of it. 


\section{Transduction}\label{s:transduction}

In this section we study the temporal regularities ---between successive avalanches--- that can be derived from the density of avalanches. Therefore, the developments below hold starting from the global density column $\mathcal L(D,N)$.

Firstly we study the similarities between two successive avalanches in subsection \ref{ss:successive}. We have seen in the previous section that when the $k^{th}$ avalanche has been performed, a large part of the fixed point $\pi(k)$ is equal to $\pi(k-1)$, hinting a similarity between $s^k$ and the next avalanche $s^{k+1}$. The similarity is stated in Lemma \ref{lemma:similar} by the mean of peaks: on a large part of the configuration, the peaks of $s^k$ and $s^{k+1}$ are exactly the same.

Secondly, we use the equality between successive sequences of peaks (a kind of temporal stability of the peaks) and combine it with our description of the avalanche process (Theorem \ref{theorem:peak}) to construct a finite state word transducer in subsection \ref{ss:transducer}. Given as an input a particular sequence of peaks: the greatest peaks between columns $i(D-1)$ and $(i+1)(D-1)$ for each of the $N$ first avalanches, an iteration of the transducer outputs the greatest peaks between columns $(i+1)(D-1)$ and $(i+2)(D-1)$ for each of the same $N$ first avalanches. A second iteration of the transducer outputs the greatest peaks between columns $(i+2)(D-1)$ and $(i+3)(D-1)$ for each of the $N$ first avalanches again. The {\em trace} up to $N$ on $I_i=\llbracket i(D-1);(i+1)(D-1)-1\rrbracket$ will be formally defined as the sequence of greatest peaks between columns $i(D-1)$ and $(i+1)(D-1)$ for each of the $N$ first avalanches (a subtlety to be explained later is that a peak is counted once during its ``lifetime'' {\em i.e.}, as long as it persists from an avalanche to the next one). Note that there may be more than one peak within columns of the interval $I_i$,  but at most one peak of maximal value (the value of a peak is of course its index), which we name ``greatest peak''. We can therefore write that given the trace up to $N$ on $I_i$, the transducer outputs the trace up to $N$ on $I_{i+1}$.

Note that the transducer does not perform complex computation, because given $N$ one can very easily compute the $N$ first avalanches by starting from the empty configuration, adding a grain and performing the first avalanche, adding a grain and performing the second avalanche, and so on. This computation is done temporally: we compute $s^1$, then $s^2$, then $s^3$, and so on. The interest of the transducer is that it embeds our knowledge on the regularity of avalanches and performs the computation spatially instead of temporally. Indeed, it computes the trace up to $N$ on $I_i$, then on $I_{i+1}$, then on $I_{i+2}$, and so on, each of the $I_j$ representing a spatial ``slice'' of avalanches, as depicted on figure \ref{fig:av}.  Now, the point is that if we know the trace up to $N$ on $I_j$, we are able to compute $\pi(N)_k$ for $k > (j+1)(D-1)$.

\begin{figure}
\begin{center}
\input{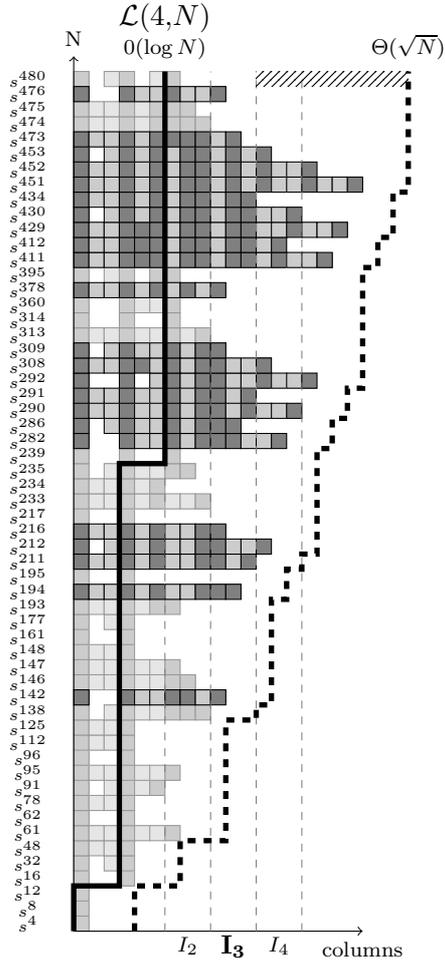}
\end{center}
\caption{D=4. Avalanches reaching $\mathcal L(D,N)$ up to 500, one per line. Since we exploit the density of every avalanche, an avalanche is meaningful for the study of the fixed point $\pi(500)$ if it goes beyond the global density column $\mathcal L(4,500)$ (and $D-1=3$ columns further for technical reason). Non-meaningful avalanches are shadowed. The black line is the global density column $\mathcal L(4,N)$, and the dashed line is the maximal non-empty column. A light grey square is a fired column, a dark grey square is a peak. The trace up to $500$ on $I_3$ is $0120120210$. From this trace, we can compute $\pi(500)_k$ for $k>12$ (12 is obtained as $(3+1)(D-1)$).\\
Each avalanche reaches a fixed point, so to every line $N$ representing the avalanche $s^N$ on the figure corresponds the fixed point $\pi(N)$, and to each position at ordinate $N$ and abscissa $k$ on the figure corresponds a difference of height $\pi(N)_k$. Following this convention, non empty columns of $\pi(500)_k$ for $k>12$ is hatched on the top right of the figure.\\
A precise study of the transducer for $D=3$ will lead to the prediction of a portion of fixed points (analogous to the one striped in this figure) which tends to 1 (subsection \ref{ss:D=3}).\\
{\scriptsize For technical reasons, the trace up to $N$ will be defined on slices on the left of column $\mathcal L(D,N)+2(D-1)$, therefore the trace up to 500 is formally not defined on $I_3$. We nevertheless use this example for its meaningfulness over size ratio.}}
\label{fig:av}
\end{figure}

The transducer depends on the parameter $D$ of the model, as explained in the construction at the end of Lemma \ref{lemma:transducer}. For $D=3$ and $N$ grains, we prove in subsection \ref{ss:waves} (Corollary \ref{corollary:decrease}) that for any trace up to $N$ on $I_i$, there exists an index $n$ in $O(\log N)$ such that the trace up to $N$ on $I_{i+n}$ is periodic. The proof uses only the transducer constructed for $D=3$. Hence, we can take for $i$ the smallest integer such that the transducer construction holds {\em i.e.}, $I_i$ is the first ``slice'' on the right of the global density column $\mathcal L(3,N)$ (plus 1 ``slice''  for technical reason). Therefore $i$ is in $O(\log N)$ as proved in Proposition \ref{lemma:meta2}, and the periodic trace up to $N$ emerges on $I_{i+n}$ with both $i$ and $n$ in $O(\log N)$. This study is concluded by Proposition \ref{proposition:wave}, which states that when we compute $\pi(N)_k$ for $k > (i+n+1)(D-1)$ from a regular trace, it presents a regular wavy shape. We can conclude for $D=3$ that the wave pattern emerges from a logarithmic column in the number of grain $N$, thus the fixed points are asymptotically completely wavy. 

%
%

\subsection{Successive avalanches}\label{ss:successive}

Toward the construction of the finite state word transducer, we start with a formal description of the intuitive implication of Theorem \ref{theorem:peak} that  two successive avalanches share a significant part of their peaks. Indeed, when the $k^{th}$ avalanche is dense from $l$ to $m$, for each column $i \in \llbracket l + D-1 ; m \llbracket$, columns $i-D+1$, $i$ and $i+1$ are fired within the $k^{th}$ avalanche. Therefore they both receive and give $D$ units of height difference, and $\pi(k)_i=\pi(k-1)^{\downarrow 0}_i=\pi(k-1)_i$ (figure \ref{fig:6-1068-1069}). Moreover, $\pi(k)_j = \pi(k-1)_j = 0$ for $j > m+D-1$. An intuitive consequence is that two consecutive avalanches are similar. This intuition is formally stated in this section.

Recall that $s^k$ denotes the $k^{th}$ avalanche of KSPM($D$). We define $\Phi(D,N)=(\phi^1,\dots,\phi^n)$, the subsequence of $(s^1,\dots,s^N)$ reaching column $\mathcal L(D,N)+D-1$. Formally, $s^k \in \Phi(D,N) \iff \mathcal L(D,N)+D-1 \in s^k$. $\Phi(D,N)$ is called the {\em sequence of long avalanches up to $N$} of KSPM($D$). An example is given on figure \ref{fig:long}.

\begin{figure}[!h]
  \begin{center}
    \begin{tikzpicture}[scale=.4]
  \node at (-1.3,0*.5+.25) {\tiny $s^{4}$};
  \node at (-1.3,1*.5+.25) {\tiny $s^{8}$};
  \node at (-1.3,2*.5+.25) {\tiny $s^{12}$};
  \node at (-1.3,3*.5+.25) {\tiny $s^{16}$};
  \node at (-1.3,4*.5+.25) {\tiny $s^{32}$};
  \node at (-1.3,5*.5+.25) {\tiny $s^{48}$};
  \node at (-1.3,6*.5+.25) {\tiny $s^{61}$};
  \node at (-1.3,7*.5+.25) {\tiny $s^{62}$};
  \node at (-1.3,8*.5+.25) {\tiny $s^{78}$};
  \node at (-1.3,9*.5+.25) {\tiny $s^{91}$};
  \node at (-1.3,10*.5+.25) {\tiny $s^{95}$};
  \node at (-1.3,11*.5+.25) {\tiny $s^{96}$};
  \node at (-1.3,12*.5+.25) {\tiny $s^{112}$};
  \node at (-1.3,13*.5+.25) {\tiny $s^{125}$};
  \node at (-1.3,14*.5+.25) {\tiny $s^{138}$};
  \node at (-1.3,15*.5+.25) {\tiny $s^{142}$};
  \node at (-1.3,16*.5+.25) {\tiny $s^{146}$};
  \node at (-1.3,17*.5+.25) {\tiny $s^{147}$};
  \node at (-1.3,18*.5+.25) {\tiny $s^{148}$};
  \node at (-1.3,19*.5+.25) {\tiny $s^{161}$};
  \node at (-1.3,20*.5+.25) {\tiny $s^{177}$};
  \node at (-1.3,21*.5+.25) {\tiny $s^{193}$};
  \node at (-1.3,22*.5+.25) {\tiny $s^{194}$};
  \node at (-1.3,23*.5+.25) {\tiny $s^{195}$};
  \filldraw[fill=black!50] (0*.5,0*.5) rectangle ++ (.5,.5);
  \filldraw[fill=black!50] (0*.5,1*.5) rectangle ++ (.5,.5);
  \filldraw[fill=black!50] (0*.5,2*.5) rectangle ++ (.5,.5);
  \filldraw[fill=black!50] (0*.5,3*.5) rectangle ++ (.5,.5);
  \filldraw[fill=black!50] (3*.5,3*.5) rectangle ++ (.5,.5);
  \filldraw[fill=black!50] (0*.5,4*.5) rectangle ++ (.5,.5);
  \filldraw[fill=black!20] (2*.5,4*.5) rectangle ++ (.5,.5);
  \filldraw[fill=black!50] (3*.5,4*.5) rectangle ++ (.5,.5);
  \filldraw[fill=black!50] (0*.5,5*.5) rectangle ++ (.5,.5);
  \filldraw[fill=black!20] (1*.5,5*.5) rectangle ++ (.5,.5);
  \filldraw[fill=black!20] (2*.5,5*.5) rectangle ++ (.5,.5);
  \filldraw[fill=black!50] (3*.5,5*.5) rectangle ++ (.5,.5);
  \filldraw[fill=black!50] (0*.5,6*.5) rectangle ++ (.5,.5);
  \filldraw[fill=black!20] (1*.5,6*.5) rectangle ++ (.5,.5);
  \filldraw[fill=black!20] (2*.5,6*.5) rectangle ++ (.5,.5);
  \filldraw[fill=black!50] (3*.5,6*.5) rectangle ++ (.5,.5);
  \filldraw[fill=black!20] (4*.5,6*.5) rectangle ++ (.5,.5);
  \filldraw[fill=black!20] (5*.5,6*.5) rectangle ++ (.5,.5);
  \filldraw[fill=black!50] (6*.5,6*.5) rectangle ++ (.5,.5);
  \filldraw[fill=black!50] (0*.5,7*.5) rectangle ++ (.5,.5);
  \filldraw[fill=black!50] (3*.5,7*.5) rectangle ++ (.5,.5);
  \filldraw[fill=black!50] (0*.5,8*.5) rectangle ++ (.5,.5);
  \filldraw[fill=black!20] (1*.5,8*.5) rectangle ++ (.5,.5);
  \filldraw[fill=black!20] (2*.5,8*.5) rectangle ++ (.5,.5);
  \filldraw[fill=black!50] (3*.5,8*.5) rectangle ++ (.5,.5);
  \filldraw[fill=black!50] (0*.5,9*.5) rectangle ++ (.5,.5);
  \filldraw[fill=black!20] (2*.5,9*.5) rectangle ++ (.5,.5);
  \filldraw[fill=black!50] (3*.5,9*.5) rectangle ++ (.5,.5);
  \filldraw[fill=black!20] (4*.5,9*.5) rectangle ++ (.5,.5);
  \filldraw[fill=black!50] (5*.5,9*.5) rectangle ++ (.5,.5);
  \filldraw[fill=black!50] (0*.5,10*.5) rectangle ++ (.5,.5);
  \filldraw[fill=black!20] (1*.5,10*.5) rectangle ++ (.5,.5);
  \filldraw[fill=black!20] (2*.5,10*.5) rectangle ++ (.5,.5);
  \filldraw[fill=black!50] (3*.5,10*.5) rectangle ++ (.5,.5);
  \filldraw[fill=black!20] (4*.5,10*.5) rectangle ++ (.5,.5);
  \filldraw[fill=black!20] (5*.5,10*.5) rectangle ++ (.5,.5);
  \filldraw[fill=black!50] (6*.5,10*.5) rectangle ++ (.5,.5);
  \filldraw[fill=black!50] (0*.5,11*.5) rectangle ++ (.5,.5);
  \filldraw[fill=black!50] (3*.5,11*.5) rectangle ++ (.5,.5);
  \filldraw[fill=black!50] (0*.5,12*.5) rectangle ++ (.5,.5);
  \filldraw[fill=black!20] (1*.5,12*.5) rectangle ++ (.5,.5);
  \filldraw[fill=black!20] (2*.5,12*.5) rectangle ++ (.5,.5);
  \filldraw[fill=black!50] (3*.5,12*.5) rectangle ++ (.5,.5);
  \filldraw[fill=black!50] (0*.5,13*.5) rectangle ++ (.5,.5);
  \filldraw[fill=black!20] (1*.5,13*.5) rectangle ++ (.5,.5);
  \filldraw[fill=black!20] (2*.5,13*.5) rectangle ++ (.5,.5);
  \filldraw[fill=black!50] (3*.5,13*.5) rectangle ++ (.5,.5);
  \filldraw[fill=black!50] (0*.5,14*.5) rectangle ++ (.5,.5);
  \filldraw[fill=black!20] (2*.5,14*.5) rectangle ++ (.5,.5);
  \filldraw[fill=black!50] (3*.5,14*.5) rectangle ++ (.5,.5);
  \filldraw[fill=black!20] (4*.5,14*.5) rectangle ++ (.5,.5);
  \filldraw[fill=black!20] (5*.5,14*.5) rectangle ++ (.5,.5);
  \filldraw[fill=black!50] (6*.5,14*.5) rectangle ++ (.5,.5);
  \filldraw[fill=black!50] (7*.5,14*.5) rectangle ++ (.5,.5);
  \filldraw[fill=black!50] (8*.5,14*.5) rectangle ++ (.5,.5);
  \filldraw[fill=black!50] (0*.5,15*.5) rectangle ++ (.5,.5);
  \filldraw[fill=black!50] (3*.5,15*.5) rectangle ++ (.5,.5);
  \filldraw[fill=black!20] (4*.5,15*.5) rectangle ++ (.5,.5);
  \filldraw[fill=black!20] (5*.5,15*.5) rectangle ++ (.5,.5);
  \filldraw[fill=black!50] (6*.5,15*.5) rectangle ++ (.5,.5);
  \filldraw[fill=black!50] (7*.5,15*.5) rectangle ++ (.5,.5);
  \filldraw[fill=black!20] (8*.5,15*.5) rectangle ++ (.5,.5);
  \filldraw[fill=black!50] (9*.5,15*.5) rectangle ++ (.5,.5);
  \filldraw[fill=black!50] (0*.5,16*.5) rectangle ++ (.5,.5);
  \filldraw[fill=black!20] (1*.5,16*.5) rectangle ++ (.5,.5);
  \filldraw[fill=black!20] (2*.5,16*.5) rectangle ++ (.5,.5);
  \filldraw[fill=black!50] (3*.5,16*.5) rectangle ++ (.5,.5);
  \filldraw[fill=black!20] (4*.5,16*.5) rectangle ++ (.5,.5);
  \filldraw[fill=black!20] (5*.5,16*.5) rectangle ++ (.5,.5);
  \filldraw[fill=black!50] (6*.5,16*.5) rectangle ++ (.5,.5);
  \filldraw[fill=black!50] (7*.5,16*.5) rectangle ++ (.5,.5);
  \filldraw[fill=black!50] (0*.5,17*.5) rectangle ++ (.5,.5);
  \filldraw[fill=black!20] (1*.5,17*.5) rectangle ++ (.5,.5);
  \filldraw[fill=black!20] (2*.5,17*.5) rectangle ++ (.5,.5);
  \filldraw[fill=black!50] (3*.5,17*.5) rectangle ++ (.5,.5);
  \filldraw[fill=black!20] (4*.5,17*.5) rectangle ++ (.5,.5);
  \filldraw[fill=black!20] (5*.5,17*.5) rectangle ++ (.5,.5);
  \filldraw[fill=black!50] (6*.5,17*.5) rectangle ++ (.5,.5);
  \filldraw[fill=black!50] (0*.5,18*.5) rectangle ++ (.5,.5);
  \filldraw[fill=black!20] (1*.5,18*.5) rectangle ++ (.5,.5);
  \filldraw[fill=black!20] (2*.5,18*.5) rectangle ++ (.5,.5);
  \filldraw[fill=black!50] (3*.5,18*.5) rectangle ++ (.5,.5);
  \filldraw[fill=black!50] (0*.5,19*.5) rectangle ++ (.5,.5);
  \filldraw[fill=black!50] (3*.5,19*.5) rectangle ++ (.5,.5);
  \filldraw[fill=black!50] (0*.5,20*.5) rectangle ++ (.5,.5);
  \filldraw[fill=black!20] (2*.5,20*.5) rectangle ++ (.5,.5);
  \filldraw[fill=black!50] (3*.5,20*.5) rectangle ++ (.5,.5);
  \filldraw[fill=black!50] (0*.5,21*.5) rectangle ++ (.5,.5);
  \filldraw[fill=black!20] (1*.5,21*.5) rectangle ++ (.5,.5);
  \filldraw[fill=black!20] (2*.5,21*.5) rectangle ++ (.5,.5);
  \filldraw[fill=black!50] (3*.5,21*.5) rectangle ++ (.5,.5);
  \filldraw[fill=black!20] (4*.5,21*.5) rectangle ++ (.5,.5);
  \filldraw[fill=black!50] (5*.5,21*.5) rectangle ++ (.5,.5);
  \filldraw[fill=black!50] (6*.5,21*.5) rectangle ++ (.5,.5);
  \filldraw[fill=black!50] (0*.5,22*.5) rectangle ++ (.5,.5);
  \filldraw[fill=black!20] (1*.5,22*.5) rectangle ++ (.5,.5);
  \filldraw[fill=black!20] (2*.5,22*.5) rectangle ++ (.5,.5);
  \filldraw[fill=black!50] (3*.5,22*.5) rectangle ++ (.5,.5);
  \filldraw[fill=black!20] (4*.5,22*.5) rectangle ++ (.5,.5);
  \filldraw[fill=black!50] (5*.5,22*.5) rectangle ++ (.5,.5);
  \filldraw[fill=black!20] (6*.5,22*.5) rectangle ++ (.5,.5);
  \filldraw[fill=black!20] (7*.5,22*.5) rectangle ++ (.5,.5);
  \filldraw[fill=black!50] (8*.5,22*.5) rectangle ++ (.5,.5);
  \filldraw[fill=black!50] (9*.5,22*.5) rectangle ++ (.5,.5);
  \filldraw[fill=black!50] (10*.5,22*.5) rectangle ++ (.5,.5);
  \filldraw[fill=black!50] (0*.5,23*.5) rectangle ++ (.5,.5);
  \filldraw[fill=black!50] (3*.5,23*.5) rectangle ++ (.5,.5);
  \draw[->] (0,0) -- ++ (4,0) node [right] {\scriptsize colums};
  \draw[->] (0,0) -- ++ (0,12.5) node [above] {\scriptsize $N$};
  \draw[line width=2pt] (3*.5,0) -- ++ (0,24*.5) node [above] {\scriptsize $\mathcal L(4,\!195)$};
  \draw[dashed] (6*.5,0) -- ++ (0,24*.5);
  \draw[<->] (3*.5,-.2) -- node [below] {\tiny $D\!-\!1$} ++ (3*.5,0);
\end{tikzpicture}
    \hspace{1cm}
    \begin{tikzpicture}[scale=.4]
  \node at (-1.3,6*.5+.25) {\tiny $\phi^1$};
  \node at (-1.3,10*.5+.25) {\tiny $\phi^2$};
  \node at (-1.3,14*.5+.25) {\tiny $\phi^3$};
  \node at (-1.3,15*.5+.25) {\tiny $\phi^4$};
  \node at (-1.3,16*.5+.25) {\tiny $\phi^5$};
  \node at (-1.3,17*.5+.25) {\tiny $\phi^6$};
  \node at (-1.3,21*.5+.25) {\tiny $\phi^7$};
  \node at (-1.3,22*.5+.25) {\tiny $\phi^8$};
  \filldraw[fill=black!50] (0*.5,6*.5) rectangle ++ (.5,.5);
  \filldraw[fill=black!20] (1*.5,6*.5) rectangle ++ (.5,.5);
  \filldraw[fill=black!20] (2*.5,6*.5) rectangle ++ (.5,.5);
  \filldraw[fill=black!50] (3*.5,6*.5) rectangle ++ (.5,.5);
  \filldraw[fill=black!20] (4*.5,6*.5) rectangle ++ (.5,.5);
  \filldraw[fill=black!20] (5*.5,6*.5) rectangle ++ (.5,.5);
  \filldraw[fill=black!50] (6*.5,6*.5) rectangle ++ (.5,.5);
  \filldraw[fill=black!50] (0*.5,10*.5) rectangle ++ (.5,.5);
  \filldraw[fill=black!20] (1*.5,10*.5) rectangle ++ (.5,.5);
  \filldraw[fill=black!20] (2*.5,10*.5) rectangle ++ (.5,.5);
  \filldraw[fill=black!50] (3*.5,10*.5) rectangle ++ (.5,.5);
  \filldraw[fill=black!20] (4*.5,10*.5) rectangle ++ (.5,.5);
  \filldraw[fill=black!20] (5*.5,10*.5) rectangle ++ (.5,.5);
  \filldraw[fill=black!50] (6*.5,10*.5) rectangle ++ (.5,.5);
  \filldraw[fill=black!50] (0*.5,14*.5) rectangle ++ (.5,.5);
  \filldraw[fill=black!20] (2*.5,14*.5) rectangle ++ (.5,.5);
  \filldraw[fill=black!50] (3*.5,14*.5) rectangle ++ (.5,.5);
  \filldraw[fill=black!20] (4*.5,14*.5) rectangle ++ (.5,.5);
  \filldraw[fill=black!20] (5*.5,14*.5) rectangle ++ (.5,.5);
  \filldraw[fill=black!50] (6*.5,14*.5) rectangle ++ (.5,.5);
  \filldraw[fill=black!50] (7*.5,14*.5) rectangle ++ (.5,.5);
  \filldraw[fill=black!50] (8*.5,14*.5) rectangle ++ (.5,.5);
  \filldraw[fill=black!50] (0*.5,15*.5) rectangle ++ (.5,.5);
  \filldraw[fill=black!50] (3*.5,15*.5) rectangle ++ (.5,.5);
  \filldraw[fill=black!20] (4*.5,15*.5) rectangle ++ (.5,.5);
  \filldraw[fill=black!20] (5*.5,15*.5) rectangle ++ (.5,.5);
  \filldraw[fill=black!50] (6*.5,15*.5) rectangle ++ (.5,.5);
  \filldraw[fill=black!50] (7*.5,15*.5) rectangle ++ (.5,.5);
  \filldraw[fill=black!20] (8*.5,15*.5) rectangle ++ (.5,.5);
  \filldraw[fill=black!50] (9*.5,15*.5) rectangle ++ (.5,.5);
  \filldraw[fill=black!50] (0*.5,16*.5) rectangle ++ (.5,.5);
  \filldraw[fill=black!20] (1*.5,16*.5) rectangle ++ (.5,.5);
  \filldraw[fill=black!20] (2*.5,16*.5) rectangle ++ (.5,.5);
  \filldraw[fill=black!50] (3*.5,16*.5) rectangle ++ (.5,.5);
  \filldraw[fill=black!20] (4*.5,16*.5) rectangle ++ (.5,.5);
  \filldraw[fill=black!20] (5*.5,16*.5) rectangle ++ (.5,.5);
  \filldraw[fill=black!50] (6*.5,16*.5) rectangle ++ (.5,.5);
  \filldraw[fill=black!50] (7*.5,16*.5) rectangle ++ (.5,.5);
  \filldraw[fill=black!50] (0*.5,17*.5) rectangle ++ (.5,.5);
  \filldraw[fill=black!20] (1*.5,17*.5) rectangle ++ (.5,.5);
  \filldraw[fill=black!20] (2*.5,17*.5) rectangle ++ (.5,.5);
  \filldraw[fill=black!50] (3*.5,17*.5) rectangle ++ (.5,.5);
  \filldraw[fill=black!20] (4*.5,17*.5) rectangle ++ (.5,.5);
  \filldraw[fill=black!20] (5*.5,17*.5) rectangle ++ (.5,.5);
  \filldraw[fill=black!50] (6*.5,17*.5) rectangle ++ (.5,.5);
  \filldraw[fill=black!50] (0*.5,21*.5) rectangle ++ (.5,.5);
  \filldraw[fill=black!20] (1*.5,21*.5) rectangle ++ (.5,.5);
  \filldraw[fill=black!20] (2*.5,21*.5) rectangle ++ (.5,.5);
  \filldraw[fill=black!50] (3*.5,21*.5) rectangle ++ (.5,.5);
  \filldraw[fill=black!20] (4*.5,21*.5) rectangle ++ (.5,.5);
  \filldraw[fill=black!50] (5*.5,21*.5) rectangle ++ (.5,.5);
  \filldraw[fill=black!50] (6*.5,21*.5) rectangle ++ (.5,.5);
  \filldraw[fill=black!50] (0*.5,22*.5) rectangle ++ (.5,.5);
  \filldraw[fill=black!20] (1*.5,22*.5) rectangle ++ (.5,.5);
  \filldraw[fill=black!20] (2*.5,22*.5) rectangle ++ (.5,.5);
  \filldraw[fill=black!50] (3*.5,22*.5) rectangle ++ (.5,.5);
  \filldraw[fill=black!20] (4*.5,22*.5) rectangle ++ (.5,.5);
  \filldraw[fill=black!50] (5*.5,22*.5) rectangle ++ (.5,.5);
  \filldraw[fill=black!20] (6*.5,22*.5) rectangle ++ (.5,.5);
  \filldraw[fill=black!20] (7*.5,22*.5) rectangle ++ (.5,.5);
  \filldraw[fill=black!50] (8*.5,22*.5) rectangle ++ (.5,.5);
  \filldraw[fill=black!50] (9*.5,22*.5) rectangle ++ (.5,.5);
  \filldraw[fill=black!50] (10*.5,22*.5) rectangle ++ (.5,.5);
  \draw[->] (0,0) -- ++ (4,0) node [right] {\scriptsize colums};
  \draw[->] (0,0) -- ++ (0,12.5) node [above] {\scriptsize $N$};
  \draw[line width=2pt] (3*.5,0) -- ++ (0,24*.5) node [above] {\scriptsize $\mathcal L(4,\!195)$};
  \draw[dashed] (6*.5,0) -- ++ (0,24*.5);
  \draw[<->] (3*.5,-.2) -- node [below] {\tiny $D\!-\!1$} ++ (3*.5,0);
\end{tikzpicture}
  \end{center}
  \caption{$D=4$. The 195 first avalanches are depicted on the left, only the long avalanches up to 195 are depicted on the right --- $s^k \in \Phi(4,195) \iff \mathcal L(4,195)+D-1 \in s^k$. The black line is the global density column $\mathcal L(4,N)$. A light grey square is a fired column, a dark grey square is a peak.}
  \label{fig:long}
\end{figure}
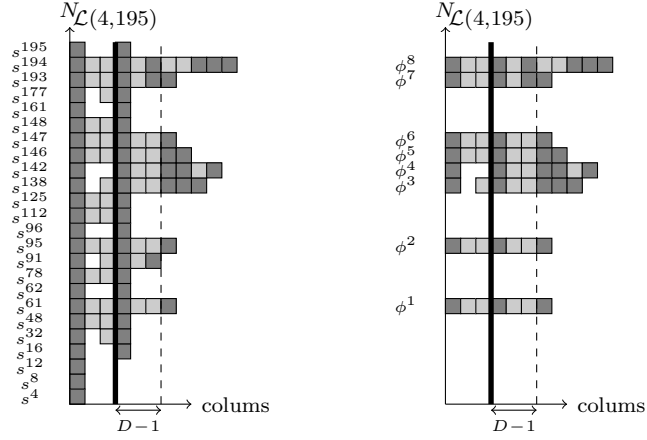

 We also define the corresponding sequence $(\mu^0, \mu^1, ....., \mu^n$) of fixed points such that $\mu^0 = \pi(0)  = 0^{\omega}$, and for each integer $k$, $\mu^k=\pi(m)$, where m satisfies $\phi^k=s^m$.
 
The definition of long avalanche is motivated by Theorem \ref{theorem:peak} which states that the avalanche process is understandable in simple terms on the right of the global density column, therefore we concentrate on firings on the right of the global density column and thus consider only avalanches reaching that part of the configuration.

\begin{remark}\label{remark:easy}
Remark that in KSPM($D$), if $s^k$ is a long avalanche up to $N$, whose sequence of peaks is denoted by $P^k$ (the largest peak being $\max P^k$), from Theorem \ref{theorem:peak} we have (figure \ref{fig:6-1068-1069}): 
\begin{itemize}
\item $\pi(k)_i = \pi(k-1)_i $ for $\mathcal L(D,N)+D-1 \leq i <\max P^k$;
\item $\pi(k)_{\max P^k} = \pi(k-1)_{\max P^k} -D+1 =    0$;
\item $\pi(k)_i =  \pi(k-1)_i +1 $ for $\max P^k <  i  \leq  \max P^k +D -1$;
\item $\pi(k)_i = \pi(k-1)_i $ for $ i > \max P^k +D -1$.  
\end{itemize}
\end{remark}

This comes from a clear application of transition rules (for each considered column $i$ we know which columns of the set $\{i-D+1, i , i+1\}$ are fired in $s^k$, so we can compute $\pi(k)_i$ from $\pi(k-1)$). In other words, The knowledge of $\max P^k$ allows to compute $\pi(k)$ from $\pi(k-1)$ straightforwardly.

\begin{lemma}\label{lemma:similar}
  In KSPM($D$), let $L$ be the global density column of $N$, and $\Phi=(\phi^1,\dots,\phi^n)$ its sequence of long avalanches up to $N$. Let $k < N$, and $P^k$ (resp. $P^{k+1}$) be the sequence of peaks $i$  of $\phi^k$ (resp. $\phi^{k+1}$) such that $i \geq L+2(D-1)$. We have: 
    $$P^k \setminus \{ \max P^k\}  ~~=~~ P^{k+1} \cap \llbracket L+2(D-1); \max P^k \llbracket.$$
\end{lemma}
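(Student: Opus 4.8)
The plan is to pin down both avalanches $\phi^k=s^m$ and $\phi^{k+1}=s^{m'}$ (with $m<m'$) through the configurations that precede them, namely $\pi(m-1)$ and $\pi(m'-1)$, and to apply the peak characterization of Theorem~\ref{theorem:peak} to each. Write $L=\mathcal L(D,N)$. The first step is to control what happens between the two long avalanches: every avalanche $s^j$ with $m<j<m'$ is, by definition of $\Phi$, not long, so $L+D-1\notin s^j$; since $s^j$ is nonetheless dense from $L$ (definition of the global density column), its greatest fired column is at most $L+D-2$, and a firing at a column $c\le L+D-2$ can only modify columns up to $c+D-1\le L+2(D-1)-1$. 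I would record two consequences: first, $\pi(m'-1)_i=\pi(m)_i$ for every $i\ge L+2(D-1)$; second, on the window $\llbracket L+D-1;L+2(D-1)\llbracket$ each such avalanche adds exactly $+1$ (or nothing) to every column it reaches, so the values there can only increase, giving $\pi(m'-1)_i\ge\pi(m)_i$ on the window.

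Next I would feed in Remark~\ref{remark:easy} for $\phi^k=s^m$: for $L+D-1\le i<\max P^k$ we have $\pi(m)_i=\pi(m-1)_i$, while $\pi(m)_{\max P^k}=0$. Combining this with the previous step (and noting $\max P^k\ge L+2(D-1)$ since $P^k\neq\emptyset$, so $\max P^k$ is the greatest fired column of $\phi^k$) yields the three facts that drive everything: the configurations $\pi(m-1)$ and $\pi(m'-1)$ agree on $\llbracket L+2(D-1);\max P^k\llbracket$; on the window they satisfy $\pi(m'-1)\ge\pi(m-1)$; and $\pi(m'-1)_{\max P^k}=0\neq D-1$. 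The last fact already shows, via the seed condition of Theorem~\ref{theorem:peak}, that $\max P^k$ cannot be a peak of $\phi^{k+1}$, which is exactly why the right-hand side of the claim is a half-open interval.

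I would then translate peaks into the \emph{seed plus chaining} language of Theorem~\ref{theorem:peak}: a column $p\ge L+D-1$ is a peak iff $\pi(\cdot)_p=D-1$ and $p$ lies within $D-1$ of a previous peak, so peaks are produced by the greedy left-to-right chain described after the theorem. The backward inclusion is then cheap. Denoting the peaks of $\phi^k$ lying in $\llbracket L+2(D-1);\max P^k\llbracket$ by $p_1<\dots<p_{r-1}$, I would observe that since consecutive chain-peaks are at most $D-1$ apart and the chain always selects the lowest available seed, the seeds of $\phi^k$ in this range are \emph{exactly} these peaks: a seed strictly between two consecutive chain-peaks would contradict greediness, and a seed below $p_1$ would contradict the minimality of $p_1$. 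As the two configurations have the same seeds on this range, the seeds of $\phi^{k+1}$ there are again $\{p_1,\dots,p_{r-1}\}$, and peaks form a subset of seeds; hence $P^{k+1}\cap\llbracket L+2(D-1);\max P^k\llbracket\subseteq\{p_1,\dots,p_{r-1}\}$.

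The forward inclusion is the main obstacle, because the chain of $\phi^{k+1}$ must be shown to actually reach into the range $\ge L+2(D-1)$, even though $\pi(m'-1)$ and $\pi(m-1)$ may genuinely differ on the window $\llbracket L+D-1;L+2(D-1)\llbracket$. The tools I would assemble are the following. Both $\phi^k$ and $\phi^{k+1}$ are dense from $L$ and fire $L+D-1$, so both fire every column of $\llbracket L;L+D-2\rrbracket$, and since within the fired range consecutive peaks are at most $D-1$ apart, each avalanche has an \emph{anchor} peak inside $\llbracket L;L+D-2\rrbracket$. The seed-monotonicity $\pi(m'-1)\ge\pi(m-1)$ on the window means every seed of $\phi^k$ there is also a seed of $\phi^{k+1}$, and reachability by the greedy chain is monotone in the seed set, so $\phi^{k+1}$ reaches at least as far as $\phi^k$ inside the window; in particular it reaches the predecessor $p_1^-\in\llbracket p_1-(D-1);p_1\llbracket$ that $\phi^k$ used to enter the range. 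A short gap estimate ($p_1-p_1^-\le D-1$, together with the absence of seeds in $\llbracket L+2(D-1);p_1\llbracket$) then forces the chain of $\phi^{k+1}$ to step onto $p_1$, and from there the identical seed structure on $\llbracket L+2(D-1);\max P^k\llbracket$ makes it march through $p_1,\dots,p_{r-1}$ exactly as $\phi^k$ does. This gives $\{p_1,\dots,p_{r-1}\}\subseteq P^{k+1}$ and closes the equality. The delicate point throughout is this last anchoring argument: making precise that ``more seeds on the window can only push the chain further'' and that the anchor peaks let the two chains be compared even though the configurations differ below $L+2(D-1)$.
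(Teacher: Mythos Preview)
Your proposal is correct and follows essentially the same approach as the paper: establish that $\pi(m'-1)$ and $\pi(m-1)$ agree on $\llbracket L+2(D-1);\max P^k\llbracket$ while satisfying $\pi(m'-1)\ge\pi(m-1)$ on the window $\llbracket L+D-1;L+2(D-1)\llbracket$, then invoke Theorem~\ref{theorem:peak} to match the peaks. The only cosmetic difference is that the paper packages the anchoring step as a containment $Q^k\subseteq Q^{k+1}$ of peaks in the window, whereas you phrase it as monotonicity of the greedy chain's reach in the seed set; both formulations rest on the same seed-monotonicity observation and are at a comparable level of detail.
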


The lemma above can be seen as follows: $\vert P^{k+1} \vert \geq \vert P^{k} \vert -1$,  and the  $ \vert P^{k} \vert -1$ first elements of  $ P^{k+1}$ and $  P^{k}$ are equal. Informally, the  peak sequence can increase in arbitrary manner, but can decrease only peak after peak.

\begin{proof}
  Let $\kappa$, $\kappa'$ be two integers such that $\phi^k$ is the $\kappa^{th}$ avalanche, and $\phi^{k+1}$ is the $\kappa'^{th}$ avalanche. For each column $i$ such that $i \in  \llbracket L+D-1;  \max P^k  \llbracket $, we have $i-D+1,i,i+1 \in \phi^k$ and therefore $\pi(\kappa)_i = \pi(\kappa-1)_i$.
  
  By definition of long avalanches, any avalanche $s$ between $\phi^k$ and $\phi^{k+1}$ stops before $L+D-1$, {\em i.e.} for all $i \geq L+D-1$, $i \notin s$. Combining it with previous remark, we have for all $\kappa'' \in \llbracket \kappa ; \kappa' \llbracket$:
  \begin{eqnarray}
    \text{for all } i \in \llbracket L+D-1; L+2(D-1) \llbracket & , & \pi(\kappa'')_i \geq \pi(\kappa-1)_i \label{lemma:similar:eq1}\\
    \text{for all } i \in \llbracket L+2(D-1); \max P^k \llbracket & , & \pi(\kappa'')_i = \pi(\kappa-1)_i \label{lemma:similar:eq2}
  \end{eqnarray}
  because columns within interval $\llbracket L+D-1; L+2(D-1) \llbracket$ can gain height difference when a column within $\llbracket L; L+D-1 \llbracket$ is fired. This is in particular true for $\kappa'' = \kappa'-1$.
  We now study the $\kappa'^{th}$ avalanche $\phi^{k+1}$. From relation \eqref{lemma:similar:eq1} and since $\pi(\kappa'-1)$ is a fixed point, for all $i \in \llbracket L+D-1; L+2(D-1) \llbracket,~ \pi(\kappa-1)_i = D-1 \Rightarrow \pi(\kappa'-1)_i = D-1$. Let $Q^k$ (resp. $Q^{k+1}$) be the sequence of peaks $i$ of $\phi^k$ (resp. $\phi^{k+1}$) such that  $i \in \llbracket L+D-1; L+2(D-1) \llbracket$,
  using Theorem \ref{theorem:peak} we therefore get
  \begin{eqnarray}
Q^{k}~~\subseteq~~ Q^{k+1}\label{lemma:similar:subset}
  \end{eqnarray}
  Let $I=\llbracket L+2(D-1); \max P^k \llbracket$. From relation \eqref{lemma:similar:eq2}
  \begin{eqnarray}
    \text{for all } i \in I ,~ \pi(\kappa-1)_i = D-1 \iff \pi(\kappa'-1)_i = D-1 \label{lemma:similar:eq3}
  \end{eqnarray}
  We now eventually prove the conclusion of the lemma. Let $p_{\underline i} = \min \{ i \in P^k \}$, from Theorem \ref{theorem:peak} we equivalently have $p_{\underline i} = \min \{ i' \in I | \pi(\kappa-1)_i=D-1 \}$ (the existence of $p_{\underline i}$ is a hypothesis of the lemma). Let $p'_{\underline i'} = \min \{  i' \in P^{k+1}¬¨‚Ä†\} = \min \{ i' \in I | \pi(\kappa'-1)_{i'} = D-1 \}$ (the existence of $p'_{\underline i'}$ is given by subset relation (\ref{lemma:similar:subset})), using relation (\ref{lemma:similar:eq3}) we have $p'_{\underline i'} = p_{\underline i}$.\\
  Other peaks within $I$ are obviously equal from Theorem \ref{theorem:peak} and relation (\ref{lemma:similar:eq2}) with $\kappa'' = \kappa'-1$.\qed
\end{proof}

%
%

\subsection{Transducer}\label{ss:transducer}

We now exploit the similarity between successive avalanches. Intuitively, we will cut configurations into {\em intervals} $I_0,I_1,I_2,\dots$ of size $D-1$, describe what happens in each interval via the notion of {\em trace}, and relate the trace on $I_i$ to the trace on $I_{i+1}$ with a {\em finite state word transducer}.

The  {\em interval} $I_i$ is the column sequence $(i(D-1), i(D-1)+1,\dots,i(D-1)+D-2))$. We call {\em state} of an interval $I_i$ of a fixed point $\pi$ its value $(\pi_{i(D-1)}, \pi_{i(D-1)+1},\dots, \pi_{i(D-1)+D-2})$. Hence, each interval state is an element of the set  $\mathcal S = \{0,1,\dots, D-1\}^{D-1}$.  An obvious, nevertheless important remark is that in $\pi(0)$ the state of any interval is $(0,0,\dots,0)$. It will enable us to compute a fixed point from a trace. For convenience, we also write $00\dots0$ states and sequences when there is no ambiguity.
 
The formal definition of the {\em trace} up to $N$ on an interval $I_i$ is more involved. Wordily, it is the ordered sequence of relative positions of the largest peaks $p$ inside $I_i$ for the $N$ first long avalanches, where each peak is considered once while it stays from a long avalanche to the next one (this occurs when an avalanche performs firings strictly beyond $p$), see figure \ref{fig:av}. We formally define the trace in two steps: first the relative position of the largest peak inside $I_i$ ---the {\em type}--- then how we construct the trace by considering a peak only once.
 
We fix $i$ such that $i(D-1) \geq \mathcal L(D,N)+2(D-1)$ and consider the interval $I_i$. The maximal peak $j$ of $\phi^k$, such that $j < (i+1)(D-1)$, is denoted by $p(i, k)$. 
 The {\em type} $\alpha (i, k) $ of the long avalanche $\phi^k$ on $I_i$ is defined as: 
 \begin{itemize}
  \item $\alpha (i, k)  = p(i,k) \mod [D-1]$ if $p(i, k) \in  I_{i}$;
  \item $\alpha (i, k) = \epsilon$ if $p(i, k) \notin I_{i}$. 
\end{itemize}

Therefore, possible types are $\epsilon, 0, 1, \dots, D-2$. We formally consider types as words of length at most 1 over the alphabet $\mathcal T=\{0,1, \dots, D-2\}$, and $\epsilon$ as the empty word. So the set of types is $\mathcal T \cup \{\epsilon\}$. Nevertheless, it is more natural to think about types as letters.

Note that if  a long avalanche $\phi^k$ changes the state of $I_i$, then from Remark \ref{remark:easy} there necessarily exists a peak of $\phi^k$ in the interval $I_{i -1}$.

A {\em trace} is an element of $\mathcal T^*$. Intuitively, the trace up to $N$ on $I_i$ is a subsequence of the whole sequence of types $( \alpha(i,k) )_{k=1}^{k=N}$, where we keep only once the value in each consecutive sequence of equal values. For example the sequence $01121002112220$ should become $012102120$. Toward this definition we introduce some more notions given an interval $I_i$: type similarity between avalanches, subsequence of same type of avalanches, influent subsequence.

We say that two long avalanches $\phi^k$ and $\phi^{k'}$ are \emph{$i$-similar} if they have the same type on $I_i$ {\em i.e.}, if $\alpha(i,k)=\alpha(i,k')$. We can now divide the sequence $\Phi$ of long avalanches up to $N$ into maximal-length subsequences $(\phi^k, \phi^{k+1}, ...., \phi^{k''})$ such that, for each integer $k' \in \llbracket k ; k'' \llbracket$, $\phi^{k'}$ and $\phi^{k'+1}$  are $i$-similar. Such a subsequence is called an \emph{$i$-subsequence}. An $i$-subsequence is said of type $\alpha$ for $i$ when the type of each avalanche of the subsequence is $\alpha$. When  $\alpha$ is not the empty word $\epsilon$,  we say that the subsequence is \emph{$i$-influent}.

\begin{definition}
  The {\em trace} up to $N$ on $I_i$ is defined as the sequence of type of the $i$-influent-subsequences of $\Phi(D,N)$.
\end{definition}
Figure \ref{fig:trace} illustrates the definition of a trace.

\begin{figure}[!h]
  \begin{center}
    \begin{tikzpicture}[scale=.4]
  \filldraw[fill=black!50] (0*.5,0*.5) rectangle ++ (.5,.5);
  \filldraw[fill=black!50] (0*.5,1*.5) rectangle ++ (.5,.5);
  \filldraw[fill=black!50] (1*.5,1*.5) rectangle ++ (.5,.5);
  \filldraw[fill=black!50] (0*.5,2*.5) rectangle ++ (.5,.5);
  \filldraw[fill=black!20] (1*.5,2*.5) rectangle ++ (.5,.5);
  \filldraw[fill=black!50] (2*.5,2*.5) rectangle ++ (.5,.5);
  \filldraw[fill=black!50] (0*.5,3*.5) rectangle ++ (.5,.5);
  \filldraw[fill=black!20] (1*.5,3*.5) rectangle ++ (.5,.5);
  \filldraw[fill=black!20] (2*.5,3*.5) rectangle ++ (.5,.5);
  \filldraw[fill=black!50] (3*.5,3*.5) rectangle ++ (.5,.5);
  \filldraw[fill=black!50] (0*.5,4*.5) rectangle ++ (.5,.5);
  \filldraw[fill=black!50] (1*.5,4*.5) rectangle ++ (.5,.5);
  \filldraw[fill=black!20] (2*.5,4*.5) rectangle ++ (.5,.5);
  \filldraw[fill=black!20] (3*.5,4*.5) rectangle ++ (.5,.5);
  \filldraw[fill=black!50] (4*.5,4*.5) rectangle ++ (.5,.5);
  \filldraw[fill=black!50] (0*.5,5*.5) rectangle ++ (.5,.5);
  \filldraw[fill=black!50] (1*.5,5*.5) rectangle ++ (.5,.5);
  \filldraw[fill=black!50] (0*.5,6*.5) rectangle ++ (.5,.5);
  \filldraw[fill=black!20] (1*.5,6*.5) rectangle ++ (.5,.5);
  \filldraw[fill=black!50] (2*.5,6*.5) rectangle ++ (.5,.5);
  \filldraw[fill=black!20] (3*.5,6*.5) rectangle ++ (.5,.5);
  \filldraw[fill=black!20] (4*.5,6*.5) rectangle ++ (.5,.5);
  \filldraw[fill=black!50] (5*.5,6*.5) rectangle ++ (.5,.5);
  \filldraw[fill=black!50] (0*.5,7*.5) rectangle ++ (.5,.5);
  \filldraw[fill=black!20] (1*.5,7*.5) rectangle ++ (.5,.5);
  \filldraw[fill=black!50] (2*.5,7*.5) rectangle ++ (.5,.5);
  \filldraw[fill=black!50] (0*.5,8*.5) rectangle ++ (.5,.5);
  \filldraw[fill=black!20] (1*.5,8*.5) rectangle ++ (.5,.5);
  \filldraw[fill=black!20] (2*.5,8*.5) rectangle ++ (.5,.5);
  \filldraw[fill=black!50] (3*.5,8*.5) rectangle ++ (.5,.5);
  \filldraw[fill=black!20] (4*.5,8*.5) rectangle ++ (.5,.5);
  \filldraw[fill=black!20] (5*.5,8*.5) rectangle ++ (.5,.5);
  \filldraw[fill=black!50] (6*.5,8*.5) rectangle ++ (.5,.5);
  \filldraw[fill=black!50] (0*.5,9*.5) rectangle ++ (.5,.5);
  \filldraw[fill=black!20] (1*.5,9*.5) rectangle ++ (.5,.5);
  \filldraw[fill=black!20] (2*.5,9*.5) rectangle ++ (.5,.5);
  \filldraw[fill=black!50] (3*.5,9*.5) rectangle ++ (.5,.5);
  \filldraw[fill=black!50] (0*.5,10*.5) rectangle ++ (.5,.5);
  \filldraw[fill=black!50] (0*.5,11*.5) rectangle ++ (.5,.5);
  \filldraw[fill=black!20] (0*.5,12*.5) rectangle ++ (.5,.5);
  \filldraw[fill=black!50] (1*.5,12*.5) rectangle ++ (.5,.5);
  \filldraw[fill=black!50] (2*.5,12*.5) rectangle ++ (.5,.5);
  \filldraw[fill=black!20] (3*.5,12*.5) rectangle ++ (.5,.5);
  \filldraw[fill=black!50] (4*.5,12*.5) rectangle ++ (.5,.5);
  \filldraw[fill=black!20] (5*.5,12*.5) rectangle ++ (.5,.5);
  \filldraw[fill=black!20] (6*.5,12*.5) rectangle ++ (.5,.5);
  \filldraw[fill=black!50] (7*.5,12*.5) rectangle ++ (.5,.5);
  \filldraw[fill=black!20] (0*.5,13*.5) rectangle ++ (.5,.5);
  \filldraw[fill=black!50] (1*.5,13*.5) rectangle ++ (.5,.5);
  \filldraw[fill=black!50] (2*.5,13*.5) rectangle ++ (.5,.5);
  \filldraw[fill=black!20] (3*.5,13*.5) rectangle ++ (.5,.5);
  \filldraw[fill=black!50] (4*.5,13*.5) rectangle ++ (.5,.5);
  \filldraw[fill=black!20] (0*.5,14*.5) rectangle ++ (.5,.5);
  \filldraw[fill=black!50] (1*.5,14*.5) rectangle ++ (.5,.5);
  \filldraw[fill=black!50] (2*.5,14*.5) rectangle ++ (.5,.5);
  \filldraw[fill=black!20] (3*.5,14*.5) rectangle ++ (.5,.5);
  \filldraw[fill=black!20] (4*.5,14*.5) rectangle ++ (.5,.5);
  \filldraw[fill=black!50] (5*.5,14*.5) rectangle ++ (.5,.5);
  \filldraw[fill=black!20] (6*.5,14*.5) rectangle ++ (.5,.5);
  \filldraw[fill=black!20] (7*.5,14*.5) rectangle ++ (.5,.5);
  \filldraw[fill=black!50] (8*.5,14*.5) rectangle ++ (.5,.5);
  \filldraw[fill=black!20] (0*.5,15*.5) rectangle ++ (.5,.5);
  \filldraw[fill=black!50] (1*.5,15*.5) rectangle ++ (.5,.5);
  \filldraw[fill=black!50] (2*.5,15*.5) rectangle ++ (.5,.5);
  \filldraw[fill=black!20] (3*.5,15*.5) rectangle ++ (.5,.5);
  \filldraw[fill=black!20] (4*.5,15*.5) rectangle ++ (.5,.5);
  \filldraw[fill=black!50] (5*.5,15*.5) rectangle ++ (.5,.5);
  \filldraw[fill=black!20] (0*.5,16*.5) rectangle ++ (.5,.5);
  \filldraw[fill=black!50] (1*.5,16*.5) rectangle ++ (.5,.5);
  \filldraw[fill=black!50] (2*.5,16*.5) rectangle ++ (.5,.5);
  \filldraw[fill=black!20] (0*.5,17*.5) rectangle ++ (.5,.5);
  \filldraw[fill=black!50] (1*.5,17*.5) rectangle ++ (.5,.5);
  \filldraw[fill=black!20] (2*.5,17*.5) rectangle ++ (.5,.5);
  \filldraw[fill=black!50] (3*.5,17*.5) rectangle ++ (.5,.5);
  \filldraw[fill=black!20] (4*.5,17*.5) rectangle ++ (.5,.5);
  \filldraw[fill=black!20] (5*.5,17*.5) rectangle ++ (.5,.5);
  \filldraw[fill=black!50] (6*.5,17*.5) rectangle ++ (.5,.5);
  \filldraw[fill=black!20] (7*.5,17*.5) rectangle ++ (.5,.5);
  \filldraw[fill=black!20] (8*.5,17*.5) rectangle ++ (.5,.5);
  \filldraw[fill=black!50] (9*.5,17*.5) rectangle ++ (.5,.5);
  \filldraw[fill=black!20] (0*.5,18*.5) rectangle ++ (.5,.5);
  \filldraw[fill=black!50] (1*.5,18*.5) rectangle ++ (.5,.5);
  \filldraw[fill=black!20] (2*.5,18*.5) rectangle ++ (.5,.5);
  \filldraw[fill=black!50] (3*.5,18*.5) rectangle ++ (.5,.5);
  \filldraw[fill=black!20] (4*.5,18*.5) rectangle ++ (.5,.5);
  \filldraw[fill=black!20] (5*.5,18*.5) rectangle ++ (.5,.5);
  \filldraw[fill=black!50] (6*.5,18*.5) rectangle ++ (.5,.5);
  \filldraw[fill=black!20] (0*.5,19*.5) rectangle ++ (.5,.5);
  \filldraw[fill=black!50] (1*.5,19*.5) rectangle ++ (.5,.5);
  \filldraw[fill=black!20] (2*.5,19*.5) rectangle ++ (.5,.5);
  \filldraw[fill=black!50] (3*.5,19*.5) rectangle ++ (.5,.5);
  \filldraw[fill=black!20] (0*.5,20*.5) rectangle ++ (.5,.5);
  \filldraw[fill=black!50] (1*.5,20*.5) rectangle ++ (.5,.5);
  \filldraw[fill=black!50] (0*.5,21*.5) rectangle ++ (.5,.5);
  \draw[dashed,black!50] (0,0) -- ++ (0,24*.5);
  \draw[dashed,black!50] (3*.5,0) -- ++ (0,24*.5);
  \draw[dashed,black!50] (6*.5,0) -- ++ (0,24*.5);
  \draw[dashed,black!50] (9*.5,0) -- ++ (0,24*.5);
  \node at (1.5*.5,-.8) {\scriptsize $I_3$};
  \node at (4.5*.5,-.8) {\small $\mathbf{I_4}$};
  \node at (7.5*.5,-.8) {\scriptsize $I_5$};
  \node at (4.5*.5,-2) {\textcircled{1}};
  \def\o2{8}
  \foreach \y/\t in
{0/\epsilon,1/\epsilon,2/\epsilon,3/0,4/1,5/\epsilon,6/2,7/\epsilon,8/0,9/0,10/\epsilon,11/\epsilon,12/1,13/1,14/2,15/2,16/\epsilon,17/0,18/0,19/0,20/\epsilon,21/\epsilon}
    \node at (\o2,\y*.5+.25) {\scriptsize $\t$};
  \node at (\o2,-2) {\textcircled{2}};
  \foreach \a/\b in {0/3,8/10,10/12,12/14,14/16,17/20,20/22}
    \draw[decorate, decoration=brace] (\o2+.5,\b*.5-.1) -- (\o2+.5,\a*.5+.1);
  \def\o3{10}
  \foreach \y/\t in
{1/\epsilon,3/0,4/1,5/\epsilon,6/2,7/\epsilon,8.5/0,10.5/\epsilon,12.5/1,14.5/2,16/\epsilon,18/0,20.5/\epsilon}
    \node at (\o3,\y*.5+.25) {\scriptsize $\t$};
  \node at (\o3,-2) {\textcircled{3}};
  \def\o4{12}
  \foreach \y/\t in
{3/0,4/1,6/2,8.5/0,12.5/1,14.5/2,18/0}
    \node at (\o4,\y*.5+.25) {\scriptsize $\t$};
  \node at (\o4,-2) {\textcircled{4}};
\end{tikzpicture}
  \end{center}
  \caption{$D=4$. Illustration of definitions around the trace up to 500 on $I_4$. The left datas are extracted from figure \ref{fig:av}. \textcircled{1} Long avalanches up to $500$, a dark grey square is a peak, a light grey square is fired column. $\mathcal L(4,500)=6$ and $\Phi(4,500)=(\phi^1,\dots,\phi^{22})$. \textcircled{2} Corresponding whole sequence of types on $I_4$: $( \alpha(4,k) )_{k=1}^{k=22}$. \textcircled{3} $4$-similar long avalanches are grouped into $4$-subsequences. \textcircled{4} The trace up to 500 on $I_4$ is obtain with the $4$-influent-subsequences: 0120120.}
  \label{fig:trace}
\end{figure}

Note that from Lemma \ref{lemma:similar}, each $(i+1)$-influent-subsequence is contained in an $i$-influent-subsequence. Indeed, the peak sequence can be increased in an arbitrary manner but only the greatest peak can disappear from a long avalanche to the next one. Therefore if we consider an $i$-subsequence of type $\alpha$, $\Phi_{[k,k'']}=(\phi^k,\dots,\phi^{k''})$, $\Phi_{[k,k'']}$ ends only if the greatest peak of $\phi^{k''}$ is at relative position $\alpha$ in $I_i$. Also, for the $i$-subsequence $\Phi_{[k,k'']}$ to start, a peak must appear on relative position $\alpha$ in $I_i$ during $\phi^k$, and it enforces the previous avalanche $\phi^{k-1}$ to stop before this position ($\max \phi^{k-1} < i(D-1) + \alpha$) because from Remark \ref{remark:easy} the only changes induced by $\phi^{k-1}$ on height differences which can create a peak are on the $D-1$ columns following its largest peak. As a consequence $\phi^{k-1}$ is of type $\epsilon$ on $I_{i+1}$ and so does not belong to an $(i+1)$-influent-subsequence. We can conclude that any $(i+1)$-influent-subsequence must start and end within an $i$-influent subsequence.

The finite state word transducer is constructed to compute the sequence of $(i+1)$-influent-subsequences within an $i$-influent-subsequence, hence relating the trace on $I_i$ to the trace on $I_{i+1}$.

\begin{lemma}\label{lemma:transducer}
Let  $\Phi_{[k,k'']} = (\phi^k,\phi^{k+1}, ...., \phi^{k''})$ be an $i$-subsequence of type  $\alpha$, with $k'' \leq N$, and with $I_{i+1}$ an interval whose columns are greater than $\mathcal L(D,N)+3(D-1)$. 
Given the state $(a_0, a_1, ..., a_{D-2})$ of $I_{i+1}$ in the configuration $\mu^{k-1}$, and  $\alpha$,  one can compute, with no need of more knowledge: 
\begin{itemize}
\item the state $(a'_0, a'_1, ..., a'_{D-2})$ of $I_{i+1}$  in the configuration $\mu^{k''}$, 
\item the sequence of  types  of  the successive $(i+1)$-influent subsequences contained in $\Phi_{[k,k'']}$.
\end{itemize}
\end{lemma}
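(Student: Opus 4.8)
The plan is to process the $i$-subsequence $\Phi_{[k,k'']}=(\phi^k,\dots,\phi^{k''})$ one long avalanche at a time, carrying along the running state of $I_{i+1}$, and to read off both the emitted types and the final state from purely local data. The two engines are Theorem~\ref{theorem:peak}, which describes the peak chain of a dense avalanche by the rule ``from a peak $p$ the next peak is the lowest column of value $D-1$ within distance $D-1$'', and Remark~\ref{remark:easy}, which rewrites a configuration once the greatest peak $\max P$ of a long avalanche is known (columns below $\max P$ are unchanged, $\max P$ drops to $0$, the next $D-1$ columns gain $1$). Every avalanche of the subsequence has type $\alpha$ on $I_i$, so its greatest peak below $(i+1)(D-1)$ sits at the fixed column $p=i(D-1)+\alpha$; and since $\alpha$ is the largest peak inside $I_i$, Theorem~\ref{theorem:peak} forces the columns at relative positions $\alpha+1,\dots,D-2$ of $I_i$ to carry values $<D-1$. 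Hence the chain can enter $I_{i+1}$ only through a value-$(D-1)$ column among the relative positions $0,\dots,\alpha$ of $I_{i+1}$, and then lands on the lowest such one.

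First I would analyse a single long avalanche $\phi^{k'}$. Whether its chain reaches $I_{i+1}$ and, if so, the relative position of its greatest peak lying inside $I_{i+1}$ --- which is exactly its type on $I_{i+1}$, or $\epsilon$ if the chain never enters --- is a function of $\alpha$ and of the current state of $I_{i+1}$ alone, because the continuation of the chain inside $I_{i+1}$ only reads the values of $I_{i+1}$. The induced change of the state of $I_{i+1}$ is then given by Remark~\ref{remark:easy}. The single datum that is not manifestly local is whether the \emph{overall} greatest peak $\max P^{k'}$ remains inside $I_{i+1}$, in which case that column is reset to $0$, or escapes into $I_{i+2}$, in which case $I_{i+1}$ stays frozen for this step; the latter distinction seems to require reading columns to the right of $I_{i+1}$.

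This is the crux, and it is resolved by Lemma~\ref{lemma:similar}. Throughout the subsequence the peak at $p=i(D-1)+\alpha$ persists: it is an intermediate --- hence unchanged --- peak at every avalanche except the last, for the subsequence terminates precisely when the greatest peak has receded back to $p$ (the condition, recalled just before the statement, for the type on $I_i$ to leave $\alpha$). Beyond $p$, Lemma~\ref{lemma:similar} says that from one long avalanche to the next the peak sequence can only shed its largest element. Consequently any peak that ever appears in $I_{i+1}$ during the subsequence becomes, at a uniquely determined later avalanche of the \emph{same} subsequence, the overall greatest peak, and is reset to $0$ exactly then. Thus, even though an individual reset may happen to the right and be momentarily invisible to $I_{i+1}$, the \emph{net} effect of the whole subsequence on the state of $I_{i+1}$ is forced and is computed from $\alpha$ and the initial state in $\mu^{k-1}$, with no recourse to $I_{i+2}$ or beyond; the escaping peaks are precisely those that the \emph{next} application of the transducer, on $I_{i+1}\to I_{i+2}$, will account for. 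The hypothesis that the columns of $I_{i+1}$ exceed $\mathcal L(D,N)+3(D-1)$ is what licenses Theorem~\ref{theorem:peak}, Remark~\ref{remark:easy} and Lemma~\ref{lemma:similar} (the latter needing a $2(D-1)$ margin) on all the columns touched.

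Finally I would assemble the output. Running through $\Phi_{[k,k'']}$ while maintaining the state of $I_{i+1}$ produces the whole sequence of types on $I_{i+1}$; collapsing each maximal run of equal types and deleting the $\epsilon$'s yields exactly the sequence of types of the successive $(i+1)$-influent-subsequences required by the statement, while the state of $I_{i+1}$ reached after $\phi^{k''}$ is the sought $(a'_0,\dots,a'_{D-2})$. I expect the third paragraph to be the main obstacle: the real work is to show that every delayed, rightward reset falls inside the same $i$-subsequence, so that the apparent dependence on $I_{i+2}$ genuinely cancels; the rest is a direct combination of the peak description with the transition rule.
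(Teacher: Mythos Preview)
Your proposal is correct and uses the same ingredients as the paper's proof: Theorem~\ref{theorem:peak} to locate peaks in $I_{i+1}$ from $\alpha$ and the current state, Lemma~\ref{lemma:similar} to guarantee that the greatest such peak eventually becomes the overall last peak \emph{within the same $i$-subsequence}, and Remark~\ref{remark:easy} for the resulting state update. The paper's exposition is slightly more direct --- it notes that the state of $I_{i+1}$ is invariant until the current greatest peak in $I_{i+1}$ becomes the overall last peak, then updates and iterates --- rather than your avalanche-by-avalanche pass followed by a cancellation argument, but the substance is identical.
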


\begin{proof}
This is obvious when the type of the subsequence is $\epsilon$, since there is no change and the $(i+1)$-subsequence contained in $(\phi^k,\dots,\phi^{k''})$ is also $\epsilon$.

 The computation is simple when there is no integer $m \in \llbracket 0 ; \alpha \rrbracket$ such that $a_m = D-1$. In this case,  the peak $p(i, k)$ is the last peak of  $\phi^k$,  thus  $\mu^{k}_{p(i, k)} = 0$, which gives that  $p(i, k)$ is not a peak of $\phi^{k+1}$, thus the subsequence is reduced to a singleton which is not $(i+1)$-influent (second part of the result). For $(D-1)(i+1) \leq j \leq p(i, k)+ D-1$, we have $\mu^{k}_{j} = \mu^{k-1}_{j} +1$, and for $p(i, k)+ D-1 < j <(D-1)(i+2)$, we have $\mu^{k}_{j} = \mu^{k-1}_{j}$. Thus, we have $a'_m = a_m +1 $ for $m \in \llbracket 0 ; \alpha \rrbracket$ and  $a'_m = a_m $ for $m \in \rrbracket \alpha ; D-2 \rrbracket$ (first part of the result). 
 
Otherwise, $\phi^k$ contains a peak in  $I_{i+1}$.  Let $q(i+1, k)$ denote the largest one. The column $q(i+1, k)$ is the largest $j$  such that $\mu^{k-1}_j = D-1$ and $j <(D-1)(i+2)$. Thus $q(i+1, k) \mod{D-1}$ is the largest $m$  such that $a_m = D-1$. 
 In this case,  $\phi^k$ starts an $(i+1)$-subsequence of type  $q(i+1, k)$. Consider the following long avalanches. From Lemma \ref{lemma:similar}, while  
$q(i+1, k)$ remains a peak of $\phi^{k'}$, $p(i, k)$ also remains a peak of $\phi^{k'}$. 
From Remark \ref{remark:easy}, while $q(i+1, k)$ is not the last peak of $\phi^{k'}$, the state of $I_{i+1}$
remains invariant. 
So the first avalanche $\phi^{k'}$ that changes the state of $I_{i+1}$ is the one whose last peak is $q(i+1, k)$ ($k' \leq k''$ since $\Phi_{[k,k'']}$ is a complete $i$-subsequence). We have  $\mu^{k'}_{q(i+1, k)}  = 0$, 
which closes the $(i+1)$-subsequence of type $q(i+1, k)$. 
We also have  $\mu^{k'}_j = \mu^{k}_j +1$ for  $q(i+1, k) < j < (D-1)(i+2)$, and $\mu^{k'}_j = \mu^{k}_j$ for $p(i, k) \leq j < q(i+1, k)$. This gives the state of $I_{i+1}$ for $\mu^{k'}$  (as in the previous case, this can be rewritten to show that this state can be expressed only from $\alpha$ and  $(a_0, a_1, ..., a_{D-2})$) and proves that $p(i, k) = p(i, k'+1)$. 

The argument above can be repeated as long as we have a column $j$ of $I_i$ whose current value is $D -1$. When there is no more such column, the peak $p (i, k)$ is deleted (its value becomes 0) by the next long avalanche which is necessarily $\phi^{k''}$ from the maximality of $i$-similar subsequences.\qed
\end{proof}

The algorithm below gives the exact computation underlying the proof Lemma \ref{lemma:transducer}. From the state of an interval $I_{i+1}$ and an avalanche type on $I_i$, $f$ returns the greatest fired peak in $I_{i+1}$, and $g$ computes the new state of $I_{i+1}$ and appends the result of $f$ to a sequence of types on interval $I_{i+1}$. $g$ recursively calls itself, anticipating the $i$-similarity of successive avalanches when $\max P^k$ lies on the right of interval $I_i$.


\vspace{.2cm}
\noindent
$\left[ ~\parbox{\textwidth}{
\textbf{Input:} a non empty type $\alpha$ and an interval state $A=(a_0,\dots,a_{D-2})$.\\
\textbf{Data structure:} a sequence $u$ of types.\\
\textbf{Functions:}\\
$\begin{array}{>{\raggedright}p{.33\textwidth} | >{\raggedright}p{.6\textwidth}}
$f : \mathcal S \times \mathcal T \to \mathcal T \cup \{\epsilon\}$ & $g : \mathcal S \times \mathcal T \times \mathcal T^* \to \mathcal S \times \mathcal T^*$\tabularnewline
\hline
$f(A,\alpha) :=$\\
\textbf{if} $( \{ m \leq \alpha | a_m=D-1¬†\} \!\neq\! \emptyset )$\\
\textbf{then}\\
\hspace{.2cm} $\max \{¬†m | a_m = D-1\}$\\
\textbf{else}\\
\hspace{.2cm} $\epsilon$
&
$g(A,\alpha,u) :=$\\
\textbf{match} $f(A,\alpha)$ \textbf{with}\\
\hspace{.2cm} $| \epsilon \to (a_0+1,\dots,a_{\alpha}+1,a_{\alpha+1},\dots,a_{D-2}),u)$\\
\hspace{.2cm} $| p \to g((a_0,\dots,a_{p-1},0,a_{p+1}+1,\dots,a_{D-2}+1),\alpha,u\!::\!p)$
\end{array}$\\
\textbf{Computation:} $(A,\alpha) \mapsto g(A,\alpha,\epsilon)$
}\right.$
\vspace{.2cm}

The algorithm above allows to define a deterministic finite state transducer \textswab T (see for example \cite{berstel}) computes the trace up to $N$ on $I_{i+1}$ given the trace up to $N$ on $I_i$. It is  

\begin{definition}
The finite state word transducer \textswab T for $D$ is a a 5-tuple $(Q,\Sigma,\Gamma,I,\delta)$ where:
\begin{itemize}
  \item the set of states $Q$ is $\mathcal S$;
  \item the input and output alphabets (resp. $\Sigma$ and $\Gamma$) are equal to $\mathcal A= \mathcal T \setminus \{ \epsilon \} = \{ 0, \dots, D-2 \}$;
  \item the transition function $\delta$ has type $Q \times \Sigma \to Q \times \Gamma^*$ and is defined by the algorithm above: $\delta(A,\alpha)=Computation(A,\alpha)$;
  \item the initial state is $(0, 0,\dots, 0)$, and we do not need to define a final state.
\end{itemize}
\end{definition}

The image of a word $u$ by \textswab T is denoted by $t(u)$. The formal definition of $t$ will not be useful and is a bit involved, we nevertheless give it for the sake of formalness:
\begin{definition}
  In order to define $t$, we need to define some other functions. Let $\delta_{\mathcal S}$ (resp. $\delta_{\mathcal T}$) denote the first (resp. second) projection of the result of the application of $\delta$:
  $$\text{if } \delta(A,\alpha)=(A',u') \text{, then }
  \begin{array}[t]\{{l}.
    \delta_{\mathcal S}(A,\alpha)=A'\\
    \delta_{\mathcal T}(A,\alpha)=u'
  \end{array}$$
  We also define the generalized transition function of the transducer:
  $$\delta^*(A,u)=\delta^\circ(A,u,\epsilon) \text{ with }
  \begin{array}[t]\{{l}.
    \delta^\circ(A,\alpha u,u')=\delta^\circ(\delta_{\mathcal S}(A,\alpha),u,u'\delta_{\mathcal T}(A,\alpha))\\
    \delta^\circ(A,\epsilon,u')=(A,u')
  \end{array}$$
  Then $t$ is defined for all word $u \in \mathcal A^*$ by
  $$t(u)=\delta^*_{\mathcal T}(00\dots0,u)$$
  where $\delta^*_{\mathcal T}$ is the second projection of the result of the application of $\delta^*$.
\end{definition}

As presented in Lemma \ref{lemma:transducer}, if $\alpha$ is the type of an $i$-influent-subsequence $\Phi_{[k,k']}$, and $A$ the state of $I_{i+1}$ in $\mu^{k-1}$, then $\delta(A,\alpha)$ computes the state of $I_{i+1}$ in $\mu^{k'}$ and the sequence of types of the $(i+1)$-influent-subsequences within $\Phi_{[k,k']}$. It follows that if $\beta$ is the type of the $i$-influent-subsequence $\Phi_{[k'+1,k'']}$, we already know the state of $I_{i+1}$ in $\mu^{k'}$ and are able to compute the sequence of types of the $(i+1)$-influent-subsequences within $\Phi_{[k'+1,k'']}$.

We have already highlighted the fact that, in the fixed point $\pi(0)$, any interval $I_i$ is in the state $(0,0,\dots,0)$. As a consequence, if $u$ is the trace up to $N$ on $I_i$, then $t(u)$ is the trace up to $N$ on $I_{i+1}$. Furthermore, since the input and output alphabets of \textswab T are equal, we have that $t(t(u))$ is the trace up to $N$ on $I_{i+2}$ and more generally that $t^n(u)$ is the trace up to $N$ on $I_{i+n}$. This has been illustrated on figure \ref{fig:intro}, where $u$ is the trace up to $N$ on $\mathcal L(D,N)+3(D-1)$.

Note that, more generally, the last subsequence of a trace may not be completed. This detail will be discussed in subsection \ref{ss:waves}.

\begin{figure}[!h]
  \begin{center}
    \begin{tikzpicture}[scale=1.5]
  \node[circle,draw,line width=1pt,fill=black!30] (21) at (0,0) {21};
  \node[circle,draw,line width=1pt,fill=black!30] (11) at (4,0) {11}
    edge [->,line width=1pt,out=-170,in=-10] node [below=-3pt] {$a|\epsilon$} (21)
    edge [<-,line width=1pt,out=170, in= 10] node [above=-3pt] {$b|ab$} (21);
  \node[circle,draw,line width=1pt,fill=black!30] (12) at (0,-4) {12}
    edge [->,line width=1pt,out=100,in=-100] node [left=-3pt] {$b|b$} (21)
    edge [<-,line width=1pt,out=80,in=-80] node [right=-3pt] {$a|a$} (21);
  \node[circle,draw,line width=1pt,fill=black!30] (22) at (4,-4) {22}
    edge [->,line width=1pt,out=-170,in=-10] node [below=-3pt] {$b|ba$} (12)
    edge [<-,line width=1pt,out=170,in=10] node [above=-3pt] {$a|\epsilon$} (12)
    edge [->,line width=1pt,out=100,in=-100] node [left=-3pt] {$a|ba$} (11)
    edge [<-,line width=1pt,out=80,in=-80] node [right=-3pt] {$b|\epsilon$} (11);
  \node[circle,draw,line width=1pt,fill=black!10] (20) at (2,-2) {20}
    edge [->,line width=1pt] node[below=-3pt] {$b|\epsilon$} (12)
    edge [->,line width=1pt] node[below=-3pt] {$a|\epsilon$} (11);
  \node[circle,draw,line width=1pt,fill=black!10] (10) at (1,-1) {10}
    edge [->,line width=1pt] node[below=-3pt] {$b|\epsilon$} (21)
    edge [->,line width=1pt] node[below=-3pt] {$a|\epsilon$} (20);
  \node[circle,draw,double,line width=1pt,fill=black!10] (00) at (2.5,-.5) {00}
    edge [->,line width=1pt] node[below=-3pt] {$b|\epsilon$} (11)
    edge [->,line width=1pt] node[below=-3pt] {$a|\epsilon$} (10);
\end{tikzpicture}
  \end{center}
  \caption{Transducer for $D=3$ - Edges are labelled $x|u$, where $x \in \mathcal A$ is the type to the current interval (input) and $u \in \mathcal A^*$ is the resulting sequence of types applied to the next interval (output). 00 is the initial state. For example, $t(abaaaaab) = abaab$. Remark  that,  for $n >0$, we have:  $t((ab)^n) = (ab)^{n-1}$. }
  \label{fig:transducer}
\end{figure}
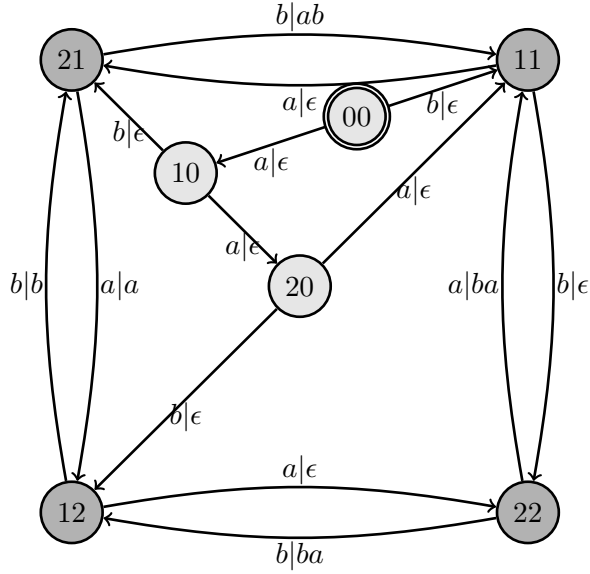

For the lowest interesting value, $D = 3$, the transducer \textswab T can easily be drawn. This diagram is given on figure \ref{fig:transducer}. For readability, we write $a$ (resp. $b$) instead of $0$ (resp. $1$) for the type alphabet, and we omit  the drawing of states which are not connected with the initial one and are not useful for the computation of $t(u)$, for any word $u$.

This transducer has three transient states, ($00$, $10$ and $20$) and four recurrent states ($11, 12, 21$ and $22$) organized in a cycle. A non trivial analysis  of this transducer is given in subsection \ref{ss:D=3}. The  result is stated on the following statement: 

\setcounter{corollary:decrease}{\value{corollary}}
\begin{corollary}\label{corollary:decrease}[$D=3$]
  For any $k$ there exists $n$ in $O(\log k)$ such that for all $u$ of length $k$, $t^n(u)$ is a prefix of  $(ab)^\omega$. 
\end{corollary}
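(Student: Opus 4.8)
The plan is to analyze the iteration of \textswab{T} directly on words, by isolating a potential that measures how far a word is from the target set $\mathcal P$ of prefixes of $(ab)^\omega$ and showing that this potential collapses geometrically. First I would discard the transient part of the transducer: the states $00,10,20$ emit nothing and are left after reading at most two letters, so for every $u$ the run of \textswab{T} enters the recurrent cycle $11,12,21,22$ almost immediately, and $t(u)$ is, up to a bounded prefix effect, produced by this four-state core. It is convenient to recode a recurrent state $(a_0,a_1)$ by the bits $(x,y)=(a_0-1,a_1-1)$: a direct reading of the table (figure \ref{fig:transducer}) shows that every letter read flips $x$, that $y$ evolves by an XOR rule, and that a nonempty output is emitted exactly when the current state is ``high'' ($x=1$ or $y=1$). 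This normal form is what makes the output structure tractable.

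Next I would pin down the target. Let $\mathcal P$ be the set of prefixes of $(ab)^\omega$, that is, the alternating words beginning with $a$ (together with $\epsilon$). Extending the remark $t((ab)^n)=(ab)^{n-1}$ by a one-line check on odd-length prefixes, one obtains that $\mathcal P$ is invariant under $t$ and that $t$ shortens its elements by about two letters; hence it suffices to prove that every $u$ of length $k$ reaches $\mathcal P$ within $O(\log k)$ steps. Note that the output length is not a good progress measure here, since it can even grow (for $u=b^k$ one gets $|t(u)|\sim\frac54 k$); the right quantity is the \emph{defect potential}
$$\Phi(u)=\#\{\,i : u_i=u_{i+1}\,\}+[\,u_1=b\,],$$
which counts equal adjacent pairs plus a penalty for a leading $b$, and satisfies $\Phi(u)=0$ if and only if $u\in\mathcal P$.

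The heart of the proof is a contraction estimate for $\Phi$. Using the normal form, each defect of $u$ (an $aa$ or $bb$ factor, or a leading $b$) forces the run through high states, and, heuristically, two such defects are needed to trigger a deletion that writes one letter of $t(u)$; the emitted blocks $a,b,ab,ba$ can therefore carry only about half as many defects as the input. The target inequality is of the form $\Phi(t(u))\le \lambda\,\Phi(u)+C$ with $\lambda<1$ and $C$ absolute (experimentally $\lambda\approx\frac14$ on the extremal family $u=b^k$, where $\Phi$ drops roughly by a factor four per step). Iterating it, $\Phi$ falls below an absolute constant after $O(\log k)$ applications; then a finite analysis of bounded-defect words (which may still be arbitrarily long, e.g.\ $(ab)^n$) shows that $\mathcal P$ is reached in $O(1)$ further steps. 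Combining the two regimes yields the $O(\log k)$ bound uniformly over all $u$ of length $k$, which is exactly the statement.

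The main obstacle is precisely this contraction lemma. The single-step behaviour of $\Phi$ is not cleanly monotone: for instance $t((aab)^3)$ only takes $\Phi$ from $3$ to $2$, so no naive ``halving at every step'' holds, and emitted blocks such as $ba$ can locally reintroduce a defect. I therefore expect the estimate to require either an amortized argument over a bounded window of consecutive applications, or a refined multi-component potential (pairing the defect count with an imbalance term such as $|\#a-\#b|$, which keeps contracting on the families where $\Phi$ momentarily stalls). Controlling how the finitely many recurrent states stitch the emitted blocks together, so that new defects cannot be created faster than old ones are destroyed, is where the real work lies.
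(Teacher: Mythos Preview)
Your plan is reasonable in spirit but the proof is not there: the contraction inequality $\Phi(t(u))\le\lambda\Phi(u)+C$ is precisely the content of the corollary, and you explicitly leave it open. The example $(aab)^3$ you mention already shows that $\Phi$ does not contract by a fixed ratio in one step, and your suggested remedies (amortization over several steps, or a multi-component potential) are themselves nontrivial programmes, not arguments. As it stands, the proposal identifies the right kind of object to look for but does not supply the key lemma.

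The paper's proof uses a different and much more tractable potential: the imbalance $h(u)=\big|\,|u|_a-|u|_b\,\big|$, together with its running maximum $g(u)=\max\{h(u')\,:\,u'\text{ prefix of }u\}$. The crucial observation that makes the contraction painless is a cancellation trick: for every recurrent state $A$ one has $A\cdot ab=A\cdot ba=A$ and $t_A(ab),t_A(ba)\in\{ab,ba\}$, so deleting an $ab$ or $ba$ factor from $u$ changes neither $h(u)$ nor $h(t'(u))$. This reduces the contraction question to the two extremal families $a^n$ and $b^n$, where a direct period-four computation ($t'(aaaa)=aba$, $t'(bbbb)=abbab$) gives $h(t'(u))\le h(u)/4+1$. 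From there $g$ falls below $2$ in $O(\log k)$ iterations, and one more application lands in $\mathcal P$. Note that you actually mention $|\#a-\#b|$ as a possible auxiliary term; in the paper it is not auxiliary but the whole story, and the $ab/ba$-cancellation is the missing idea that turns it into a one-line contraction. Your defect count $\Phi$ has no such cancellation property, which is exactly why the contraction stalls on inputs like $(aab)^m$.
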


For $D=3$, we therefore establish the emergence of a periodic trace consisting in an alternation of $a$ and $b$. Note that for any $N$, the length of the trace up to $N$ on any interval (where it is defined) is smaller or equal than $N$. Consequently, if we consider the trace up to $N$ on the leftmost interval $I_i$ such that $i(D-1) > \mathcal L(D,N)+3(D-1)$, Corollary \ref{corollary:decrease} states that the trace up to $N$ on $I_{i+O(\log N)}$ is a prefix of $(ab)^\omega$. Since $D$ is fixed equal to 3 and recalling Proposition \ref{lemma:meta2}, we therefore have, for $D=3$, that the trace up to $N$ on $I_{O(\log N)}$ is periodic (figure \ref{fig:fp3}). The next subsection concentrates on the meaning of regular traces.

%
%

\subsection{From traces to waves}\label{ss:waves}

Corollary \ref{corollary:decrease} states that when the parameter of the model $D$ is equal to $3$, then a periodic trace emerges very quickly (on a logarithmic interval in the number $N$ of grains). The difficulties involved in the generalization of the emergence of regular traces to any parameter $D$ are discussed in the proof of Corollary \ref{corollary:decrease}. Simulations nevertheless let us believe that this behavior generalizes to any parameter, so we choose to present a generalized interpretation of regular traces. 

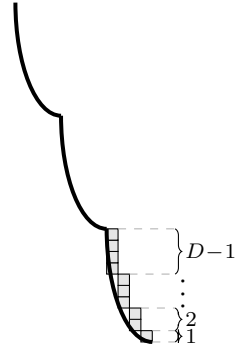
\begin{wrapfigure}{r}{0pt}
  \begin{tikzpicture}[scale=.5]
          \filldraw[fill=black!10] (3*.3,0) rectangle ++ (.3,.3);
          \foreach \y in {0,1}
            \filldraw[fill=black!10] (2*.3,.3+\y*.3) rectangle ++ (.3,.3);
          \foreach \y in {0,...,2}
            \filldraw[fill=black!10] (.3,3*.3+\y*.3) rectangle ++ (.3,.3);
          \foreach \y in {0,...,3}
            \filldraw[fill=black!10] (0,6*.3+\y*.3) rectangle ++ (.3,.3);
          \draw[line width=1.5pt] (4*.3,0) arc (-90:-180:4*.3cm and 10*.3cm);
          \draw[dashed,color=black!30] (4*.3,0) -- ++ (2*.3,0);
          \draw[dashed,color=black!30] (4*.3,.3) -- ++ (2*.3,0);
          \draw[dashed,color=black!30] (3*.3,3*.3) -- ++ (3*.3,0);
          \draw[dashed,color=black!30] (2*.3,6*.3) -- ++ (4*.3,0);
          \draw[dashed,color=black!30] (.3,10*.3) -- ++ (5*.3,0);
          \draw[decorate, decoration=brace] (6*.3,.3) -- node [right] {\scriptsize 1} ++ (0,-.3);
          \draw[decorate, decoration=brace] (6*.3,3*.3) -- node [right] {\scriptsize 2} ++ (0,-2*.3);
          \draw[decorate, decoration=brace] (6*.3,10*.3) -- node [right] {\scriptsize $D\!-\!1$} ++ (0,-4*.3);
          \node at (6*.3+.2,4.5*.3) {\rotatebox{90}{$\dots$}};
          \draw[line width=1.5pt] (0,10*.3) arc (-90:-180:4*.3cm and 10*.3cm);
          \draw[line width=1.5pt] (-4*.3,20*.3) arc (-90:-180:4*.3cm and 10*.3cm);
        \end{tikzpicture}
  \caption{Regular traces imply regular shape of the fixed point, visibly wavy.}
  \label{fig:wave}
\end{wrapfigure}

\begin{proposition}\label{proposition:wave}
  In KSPM($D$), let $I_i$ be an interval whose columns are greater than $\mathcal L(D,N)+3(D-1)$. Assume that the trace up to $N$ on $I_i$ is
  $$(0,\dots,D-2)^x(0,\dots,p)\text{, with }x\geq 0\text{ and }p \leq D-2\text{.}$$
  Let $y$ be the length of the last $i$-influent-subsequence of type $p$. We have $y \leq x+1$, and $\pi(N)_{[(i+1)(D-1),\infty[}$ equals
  $$\begin{array}[t]\{{ll}. (p,\dots,1)(D\!-\!1,\dots,1)^{x-y}0(D\!-\!1,\dots,1)^y0^\omega &\text{ if }y<x+1\\(p+1,\dots,1)(D\!-\!1,\dots,1)^x0^\omega&\text{ if }y=x+1.\end{array}$$
\end{proposition}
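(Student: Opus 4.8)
The plan is to compute $\pi(N)$ to the right of column $(i+1)(D-1)$ interval by interval, reading off each interval from the transducer \textswab T. The two facts I would lean on are that $t^j(u)$ is the trace up to $N$ on $I_{i+j}$ (where $u$ is the trace on $I_i$), and that the state of $I_{i+j+1}$ in the fixed point $\pi(N)$ is exactly the state in which \textswab T halts after reading $t^j(u)$ from the initial state $(0,\dots,0)$; this is a direct consequence of Lemma \ref{lemma:transducer} and Remark \ref{remark:easy}, together with the fact that the short avalanches occurring after the last long one do not reach this part of the configuration. Thus everything reduces to (i) identifying the iterates $t^j(u)$ and (ii) reading off the halting state of each.

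For complete subsequences this is a finite computation with the functions $f$ and $g$. A direct evaluation shows that reading one full block $(0,\dots,D-2)$ from $(0,\dots,0)$ yields empty output and ends in the full ramp $(D-1,\dots,1)$, that reading a full block from the full ramp reproduces a full block and returns to the full ramp, and that reading the partial block $(0,\dots,p)$ from the full ramp outputs $(0,\dots,p)$ and halts in $(p+1,\dots,1,D-1,\dots,p+2)$. Hence $t\big((0,\dots,D-2)^x(0,\dots,p)\big)=(0,\dots,D-2)^{x-1}(0,\dots,p)$, so $t^j(u)=(0,\dots,D-2)^{x-j}(0,\dots,p)$, and the corresponding halting states tile exactly into the solid wave $(p+1,\dots,1)(D-1,\dots,1)^x0^\omega$ --- the ramp-up tail of one interval fusing with the ramp-down head of the next to form a full descending block. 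This settles the case $y=x+1$, which I would take as the base of an induction.

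The real content is the last type-$p$ subsequence, which has length $y$ and is in general \emph{incomplete}, so the transducer transition $\delta$ --- built to consume a complete subsequence --- cannot be applied to it verbatim. Here $y$ counts long avalanches, a global quantity, whereas the grouping into subsequences is interval-dependent, which is what makes the bookkeeping subtle. I would proceed by induction on $x-y$, peeling off the leftmost interval $I_{i+1}$: using Theorem \ref{theorem:peak}, Lemma \ref{lemma:similar} and Remark \ref{remark:easy} to describe the partial effect of the last $y$ avalanches, one shows that one application of $t$ lowers the block exponent by one while leaving $p$ and $y$ unchanged, that $y\le x+1$ always holds (the last run can be prolonged at most once per preceding block, plus one), and that the halting state of $I_{i+1}$ is $(p,\dots,1,D-1,\dots,p+1)$ when $x-y\ge1$ and the break state $(p,\dots,1,0,D-1,\dots,p+2)$ when $x-y=0$. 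Feeding the inductive hypothesis for $t(u)$ on $I_{i+1}$ and prepending the interval just computed reproduces the claimed word, the single $0$ being inserted exactly after $x-y$ full waves; as above, adjacent ramp tails merge into full blocks, and when $x-y=0$ the break state flows into the solid-wave sub-result $(p+1,\dots,1)(D-1,\dots,1)^{y-1}0^\omega$.

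I expect the main obstacle to be precisely this incomplete last subsequence: formalizing how its length $y$ controls the number of peaks pushed into each successive interval, locating the single break, and proving $y\le x+1$. This is the one step that cannot be read off from the transducer (which only knows about completed subsequences) and must instead be argued from the avalanche dynamics, reconciling the global avalanche count $y$ with the per-interval regrouping of peaks.
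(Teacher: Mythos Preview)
Your approach is viable but takes a genuinely different route from the paper. You propose a \emph{spatial} induction: compute the iterates $t^j(u)$ of the trace and read off the state of each interval $I_{i+j+1}$ from the transducer's halting state, with a separate argument to handle the incomplete last $i$-subsequence. The paper instead does a \emph{temporal} induction, long avalanche by long avalanche: starting from $0^\omega$ on the right of $(i+1)(D-1)$, it maintains an explicit invariant on the shape of $\pi(k)_{[(i+1)(D-1),\infty[}$ after the $((x+1)p+y)^{\text{th}}$ long avalanche of a cycle, and each step is a single application of Theorem~\ref{theorem:peak} and Remark~\ref{remark:easy}. In particular, Lemma~\ref{lemma:similar} is used once, to argue that each type is repeated exactly $x+1$ times per cycle (there are $x$ peaks ahead to consume), which simultaneously gives $y\le x+1$.

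The payoff of the paper's route is that the ``incomplete last subsequence'' problem you flag simply does not arise: since the induction variable is the avalanche index rather than the interval index, the transducer (which is defined only on completed subsequences) is never invoked, and there is no need to reconcile the global count $y$ with per-interval regroupings. Your route would work too, and has the conceptual appeal of staying inside the transducer formalism, but it forces you to do by hand exactly the partial-subsequence bookkeeping that the temporal induction avoids; the paper's proof is correspondingly a few lines long. One caution on your sketch: the identity $t^j(u)=(0,\dots,D-2)^{x-j}(0,\dots,p)$ describes the trace on $I_{i+j}$ only when the last subsequence is complete; for $y<x+1$ the actual trace on $I_{i+1}$ need not be literally $t(u)$, so your induction really must carry the pair (trace shape, residual $y$) rather than the trace alone.
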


\begin{proof}
  It is a straight induction on avalanches. We concentrate on the right part of fixed points: $\pi(k)_{[(i+1)(D-1),\infty[}$. Initially for $k=0$, it is equal to $0^\omega$. The $D-1$ first $i$-influent-subsequences lead to $D-1,D-2,\dots,1,0^\omega$. From $(D-1,\dots,1)^x0^\omega$, the trace cyclically applies types $0,\dots,D-2$. From Lemma \ref{lemma:similar}, we can predict that each type will be repeated $x+1$ times (there are $x$ peaks to be consumed for each $i$-influent-subsequence), thus each cycle consists of $(D-1)(x+1)$ long avalanches. Each cycle verifies the following invariant: the $((x+1)p + y)^{th}$ long avalanche, $0 \leq p < D-1$ and $0 < y \leq x+1$, has type $p$ and lead to
  \begin{center}
    \begin{tabular}[b]\{{ll}.
      $p,p-1,\dots,1,(D-1,\dots,1)^{x-y}0(D-1,\dots,1)^y0^\omega$ & if $y \leq x$;\\
      $p+1,p,\dots,1,(D-1,\dots,1)^x0^\omega$ & if $y=x+1$.
    \end{tabular}
  \end{center}
  This invariant is proved at each step by a direct application of Theorem \ref{theorem:peak}.\qed
\end{proof}

\begin{remark}
  Note that the application of Theorem \ref{theorem:peak} gives the trace up to $N$ on $I_{i+1}$, it is \mbox{$(0,\dots,D-2)^{x-1}(0,\dots,p)$}. As a consequence, $(0,\dots,D-2)^\omega$ is fixed point for $t$.
\end{remark}

A regular trace for a parameter $D$ is a prefix of $(0,\dots,D-2)^\omega$ and a regular trace on an interval $I_i$ implies a very regular and wavy shape starting from the next interval. The name ``wave'' is explained on figure \ref{fig:wave}.

%
%

\subsection{Analysis of the transducer for $D=3$.}\label{ss:D=3}

In this subsection we provide an analysis of the transducer for $D=3$, leading to a proof of Corollary \ref{corollary:decrease}. Note that though we consider maximal length subsequences of long avalanches, input words for the transducer may contain arbitrary numbers of successive occurrences of $a$ and $b$ since we consider only $i$-influent subsequences.

We need some notations. Let $A$ and $A'$ be states of $\mathcal S$ and $u$ be a word of $\mathcal A^*$. Consider, in the transducer, the path which starts in $A$,  whose  sequence of successive edge (left) labels is  given by $u$. We say that we have $A\,  u = A'$ if this  path terminates in $A'$. A word $u$ is an \emph{entry} word if  $00\,  u$ is a recurrent state and for each prefix $u'$ of $u$,  $ 00\, u'$ is a transient state. Let $t_A : \mathcal A^* \to \mathcal A^*$ be the transduction function obtained by changing the initial state by $A$ in the automaton. Hence $t_{00}= t$. We extensively use  $t_{21}$, so we state $t_{21}= t'$. A word $u$ is \emph{basic} for the state $A$ if $\vert t_A(u)\vert  \geq 2$ and for each prefix $u'$ of $u$,  $\vert t_A(u')\vert  < 2$. For each current state $A$, the set of basic words for $A$ and their images by $t_A$ are given below (tables represent case disjunctions according the beginning of $u$)

\begin{center}
  \begin{tabular}[t]{rlcl}
    $(1,1):$ & $aaaa$ & $\rightarrow$ & $aba$\\
    & $aaab$ & $\rightarrow$ & $aba$\\
    & $aab$ & $\rightarrow$ & $ab$\\
    & $ab$ & $\rightarrow$ & $ab$\\
    & $ba$ & $\rightarrow$ & $ba$\\
    & $bb$ & $\rightarrow$ & $ba$
  \end{tabular}
  \begin{tabular}[t]{rlcl}
    $(2,1):$ & $aaa$ & $\rightarrow$ & $aba$\\
    & $aab$ & $\rightarrow$ & $aba$\\
    & $ab$ & $\rightarrow$ & $ab$\\
    & $b$ & $\rightarrow$ & $ab$
  \end{tabular}
  \begin{tabular}[t]{rlcl}
    $(1,2):$ & $aa$ & $\rightarrow$ & $ba$\\
    & $ab$ & $\rightarrow$ & $ba$\\
    & $ba$ & $\rightarrow$ & $ba$\\
    & $bb$ & $\rightarrow$ & $bab$
  \end{tabular}
  \begin{tabular}[t]{rlcl}
    $(2,2):$ & $a$ & $\rightarrow$ & $ba$\\
    & $b$ & $\rightarrow$ & $ba$
  \end{tabular}
\end{center}

Each word $u$ (such that $t(u)  \neq \epsilon$) admits a unique decomposition $u = u_0 u_1 ...u_p$ such that $u_0$ is an entry word, for $1 \leq i < p$,  $u_i$ is a basic word for the state $00\,  u_0 u_1 ...u_{i-1}$, and $u_p$ is a non-empty prefix of a basic word (for the state $00 \, u_0 u_1 ...u_{p-1}$). The word $u$ also admits a decomposition $u =u'_1 u'_2 ...u'_{p'}$ such that for $1 \leq i < p$,  $u'_i$ is a basic word for the state $21\,  u'_0 u'_1 ...u'_{i-1}$, and  $u'_{p'}$ is a non-empty prefix  of a basic word (for the state $21 \, u'_0 u'_1 ...u'_{p'-1}$).

A first result gives us a hint on the form of the sequence of types  applied to successive intervals:

\begin{lemma}\label{lemma:abu}

Let $\mathcal L$ be the language $\mathcal L = \{ab u,  u \in \mathcal A^* \} \cup   \{\epsilon, a\}$.   
\begin{itemize}
\item For each $u \in \mathcal A^*$,  we have  $t'(u) \in \mathcal L$.
\item For each $v \in \mathcal L$,  we have  $t(v) \in\mathcal  L$.
\item For each $u \in \mathcal A^*$ , we have  $t^{2}(u) \in \mathcal L$.
\end{itemize}
\end{lemma}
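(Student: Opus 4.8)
The plan is to prove the three items in sequence, exploiting the explicit basic-word tables for the transducer states $11,12,21,22$. The key structural fact I would lean on is the decomposition of any word $u$ (with $t(u)\neq\epsilon$) into an entry word followed by basic words: processing a word from a recurrent state proceeds block-by-block through basic words, and each basic word, by inspection of the four tables, outputs either a word beginning with $ab$ or a short word ($ab$, $ba$, $aba$, $bab$) whose structure I can track. Since the transducer has only four recurrent states arranged in a cycle ($21\to11$ or $21\to12$, etc.), the whole argument reduces to a finite case analysis over which recurrent state we start in and which basic word is consumed first.

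First I would prove the first item, $t'(u)=t_{21}(u)\in\mathcal L$ for all $u\in\mathcal A^*$. I would split on the first basic word read from state $21$. Reading the $(2,1)$ table, every basic word for state $21$ outputs either $aba$ or $ab$, both of which lie in $\{ab\,w\}$; so as soon as the first full basic word is consumed the output already starts with $ab$, and any further output is simply appended, keeping us in $\mathcal L=\{ab\,u\}\cup\{\epsilon,a\}$. The only cases not covered by a completed basic word are when $u$ is a strict prefix of a basic word for $21$: inspecting the $(2,1)$ entries, the proper prefixes are $\epsilon$, $a$, $aa$, and these produce outputs $\epsilon$, $\epsilon$ (or $a$), which are exactly the exceptional elements $\{\epsilon,a\}$ allowed in $\mathcal L$. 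Thus every possible output from $21$ is captured, and the first item follows by checking this finite list.

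Next, for the second item I take $v\in\mathcal L$ and compute $t(v)=t_{00}(v)$. Here I use that $00$ is transient and that $00\,ab$ lands in a definite recurrent state: tracing $a$ then $b$ through the transient states $00\to10\to21$ (with output $\epsilon$ along the way), I find $00\,ab=21$, so $t(ab\,u)=t_{21}(u)=t'(u)$, which by the first item lies in $\mathcal L$. The two exceptional inputs are immediate: $t(\epsilon)=\epsilon\in\mathcal L$ and $t(a)=\epsilon\in\mathcal L$ since $a$ alone only drives $00\to10$ with empty output. This reduces the second item entirely to the first. Finally, the third item is a corollary: for arbitrary $u\in\mathcal A^*$ I write $t^2(u)=t(t(u))$; if $t(u)=\epsilon$ then $t^2(u)=\epsilon\in\mathcal L$, and otherwise I must check that $t(u)$ itself is of a form to which the second item applies. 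The cleanest route is to observe that $t(u)=t_{00}(u)$ always begins by first passing through the transient part, so either $t(u)\in\{\epsilon,a\}$ or $t(u)$ begins with $ab$ (the same transient-prefix analysis as above shows the output of $t$ always starts $ab$ once it reaches length $\geq 2$); in all cases $t(u)\in\mathcal L$, and then the second item gives $t(t(u))\in\mathcal L$.

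The main obstacle I anticipate is the bookkeeping of partial basic words, i.e. the terminal block $u_p$ (or $u'_{p'}$) that is only a prefix of a basic word: I must verify that no such dangling prefix can produce an output that escapes $\mathcal L$, for instance a single stray $b$ that yields output $b$ rather than $\epsilon$ or something starting with $ab$. This requires carefully reading off, from each of the four tables, every proper prefix of every basic word together with its partial output, and confirming the outputs are always among $\{\epsilon,a\}$ or are absorbed into an already-$ab$-prefixed string. It is purely finite but error-prone, so I would organize it as an explicit table of (state, prefix, partial output) triples rather than argue in prose. Once that verification is pinned down, the three items assemble immediately as above.
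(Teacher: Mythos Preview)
Your arguments for the first two items are essentially the paper's own: basic-word decomposition from state $21$ (outputs begin $ab$, proper-prefix outputs are $\epsilon$ or $a$), and the observation that $t(abu)=t'(u)$ because $00\cdot ab = 21$ with empty output. That part is fine.

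The third item, however, has a genuine gap. Your reduction rests on the claim that ``the output of $t$ always starts $ab$ once it reaches length $\geq 2$'', i.e.\ that $t(u)\in\mathcal L$ for every $u$. This is false. After the entry word the transducer can land in state $12$ (via $aab$) or reach $22$ from $11$ (via $bb$), and from either of these states the first basic word emits an output beginning with $b$: for instance $t(aab\cdot aa)=ba$ and $t(bb\cdot a)=ba$, neither of which lies in $\mathcal L$. So one application of $t$ does \emph{not} in general produce an element of $\mathcal L$, and you cannot conclude by simply applying the second item to $t(u)$.

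What the paper does instead is exactly to confront this case. It observes that the first full basic-word output $t_A(u_1)$ (for $A$ the recurrent state reached after the entry word) lies in $\{ab,\,aba,\,ba,\,bab\}$; if it begins with $ab$ then $t(u)\in\mathcal L$ and the second item finishes the job, but if it begins with $ba$ one writes $t(u)=ba\,u'$ and computes directly that $t(ba\,u')=t_{21}(u')=t'(u')$, since $00\xrightarrow{b}11\xrightarrow{a}21$ with empty output. Then the \emph{first} item gives $t^2(u)=t'(u')\in\mathcal L$. (The paper also checks separately the degenerate cases $p=0,1$ where no full basic word is consumed.) Your outline needs this extra $ba$-branch; once you add it, the proof goes through along the same lines.
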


\begin{proof}
We prove the three items successively, using previous ones as hypothesis. 
\begin{itemize}
  \item Let $u \in \mathcal A^*$ such that $u \neq \epsilon$. Consider the second decomposition seen above: $u =u'_1 u'_2 ...u'_{p'}$. We obtain $t'(u) =  t'(u'_1) t_A(u'_2 ... u'_{p'})$, where $A$ denotes a recurrent state. 
  \begin{itemize}
    \item For $p' \geq 2$,   $t'(u'_1)$ is the image of a basic word for $21$,  thus $t'(u'_1) \in \{ab, aba\}$, which gives $t'(u) \in \mathcal L$. 
    \item For $p' = 1$, $t'(u) =  t'(u'_1)$ and $t'(u'_1)$ is the image of non empty prefix a basic word for $21$, thus $t'(u'_1)$ is a prefix of $aba$, which gives $t'(u) \in \mathcal L$. 
  \end{itemize}
  \item Let $v  \in \mathcal L$. If  $v \in \{\epsilon, a\}$, then $t(v) = \epsilon$. Otherwise $v$ can be written $ab u$. Thus $t(v) = t(ab) t'(u) = t'(u)$, and  $t'(u) \in \mathcal L$ from the first item. This proves: $t(v) \in\mathcal  L$. 
  \item Let $u \in \mathcal A^*$ such that $u \neq \epsilon$.  We consider the first decomposition above:  $u = u_0 u_1 ...u_p$.  We obtain $t(u) =  t_A(u_1) t_{Au_1} ( u_2...u_p)$, where $A$ denotes a recurrent state. 
  \begin{itemize}
    \item For $p = 0$,   $t(u) = \epsilon$,  thus $t^{2}(u) = \epsilon$.  
    \item For $p = 1$,   $t(u) = t_A(u_1)$, and $t_A (u_1)$ is  the image by $t_A$ of a prefix of basic word for $A$,  which gives that $t(u)$ is a prefix of either $aba$ or $ba$  (since possible images of basic words are $ab, ba$, and $aba$). This gives that $t^2(u) \in \{\epsilon, a\}$. 
    \item If $p \geq 2$,  then $t_A(u_1) \in \{ab, ba, aba\}$. If  $t_A(u_1) \in \{ab, aba\}$, then $t(u) \in \mathcal L$, thus $t^2(u) \in \mathcal L$, from the second item. If  $t_A(u_1) = ba$, then we can state $t(u) = ba u'$. Thus $t^2(u) = t'(u')$.  We have $t'(u) \in \mathcal L$ from the first item, thus $t^2(u) \in \mathcal L$. 
  \end{itemize}
\end{itemize}\qed
\end{proof}

\begin{definition}[Height]
  The height $h$ of a finite word $u \in \mathcal A^*$ is $h(u)= \vert |u|_a - |u|_b \vert $ where $|u|_x$ is the number of occurrences of the letter $x$ in $u$.
\end{definition}

\begin{lemma}
  For any finite word $v \in \mathcal L$, we have: $h(t(v)) \leq \frac{h(v)}{4}+1$.
\end{lemma}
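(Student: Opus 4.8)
The plan is to reduce the statement to the transduction $t'=t_{21}$ and to track a \emph{signed} version of the height. Introduce the balance $\beta(w)=|w|_a-|w|_b$, so that $h=|\beta|$ and $\beta$ is additive over concatenation. First I would dispose of the two degenerate members of $\mathcal L$: reading $\epsilon$ or $a$ from the initial state $00$ emits nothing, so $t(\epsilon)=t(a)=\epsilon$ and the bound is trivial. For $v=abu$, reading the prefix $ab$ from $00$ (namely $00\xrightarrow{a}10\xrightarrow{b}21$) lands in the recurrent state $21$ while producing empty output, hence $t(v)=t'(u)$; moreover $\beta(ab)=0$ gives $h(v)=|\beta(abu)|=|\beta(u)|=h(u)$. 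So everything reduces to proving $h(t'(u))\le h(u)/4+1$ for an arbitrary word $u$.

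The core of the argument is an exact local identity on the four recurrent states $\{11,12,21,22\}$, which form a closed cycle: once the reading head is recurrent it never leaves. I would introduce the potential $\psi$ defined by $\psi(11)=0$, $\psi(21)=\tfrac14$, $\psi(22)=-\tfrac14$, $\psi(12)=-\tfrac12$, and check, on each of the eight single-letter transitions $A\xrightarrow{x}A'$ with output $o$ read off figure \ref{fig:transducer}, the \emph{equality} $\beta(o)=\tfrac{\beta(x)}{4}+\psi(A)-\psi(A')$, where $\beta(a)=1$ and $\beta(b)=-1$. For instance $21\xrightarrow{a}12$ emits $a$ and indeed $1=\tfrac14+\tfrac14+\tfrac12$, while $12\xrightarrow{b}21$ emits $b$ and $-1=-\tfrac14-\tfrac12-\tfrac14$; the remaining six are equally immediate. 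This identity is the combinatorial shadow of the contraction factor $(-1/2)^2=1/4$ already met in the matrix recurrence of Proposition \ref{lemma:meta2} (one application of $t$ advances $D-1=2$ columns, each carrying a factor $-1/2$).

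With the identity in hand the proof telescopes. Reading $u=x_1\cdots x_n$ from $21$ through states $21=A_0,A_1,\dots,A_n$ with outputs $o_1,\dots,o_n$ and summing the local identities yields $\beta(t'(u))=\tfrac{\beta(u)}{4}+\psi(21)-\psi(A_n)$. Since $\psi$ takes values in $[-\tfrac12,\tfrac14]$, the correction term $\psi(21)-\psi(A_n)$ lies in $[0,\tfrac34]$, so $\bigl|\beta(t'(u))-\tfrac{\beta(u)}{4}\bigr|\le\tfrac34$. Hence $h(t'(u))=|\beta(t'(u))|\le\tfrac{|\beta(u)|}{4}+\tfrac34=\tfrac{h(u)}{4}+\tfrac34$, and combining with the reduction of the first paragraph gives $h(t(v))\le\tfrac{h(v)}{4}+\tfrac34\le\tfrac{h(v)}{4}+1$ for every $v\in\mathcal L$ (in fact with the sharper constant $3/4$).

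The only real difficulty is locating the potential $\psi$; everything afterwards is bookkeeping. I would obtain it by writing the eight transition requirements as a small linear system: the constraints force $\psi(21)-\psi(11)=\tfrac14$, $\psi(21)-\psi(12)=\tfrac34$ and $\psi(11)-\psi(22)=\tfrac14$, which pin $\psi$ down up to an additive constant and, pleasantly, collapse every inequality into an equality. One point to treat with care is that the telescoping must run entirely inside the recurrent part of the automaton; this is exactly why the reduction first routes $v=abu$ to the recurrent state $21$ before any letter of $u$ is read, so that the transient states $00,10,20$ never enter the estimate.
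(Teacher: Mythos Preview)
Your argument is correct and genuinely different from the paper's. The paper proceeds by a combinatorial reduction: after the same reduction to $t'(u)$, it observes that for every recurrent state $A$ one has $A\cdot ab=A\cdot ba=A$ with output in $\{ab,ba\}$, so deleting any factor $ab$ or $ba$ from $u$ leaves both $h(u)$ and $h(t'(u))$ unchanged; iterating reduces $u$ to $a^{h(u)}$ (or $b^{h(u)}$), and then an explicit computation of $t'(a^{4i+r})=(aba)^i\,t'(a^r)$ gives $h(t'(u))\le i+1\le h(u)/4+1$.

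Your potential-function approach replaces this reduction by a single exact local identity $\beta(o)=\tfrac{\beta(x)}{4}+\psi(A)-\psi(A')$, verified on the eight recurrent transitions, which telescopes to the closed formula $\beta(t'(u))=\tfrac{\beta(u)}{4}+\psi(21)-\psi(A_n)$. This is cleaner in that it avoids the case split on the sign of $\beta(u)$, yields the sharper constant $3/4$, and makes the link with the contraction factor $(-1/2)^2$ from Proposition~\ref{lemma:meta2} explicit. The paper's route, on the other hand, requires no search for a potential and relies only on the visible symmetry $A\cdot ab=A\cdot ba=A$; it is arguably easier to discover by hand. Both arguments rest on the same preliminary step of routing $v=abu$ through $00\to10\to21$ into the recurrent part, which you handle correctly.
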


\begin{proof}
This is obvious if $v \in \{\epsilon, a\}$. Thus, stating $v = abu$, it remains to prove that, for any finite word $u \in \mathcal A^*$, we have: $h(t'(u)) \leq \frac{h(u)}{4}+1$. 
 
  Let us first consider the case when $|u|_a - |u|_b \geq 0$. Assume that  we  remove a  pattern  of the form $ab$ or $ba$ from $u$. This does not change the value of $h(u)$. Moreover, for each recurrent state $A$, $t_A(ab)$ and  $t_A(ba)$ both are elements of $\{ab, ba\}$ and $A ab = A ba = A$. This  guarantees that  pattern suppression does not change the value of $h(t'(u))$.  
  
  Iterating this argument until there is no more pattern as above leads to the following fact:    if we state $u'=a^{h(u)}$, then we have  $h(t'(u'))=h(t'(u))$.
  
   The integer $h(u)$ can be written as $h(u)=4i+r$, with $r \in \llbracket 0;3 \rrbracket$. We have: $t'(aaaa) = aba$,  and $21 aaaa = 21$. Thus $t'(u') = (aba)^i \,t'(a^r)$, which gives   $h(t'(u')) \leq h((aba)^i)+h(t(r)) \leq i+1$.  Thus    $h(t'(u)) \leq \frac{h(u)}{4}+1$.
 
The other case, when $|u|_a - |u|_b \leq 0$,  is similar. By simplifications of factors $ba$ and $ab$,  we obtain that $h(t'(u'))=h(t'(u))$, for $u'=b^{h(u)}$. The value $h(u)$ can be written as $h(u)=4j+s$, with $s \in \llbracket 0;3 \rrbracket$. We have:  $t'(bbbb) = abbab$ 
and $21 bbbb = 21$. Thus   
   $t'(u') = (abbab)^j \,t'(b^s)$, which gives   $h(t'(u')) \leq h((abbab)^j)+h(t(s)) = j$. Thus  $h(t'(u)) \leq \frac{h(u)}{4}+1$.\qed
\end{proof}

\setcounter{tmp}{\value{corollary}}
\setcounter{corollary}{\value{corollary:decrease}}
\begin{corollary}
  Given a word $u \in \mathcal A^*$ of length $l$, there exists an $n(l)$ in $O(\log{l})$ such that $t^{n(l)}(u)$ is a prefix of $(ab)^\omega$.
\end{corollary}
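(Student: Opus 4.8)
The plan is to treat the height lemma as the engine and reduce everything to a geometric contraction. First I would use Lemma \ref{lemma:abu}: since $t^2(u)\in\mathcal L$ and $\mathcal L$ is stable under $t$, every iterate $t^n(u)$ with $n\geq 2$ lies in $\mathcal L$, so the height lemma applies to all of them. Next I would bound the starting height linearly in $l$: each edge of \textswab T outputs a word of length at most $2$, hence $|t(w)|\leq 2|w|$ and $|t^2(u)|\leq 4l$, giving $h(t^2(u))\leq |t^2(u)|\leq 4l$. Then, writing $h_n=h(t^n(u))$, the height lemma yields $h_{n+1}\leq h_n/4+1$ for $n\geq 2$, so since $h_{n+1}-\tfrac43\leq (h_n-\tfrac43)/4$ one gets by induction
$$h_n \leq \tfrac{4}{3} + \frac{4l}{4^{\,n-2}}\qquad (n\geq 2).$$
As soon as $4^{\,n-2}>6l$, i.e. for $n\geq 2+\log_4(6l)=O(\log l)$, the right-hand side is $<2$; since $h_n$ is a nonnegative integer this forces $h_n\leq 1$. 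This quantitative core is immediate from the two lemmas.

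The delicate point is that $h_n\leq 1$ is \emph{not} yet the conclusion. A word $w\in\mathcal L$ with $h(w)\leq 1$ need not be a prefix of $(ab)^\omega$: for instance $abba=t'(bba)$ has height $0$ but is not alternating. To see what is missing I would encode a word by the $\pm1$ walk defined by $S_0=0$, $S_{i+1}=S_i+1$ on an $a$ and $S_{i+1}=S_i-1$ on a $b$; since a word of $\mathcal L$ (of length $\geq2$) begins with $a$, it is a prefix of $(ab)^\omega$ \emph{exactly} when its walk stays in $\{0,1\}$, whereas $h(w)=|S_{|w|}|$ records only the final displacement. The remaining task is therefore to control the full excursion $\max_i S_i-\min_i S_i$ of the walk, not merely its endpoint, and to show that this excursion is also driven down to the band $\{0,1\}$ within $O(\log l)$ iterations; combined with the first part this gives the announced $n(l)=O(\log l)$.

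The hard part will be precisely this excursion contraction, and it cannot be obtained from the height lemma as stated. The proof of the height lemma simplifies a word by deleting factors $ab$ and $ba$ (legitimate because $t_A(ab),t_A(ba)\in\{ab,ba\}$ and $A\,ab=A\,ba=A$ for every recurrent state $A$); but such deletions change the shape of the walk and so cannot bound its excursion. I would instead carry out a block-level analysis: every image $t(w)$ is a concatenation of the blocks $ab,ba,aba$ (so no three equal letters occur after one step), and I would show that a single application of $t$ reduces a suitable excursion potential by a constant factor — intuitively because $t$ reorders the offending $ba$-type blocks into $ab$-type ones, as witnessed by $t'((ba)^m)=(ab)^m$. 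Establishing this monotone behaviour of the excursion under \textswab T uniformly over all recurrent states and block patterns is the genuine obstacle; once it is in place the two contractions run in parallel and the corollary follows with $n(l)=O(\log l)$.
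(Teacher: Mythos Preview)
You diagnose the gap correctly: controlling the endpoint height $h$ is not enough, and what must shrink is the excursion $g(v)=\max\{h(v'):v'\text{ a prefix of }v\}$. But the remedy is far simpler than the block-level analysis you sketch, and it uses only tools you already have in hand.

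The observation you are missing is that the height lemma can be applied \emph{to every prefix at once}. The language $\mathcal L$ is prefix-closed (any prefix of $abw$ is $\epsilon$, $a$, or some $abw'$), and the transducer respects prefixes: $t(v')$ is a prefix of $t(v)$ whenever $v'$ is a prefix of $v$, and conversely every prefix of $t(v)$ differs from some $t(v')$ by at most one letter, since each transition of \textswab T outputs at most two letters. Hence for any prefix $w$ of $t(v)$ there is a prefix $v'$ of $v$ with $h(w)\leq h(t(v'))+1\leq h(v')/4+2\leq g(v)/4+2$, i.e.\ $g(t(v))\leq g(v)/4+O(1)$. This is precisely the excursion contraction you were after, obtained for free from the statement of the height lemma; the paper proceeds exactly this way, working with $g$ rather than $h$ from the outset. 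Your worry that the $ab/ba$-deletion \emph{proof technique} does not control the excursion is beside the point: one uses the lemma as a black box on each prefix, not its proof.

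Your geometric iteration then runs unchanged on $g$: from $g(t^2(u))\leq|t^2(u)|\leq 4l$ one gets $g(t^n(u))\leq 1$ for some $n=O(\log l)$. The condition $g(w)\leq 1$ with $w\in\mathcal L$ forces the walk to stay in $\{-1,0,1\}$, so $w$ decomposes into blocks from $\{ab,ba\}$ with a possible one-letter tail. One more application of $t$ finishes: since $00\,ab=00\,ba=21$, $21\,ab=21\,ba=21$, $t(ab)=t(ba)=\epsilon$ and $t'(ab)=t'(ba)=ab$, the image $t(w)$ is a prefix of $(ab)^\omega$. The separate excursion analysis you propose is unnecessary.
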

\setcounter{corollary}{\value{tmp}}

\begin{proof}
We first prove it restricting ourselves on words of $\mathcal L$
  Given a finite word $v$ on $\mathcal L$, we define the maximal height $g(v)= \max \{ | h(v') | ~| v' \text{ prefix of }v\}$. 
  The previous lemma gives the result $g(t(v)) \leq 1 + \frac{g(v)}{4}$. We can now use a trick to get the expected result. We define $g'(v)=g(v)-\frac{4}{3}$, then: 
 $$g(t(v)) \leq 1+\frac{g(v)}{4} \iff g'(t(v)) \leq \frac{g'(v)}{4}$$
 From  lemma  \ref{lemma:abu}, $t(v)$ is element of $\mathcal L$. Thus we can iterate the inequality.  By this way, we obtain,  for each positive integer $n$: 
$$ g'(t^n(v)) \leq \frac{g'(v)}{4^n}$$
Thus, for $n>\log_4 (g'(v)) - \log_4 (\frac{2}{3})$, we have:  $g'(t^n(v)) < \frac{2}{3}$,  so $g(t^n(v)) < 2$ and, by integrity,  
$$g(t^n(v)) \leq 1$$
This last inequality enforces that  $u$ admits a decomposition $t^n(v) = w_1 w_2  ...  w_p$ such that, for $i \in \llbracket 1;p \rrbracket$,   
$w_i \in \{ab, ba\}$, and  $w_p \in  \{\epsilon, a, b\}$. 
Thus $t^{n+1}(u) = t(w_1) t'(w_2)\dots t'(w_p)$.  Thus, $t^{n+1}(u)$ is a prefix of the infinite word $(ab)^\omega$, since $t'(ab) = t'(ba) = ab$ and $t(ab) = t(ba) = \epsilon$. 

Now,  if we take a finite word $u$ on $\mathcal A^*$, we have, from  lemma  \ref{lemma:abu}, $t^2(u) \in \mathcal L$. On the other hand,   
 $\vert t^2(u)   \vert \leq 4 \vert u   \vert$ and $\vert t^2(u)   \vert +\frac{4}{3} \geq g'(t^2(u))$, which gives  $g'(t^2(u)) \leq 4 \vert u   \vert +\frac{4}{3}$. Therefore, 
for $n>\log_4 (4 \vert u   \vert +\frac{4}{3}) - \log_4 (\frac{2}{3})$,  we obtain that  $t^{n+1}(t^2(u))$ is a prefix of the infinite word $(ab)^\omega$. In other words,  for $m  >\log_4{(4\vert u   \vert+\frac{4}{3})}- \log_4 (\frac{2}{3}) +3$, $t^{m}(u)$ is a prefix of the infinite word $(ab)^\omega$.\qed
\end{proof}

For $D=3$, we can now prove the following result:

\setcounter{theorem:D3}{\value{theorem}}
\begin{theorem}\label{theorem:D3}
  For $D=3$ and all $N$, there exists a column $n$ in $O(\log N)$ such that
  $$\pi(N)_{[n,\infty[}=(2,1)^*[0](2,1)^*0^\omega$$
  where $^*$ is the Kleene closure and $[0]$ stands for at most one zero.
\end{theorem}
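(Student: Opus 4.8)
The plan is to assemble the three results already established for $D=3$. First I would use Proposition~\ref{lemma:meta2} to fix the smallest index $i_0$ with $i_0(D-1)>\mathcal L(3,N)+3(D-1)$; since $D=3$ is constant this $i_0$ is in $O(\log N)$, and the trace up to $N$ is well defined on $I_{i_0}$ and on every interval to its right. Recall that iterating the transducer computes traces interval by interval: the trace up to $N$ on $I_{i_0+m}$ is exactly $t^m(u)$, where $u$ is the trace up to $N$ on $I_{i_0}$. Because the trace is obtained from the length-$N$ sequence of types $(\alpha(i_0,k))_{k=1}^{N}$ by keeping one letter per maximal run of equal values, we have $|u|\leq N$.

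Next I would apply Corollary~\ref{corollary:decrease} to $u$ (of length at most $N$): there is an $n$ in $O(\log N)$ such that $t^n(u)$ is a prefix of $(ab)^\omega$. Setting $j=i_0+n$, still in $O(\log N)$, the trace up to $N$ on $I_j$ is a prefix of $(ab)^\omega=(0,1)^\omega$, hence of the form $(0,\dots,D-2)^x(0,\dots,p)$ with $p\leq D-2=1$ required by Proposition~\ref{proposition:wave} (the empty trace being the trivial case yielding $0^\omega$, which matches the target pattern with all Kleene parts empty). Applying Proposition~\ref{proposition:wave} on $I_j$ then gives a closed expression for $\pi(N)_{[(j+1)(D-1),\infty[}$, which I specialize to $D=3$ using $(D-1,\dots,1)=(2,1)$ and $(0,\dots,D-2)=(0,1)$.

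Finally I would carry out the short case analysis converting the two output forms of Proposition~\ref{proposition:wave} into the target expression. Writing $n_0=(j+1)(D-1)=2(j+1)$, which is in $O(\log N)$, the combinations of $p\in\{0,1\}$ with $y<x+1$ or $y=x+1$ produce, for $D=3$, either $(2,1)^{x-y}0(2,1)^y0^\omega$ and $(2,1)^{x+1}0^\omega$ (no leading letter), or $1(2,1)^{x-y}0(2,1)^y0^\omega$ and $1(2,1)^x0^\omega$ (a leading $1$). The first two already equal $(2,1)^*[0](2,1)^*0^\omega$ read from column $n_0$ --- the optional single zero $[0]$ being present exactly when $y<x+1$ --- so there I take $n=n_0$. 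The last two equal the same expression once the leading value $\pi(N)_{n_0}=1$ is discarded, so there I take $n=n_0+1$; in either case $n$ is in $O(\log N)$, which is the claim. The work here is bookkeeping rather than conceptual: one must keep the three logarithmic estimates ($\mathcal L(3,N)$, the transducer convergence $n$, and the passage from intervals to columns) consistent, confirm that prefixes of $(ab)^\omega$ are precisely the traces covered by Proposition~\ref{proposition:wave}, and notice that the unavoidable leading $1$ forces the one-column shift of $n$.
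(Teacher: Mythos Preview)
Your proof is correct and follows the same three-step assembly as the paper's own proof (Proposition~\ref{lemma:meta2}, then Corollary~\ref{corollary:decrease}, then Proposition~\ref{proposition:wave}), only you spell out the final case analysis and the one-column shift for the leading $1$, which the paper leaves implicit. Your grouping of the four cases by ``no leading letter'' versus ``leading $1$'' (rather than by the value of $p$) is a clean way to see that every output of Proposition~\ref{proposition:wave} lands in $(2,1)^*[0](2,1)^*0^\omega$ after at most a single column shift.
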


\begin{proof}
  Let $N$ be given. From proposition \ref{lemma:meta2}, $L(3,N)$ is in $O(\log N)$, therefore there exists an index $m$ in $O(\log N)$ such that traces up to $N$ can be defined on $I_m$, and the transducer for $D=3$ derived from Lemma \ref{lemma:transducer} be applied. Corollary \ref{corollary:decrease} tells that there exists an index $l$ in $O(\log N)$ such that the trace up to $N$ on $I_{m+l}$ is a prefix of $(ab)^\omega$. Finally, Proposition \ref{proposition:wave} gives the result with $n=(m+l+1)(D-1)$. \qed
\end{proof}


\section{Conclusion}

This paper explored emergent regularities in the Kadanoff Sand Pile Model with parameter $D$ in spite of its overall complex behavior. We first studied the sequence of avalanches triggered by the repeated addition of one grain on a stable configuration. Regularities in avalanches allow a precise description of their process by the mean of a distinguished set of columns, peaks. We then concentrate on a particular interval of columns of constant size, $I_i$, and keep track of any firing in this interval, while grain additions and avalanches are repeated. The obtained information, constituted by peaks indices is named a trace. The next step is the construction of a finite state word transducer which computes the trace on $I_{i+1}$ from the trace on $I_i$. Applying again the transducer outputs the trace on $I_{i+2}$, \dots The transducer is thus a finite tool which we can study in order to predict the behavior of avalanches, and hence the shape of fixed points. 

The main result of this paper is:
\setcounter{tmp}{\value{theorem}}
\setcounter{theorem}{\value{theorem:D3}}
\begin{theorem}
  For $D=3$ and all $N$, there exists a column $n$ in $O(\log N)$ such that
  $$\pi(N)_{[n,\infty[}=(2,1)^*[0](2,1)^*0^\omega$$
  where $^*$ is the Kleene closure and $[0]$ stands for at most one zero.
\end{theorem}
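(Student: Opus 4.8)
The plan is to assemble the three structural results already established---the logarithmic bound on the density column (Proposition \ref{lemma:meta2}), the rapid emergence of periodic traces under the transducer (Corollary \ref{corollary:decrease}), and the translation of a periodic trace into a wavy fixed point (Proposition \ref{proposition:wave})---into one chain of implications, taking care that every index stays in $O(\log N)$. First I would fix $N$ and locate the leftmost interval on which the transducer machinery is legitimate. By Lemma \ref{lemma:transducer} the trace up to $N$ is defined and the transduction $t$ applies on any interval $I_i$ whose columns lie beyond $\mathcal L(3,N)+3(D-1)$. Since Proposition \ref{lemma:meta2} gives $\mathcal L(3,N)=O(\log N)$ and $D=3$ is constant, the smallest index $m$ with $m(D-1)>\mathcal L(3,N)+3(D-1)$ satisfies $m=O(\log N)$. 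Let $u$ be the trace up to $N$ on $I_m$; because each of the at most $N$ long avalanches contributes at most one letter to a trace, we have $|u|\leq N$.

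Next I would iterate the transducer. Since $t^j(u)$ is exactly the trace up to $N$ on $I_{m+j}$, Corollary \ref{corollary:decrease} applied to $u$ (of length $\leq N$) yields an index $l=O(\log|u|)=O(\log N)$ such that $t^l(u)$, i.e. the trace up to $N$ on $I_{m+l}$, is a prefix of $(ab)^\omega$. Hence $I_{m+l}$ carries a periodic trace, with $m+l=O(\log N)$.

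Finally I would cash in the periodic trace via Proposition \ref{proposition:wave}. For $D=3$ one has $(0,\dots,D-2)=(0,1)=ab$ and $(D-1,\dots,1)=(2,1)$, so a prefix of $(ab)^\omega$ is precisely a trace of the form $(0,\dots,D-2)^x(0,\dots,p)$ demanded by the proposition, whose conclusion describes $\pi(N)_{[(m+l+1)(D-1),\infty[}$ as a run of $(2,1)$ blocks containing at most one interior $0$ (the two cases $y<x+1$ and $y=x+1$ account for the presence or absence of that single zero). Reading the two displayed forms of Proposition \ref{proposition:wave} with $D=3$ then gives exactly $(2,1)^*[0](2,1)^*0^\omega$, up to a bounded shift of the base column used to align the wave crests. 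Setting $n=(m+l+1)(D-1)$, which is $O(\log N)$ since $D$ is constant and $m+l=O(\log N)$, completes the argument.

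The argument is essentially a bookkeeping assembly, so the genuine difficulty lies upstream in the two results I am allowed to invoke: the linear-recurrence/eigenvalue analysis behind Proposition \ref{lemma:meta2} forcing $N$ to grow exponentially in $\mathcal L(3,N)$, and the height-contraction study of the transducer behind Corollary \ref{corollary:decrease}. Within the assembly the only points requiring care are the bound $|u|\leq N$ (so that $O(\log|u|)$ is truly $O(\log N)$), the verification that all definedness hypotheses of Lemma \ref{lemma:transducer} already hold at the logarithmic index $m$, and the elementary translation from the $\{a,b\}$ trace language to the $(2,1)$ wave shape, including the handling of the possible leading partial block and the ``at most one zero'' clause.
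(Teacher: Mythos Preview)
Your proposal is correct and follows essentially the same route as the paper's own proof: invoke Proposition~\ref{lemma:meta2} to place the starting interval $I_m$ at $O(\log N)$, apply Corollary~\ref{corollary:decrease} to the trace $u$ on $I_m$ (using $|u|\leq N$) to reach a periodic trace on $I_{m+l}$ with $l=O(\log N)$, and then read off the wave shape via Proposition~\ref{proposition:wave}, setting $n=(m+l+1)(D-1)$. You have in fact supplied more bookkeeping detail than the paper does (the length bound on $u$, the definedness check for Lemma~\ref{lemma:transducer}, and the explicit specialization of Proposition~\ref{proposition:wave} to $D=3$), but the skeleton and even the notation $m,l$ and the final formula for $n$ coincide.
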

\setcounter{theorem}{\value{tmp}}

Computer aided simulations intimate that every result of our study generalizes to any parameter $D$. Nevertheless, some holes remain is this puzzle, summarized on figure \ref{fig:graph}. We therefore conjecture:

\begin{conjecture}\label{conjecture}
  For a general parameter $D$ and all $N$, there exists a column $n$ in $O(\log N)$ such that
  $$\pi(N)_{[n,\infty[}=(D-1,D-2,\dots,2,1)^*[0](D-1,D-2,\dots,2,1)^*0^\omega.$$
\end{conjecture}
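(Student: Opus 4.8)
The plan is to follow the same three-stage architecture that yields the result for $D=3$, replacing each $D=3$-specific ingredient by a statement valid for arbitrary $D$. Stage (i) is to prove that the global density column $\mathcal L(D,N)$ is in $O(\log N)$, generalizing Proposition \ref{lemma:meta2}; stage (ii) is to prove that the transducer \textswab{T} drives every trace to a prefix of $(0,1,\dots,D\!-\!2)^\omega$ within $O(\log N)$ iterations, generalizing Corollary \ref{corollary:decrease}; and stage (iii) is to invoke Proposition \ref{proposition:wave}, which is already stated for general $D$. Stage (iii) is immediate: feeding a prefix of $(0,\dots,D\!-\!2)^\omega$ into Proposition \ref{proposition:wave} outputs exactly $(D\!-\!1,\dots,1)^*[0](D\!-\!1,\dots,1)^*0^\omega$, the two cases $y<x+1$ and $y=x+1$ accounting for the optional single $0$, and the leading partial wave is absorbed by choosing the starting column $n$ at a clean wave boundary. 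Thus all the difficulty lies in stages (i) and (ii).

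For the linear machinery of stage (i) I would reuse the skeleton of the $D=3$ proof verbatim. Writing $\sigma=\pi(N-1)$ with shot vector $a$, the transition rule gives $\sigma_i = a_{i-D+1} - D\,a_i + (D-1)\,a_{i+1}$, so the backward recurrence has characteristic polynomial $\chi(x) = (D-1)x^{D} - D x^{D-1} + 1$. The structural miracle that makes the argument work persists for every $D$: one checks $\chi(1)=0$ and $\chi'(x)=D(D-1)x^{D-2}(x-1)$, so $1$ is a double root and a double synthetic division yields $\chi(x)=(x-1)^2 q(x)$ with $q(x)=\sum_{k=0}^{D-2}(k+1)x^k$. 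Crucially, $q$ has strictly increasing positive coefficients $1,2,\dots,D-1$, so by the strict form of the Eneström--Kakeya theorem all its roots lie in the \emph{open} unit disk (for $D=3$ this root is exactly the $-\tfrac12$ of the paper). Projecting out the generalized $1$-eigenspace, as the matrix $P$ does for $D=3$, collapses the dynamics onto an order-$(D-2)$ recurrence governed by $q$; since every root of $q$ lies strictly inside the unit disk, the reciprocal eigenvalues that govern the \emph{backward} iteration (toward column $0$) are uniformly expanding. Reading the analogue of formula~(\ref{eq:xi1k}) backwards therefore tends to produce a quantity exponentially large in $\mathcal L(D,N)$, forcing $N$ to be exponential in $\mathcal L(D,N)$.

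The genuine obstacle in stage (i) — and the reason the paper proves only $D=3$ — is the boundary input to this expanding recurrence: one must characterize the prefix of $\pi(N-1)$ forced when the $N$th avalanche is dense from $\mathcal L(D,N)$ but not from $\mathcal L(D,N)-1$. For $D=3$ this prefix is exactly the periodic word $2(02)^j$, which plugs a clean alternating right-hand side $\sigma_{2j-r}\in\{0,2\}$ into the geometric sum and lets one bound $x_0$ explicitly across the three cases $|x_{2j+1}|\in\{0,1,\ge 2\}$. For $D\ge4$ the constraints ``the avalanche propagates rightward to $\mathcal L(D,N)+D-1$'' and ``no intermediate firing makes it dense earlier'' still cut out, via Theorem \ref{theorem:peak}, a \emph{finite} set of admissible local patterns, but this set is no longer a single periodic word. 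I would attempt to show that the admissible $\sigma$-prefixes form a regular language recognized by a finite automaton of bounded ``defect'', and then prove that every admissible prefix, summed against the expanding backward kernel, still yields exponential growth — the delicate point being to rule out destructive cancellation between the boundary term and the inhomogeneous sum for the non-periodic patterns, using the sign structure inherited from the roots of $q$ and the positivity of shot vectors. This automaton-plus-expansion estimate is where I expect the main work to concentrate.

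Stage (ii) would generalize the height/contraction analysis of subsection \ref{ss:D=3}. The height $h(u)=\bigl||u|_a-|u|_b\bigr|$ must be replaced by a potential over $\{0,\dots,D\!-\!2\}$ measuring the maximal prefix-imbalance of letter multiplicities relative to their equal proportions in the period $0\,1\cdots(D\!-\!2)$. The program is: isolate the recurrent core of \textswab{T} (generalizing the four-cycle $11,12,21,22$); prove an analogue of Lemma \ref{lemma:abu} confining $t^{c}(u)$ to a $t$-stable language $\mathcal L_D$ of words beginning with the period's head; and establish a contraction $h(t(v))\le \tfrac{h(v)}{c}+C$ with $c>1$ on $\mathcal L_D$ by the same pattern-deletion device (excising a full period leaves both $h$ and the recurrent state unchanged, so only $O(h)$ excess letters survive and each contracts). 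Iterating gives bounded height after $O(\log N)$ applications, and a bounded-height word of $\mathcal L_D$ is necessarily a prefix of $(0,\dots,D\!-\!2)^\omega$. The risk here is combinatorial rather than conceptual: the transducer now has $D^{D-1}$ states, its basic words are far less transparent to enumerate, and one must verify that the recurrent core acts as cleanly as the four-cycle does for $D=3$ and that the contraction is genuine for every $D$. Combining the two stages with Proposition \ref{proposition:wave} then yields the conjecture.
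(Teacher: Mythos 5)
This statement is a \emph{conjecture} in the paper: there is no proof of it to compare yours against, and the paper's conclusion explicitly proposes the very two-stage program you outline (``First, prove that avalanche density is reached in $O(\log N)$ columns. Second, understand the structure of the introduced transducer\dots for a general value $D$''). So your proposal reproduces the authors' own road map rather than closing it. To give credit where due: your stage (iii) is genuinely complete, since Proposition \ref{proposition:wave} is stated for general $D$ and converts a prefix of $(0,\dots,D\!-\!2)^\omega$ into the claimed wave shape; and your algebraic observations in stage (i) are correct --- the relation $\sigma_i=a_{i-D+1}-D\,a_i+(D-1)\,a_{i+1}$ does give the characteristic polynomial $\chi(x)=(D-1)x^{D}-Dx^{D-1}+1=(x-1)^2q(x)$ with $q(x)=\sum_{k=0}^{D-2}(k+1)x^k$, and Enestr\"om--Kakeya does place all roots of $q$ in $\vert z\vert\le\frac{D-2}{D-1}<1$, so the backward dynamics is uniformly expanding. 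But none of these facts was the obstacle.

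The two places where your text switches to ``I would attempt'' are exactly the open problems, and they remain open in your proposal. For stage (i), the $D=3$ proof hinges on the fact that a long-but-not-denser avalanche forces the \emph{exact} periodic prefix $2(02)^j$ on $\pi(N-1)$; this is what makes the inhomogeneous sum in (\ref{eq:xi1k}) explicitly computable and permits the three-case sign analysis of $x_{2j+1}$. The paper states flatly that from $D=4$ on, no tractable characterization of the admissible prefixes is known; moreover the quotient recurrence is then $(D-2)$-dimensional with complex expanding directions (for $D=4$ the roots of $q$ are $(-1\pm i\sqrt2)/3$), so the scalar sign-alternation trick has no analogue, and ``ruling out destructive cancellation'' between the boundary term and the sum is the whole theorem, not a verification step --- your regular-language-plus-positivity scheme is plausible but nothing in it is proved. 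For stage (ii), the contraction $h(t(v))\le h(v)/4+1$ for $D=3$ was obtained by exhaustively enumerating the basic words of a seven-state machine; you posit an unspecified potential, a $t$-stable language $\mathcal L_D$, and a contraction constant for a transducer on $\mathcal S=\{0,\dots,D-1\}^{D-1}$ without defining the potential or establishing a single estimate, and it is not even verified that the recurrent part of the general transducer has the clean cyclic structure the $D=3$ argument exploits. As it stands this is a research program coinciding with the paper's own sketch (their figure \ref{fig:graph} marks precisely your stages (i) and (iii) as the missing boxes); it is worth noting that the eventual proof of Conjecture \ref{conjecture} in \cite{perrot12} proceeds by a non-trivial change of basis in the linear system rather than through the transducer contraction, which suggests the stage-(ii) route is harder than your sketch anticipates.
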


Our approach gives a possible way for proving this conjecture. First, prove that avalanche density is 
reached in  $O(\log N)$ columns. Second, understand the structure of the introduced transducer (and its iteration) for a general value $D$.

\begin{figure}[!h]
  \begin{center}
  \begin{tikzpicture}[scale=1]
    \draw[->] (0,0) -- ++ (0,4) node[above] {$N$};
    \draw[->] (0,0) -- ++ (8.5,0) node[right] {columns};
    \node at (1.4,1.7) {\textcircled{1}};
    \node at (1.9,1.7) {\textcircled{2}};
    \node at (2.5,1.7) {\textcircled{3}};
    \node at (4,1.7) {\textcircled{4}};
    \filldraw[fill=black!10] (0,1.5) node [left] {\scriptsize $\pi(N)$} rectangle ++ (5.1,-.1);
    \draw[domain=0:8,line width=1pt] plot[id=allure1] function {2**(x/4)-1} node [above] {\scriptsize $\Theta(\sqrt{N})$};
    \draw[domain=0:3.5,line width=1pt] plot[id=allure2] function {2**(x/(3.5/2))-1} node [above] {\scriptsize $\mathcal O(\log N)$};
    \draw[domain=0:2,line width=1pt] plot[id=allure3] function {2**x-1} node [above] {\scriptsize $\mathcal O(\log N)$};
  \end{tikzpicture}
    \begin{tabular}{|l|c|c|}
      \hline
      Results & $D=3$ & For all $D$\\
      \hline
      \textcircled{1}: $\mathcal L(D,N)$ in $O(\log N)$ & Proposition \ref{lemma:meta2} & $\times$\\
      \hline
      \textcircled{2}: Build a transducer & Theorem \ref{theorem:peak} & Theorem \ref{theorem:peak}\\
      & Lemma \ref{lemma:transducer} & Lemma \ref{lemma:transducer}\\
      \hline
      \textcircled{3}: Regular traces & Corollary \ref{corollary:decrease} & $\times$\\
      \hline
      \textcircled{4}: Wave pattern & Proposition \ref{proposition:wave} & Proposition \ref{proposition:wave}\\
      \hline
    \end{tabular}
  \end{center}
  \caption{The precise study of the case $D=3$ gives an asymptotically complete characterization of its fixed points according to the number of grains. Though experimentally confirmed, some parts of this study do not easily generalize to any parameter $D$ of the model. We summarize in this picture our contribution, its results and conjectures.\\
   The left curve labelled \textcircled{1} denotes the global density column $\mathcal L(D,N)$ which has been proved in Proposition \ref{lemma:meta2} to be in $O(\log N)$ for the case $D=3$. From $\mathcal L(D,N)$, we built a finite state word transducer to compute traces from interval to interval in Lemma \ref{lemma:transducer}, using the avalanche process description of Theorem \ref{theorem:peak}. This transducer applies in the part \textcircled{2}, and Corollary \ref{corollary:decrease} states that for $D=3$ the trace up to $N$ is periodic from a column in $O(\log N)$ (label \textcircled{3}). Finally, Proposition \ref{proposition:wave} establishes the relation between regular traces and wave patterns, on part \textcircled{4}.}
  \label{fig:graph}
\end{figure}
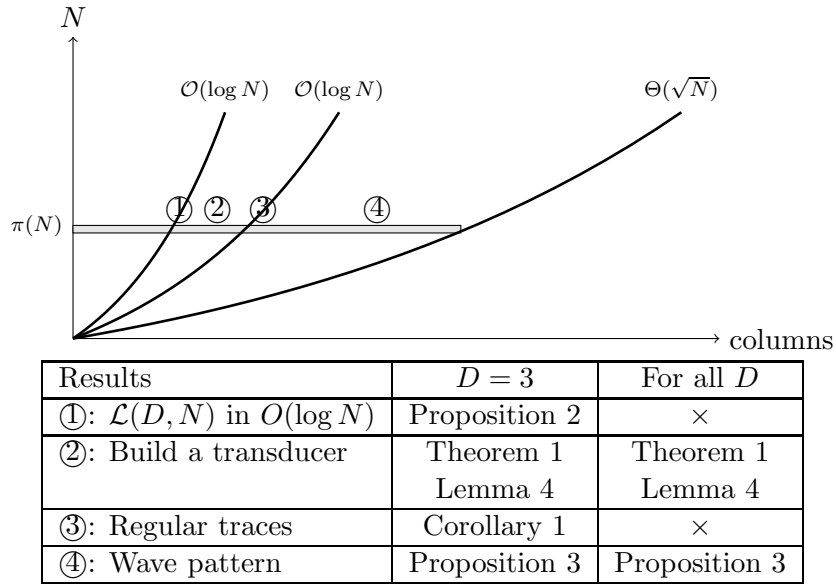


\section{Further work}

The work presented in this paper has been continued and improved. A different approach, extending the linear algebra analysis introduced in subsection \ref{ss:avD=3} for $D=3$, is presented in \cite{perrot12}. It uses a non-trivial change of basis of the linear system so that its behavior becomes understandable in simple terms, eventually leading to a proof of Conjecture \ref{conjecture}.

\bibliographystyle{elsarticle-harv}
\bibliography{biblio}

\end{document}